\newcommand{\p}{\partial}
\newcommand{\spt}{\mathop{\rm spt}}
\newtheorem{theorem}{Theorem}
\newtheorem{corollary}[theorem]{Corollary}
\newtheorem{definition}[theorem]{Definition}
\newtheorem{example}[theorem]{Example}
\newtheorem{lemma}[theorem]{Lemma}
\newtheorem{proposition}[theorem]{Proposition}
\newtheorem{remark}[theorem]{Remark}
\newenvironment{proof}[1][Proof]{\noindent\textbf{#1.} }{\ \rule{0.5em}{0.5em}}
\newdimen\dummy
\begin{document}

\title{Multidimensional matching\thanks{%
The authors are grateful to Toronto's Fields' Institute for the Mathematical
Sciences for its kind hospitality during part of this work. They acknowledge
partial support of RJM's research by Natural Sciences and Engineering
Research Council of Canada Grant 217006-08 and -15. Chiappori gratefully
acknowledges financial support from the NSF (Award 1124277). Pass is pleased
to acknowledge support from Natural Sciences and Engineering Research
Council of Canada Grant 412779-2012 and a University of Alberta start-up
grant. \hskip0.75in \copyright \today}}
\author{Pierre-Andr\'{e} Chiappori\thanks{%
Department of Economics, Columbia University, New York, USA
pc2167@columbia.edu}, Robert McCann\thanks{%
Department of Mathematics, University of Toronto, Toronto, Ontario, Canada
mccann@math.toronto.edu} and Brendan Pass\thanks{%
Department of Mathematical and Statistical Sciences, University of Alberta,
Edmonton, Alberta, Canada pass@ualberta.ca.}}
\maketitle

\begin{abstract}
We present a general analysis of multidimensional matching problems with
transferable utility, paying particular attention to the case in which the
dimensions of heterogeneity on the two sides of the market are unequal. A
particular emphasis is put on problems where agents on one side of the
market are multidimensional and agents on the other side are
uni-dimensional; we describe a general approach to solve such problems.
Lastly, we analyze several examples, including an hedonic model with
differentiated products, a marriage market model where wives are
differentiated in income and fertility, and a competitive variation of the
Rochet-Chon\'{e} problem. In the latter example, we show that the bunching
phenomena, observed by Rochet and Chon\'{e} in the monopoly context, do not
occur in the competitive context.
\end{abstract}

\section{Introduction}

Matching problems under transferable utility have attracted considerable
attention in recent years within economic theory. The general goal is to
understand the stable equilibria of matches between distributions of agents
on two sides of a `market' (for example, husbands and wives in the marriage
market, CEOs and firms in the labor market, producers and consumers in a
market for differentiated commodities, etc.), as well as the resulting
division of surplus between partners. Until recently, most work has focused
on the setting in which a single characteristic is used to distinguish
between the agents on each side; for example, in the marriage market,
several models assume that individuals differ only by their income (or their
human capital). These models have the advantage of being analytically
tractable, and often allow explicit closed form solutions. Under the
classical Spence-Mirrlees condition, the only stable matching is the
positive assortative one, the nature of which (`who marries whom') is
directly determined by the underlying distributions of the male and female
characteristic. However, these one dimensional models are unsatisfactory in
many situations, as both casual empiricism and factual evidence indicate
that agents often match on several traits. In the marriage market, for
instance, the suitability of a potential marriage between a woman and man
typically depends on several characteristics of both, including income and
education, but also age, tastes, ethnic background, physical attractiveness,
etc.

It is therefore important to study and understand \emph{multidimensional}
matching problems, in which agents on both side of the market are
differentiated using several characteristics. These models have garnered
increasing visibility in recent years, due to their wider applicability and
flexibility, but their introduction brings forth serious theoretical
challenges. The nature of the equilibrium matching is more interesting but
also more complex; in contrast to the one dimensional case, it is no longer
determined by the sole knowledge of the distributions of individual
characteristics, even under (a generalization of) the Spence-Mirrlees
condition. From a more technical perspective, it is generally not possible
to derive closed form solutions; and discretising matching problems leads to
a linear program, which often become numerically unwieldy when type spaces
are multidimensional.

The purpose of this paper is to provide a general characterization of
multidimensional matching models, in terms of existence, uniqueness and
qualitative properties of stable matches. Since the work of \cite{SS} in the
discrete setting, and \cite{GOZ} in the continuum, it has been understood
that transferable utility matching is equivalent to a variational problem;
this problem is known in the mathematics literature as the \textit{%
Monge-Kantorovich optimal transport} problem. Our first goal is to show that
the considerable existing literature on multidimensional optimal transport%
\footnote{%
For a general survey, see for instance \cite{V2} and \cite{Santambrogio15p}.}
can be exploited to derive conditions under which stable matches are unique
and pure, and to understand their properties.

We put a particular emphasis on the case in which the dimensions of
heterogeneity on the two sides of the market are unequal (say, $m>n$). These
sorts of problems have received relatively little attention from the
mathematics community, but are quite natural economically; the dimension
essentially reflects the number of attributes used to distinguish between
agents and there is no compelling reason in general to expect this number to
coincide for agents on the two different sides of the market (say, for
consumers and producers). A typical pattern emerges in these situations,
since for one side of the market (the one with a lower dimension), identical
agents are typically matched with a continuum of different partners. We
explore the properties of the `indifference sets' thus defined, and argue
that since such indifference sets can often be empirically recovered, these
properties can provide testable consequences of multidimensional matching
theory. 

Of specific interest are the so-called `multi-to-one dimensional matching
problems', in which agents on one side of the market are assumed to be
multidimensional, while those on the other side are unidimensional. They
include an economically important class of examples (see, for instance, \cite%
{COQ} and \cite{L}), for which one can obtain explicitly the stable
matchings. In this context, we describe a general approach aimed at
characterizing the equilibrium matching. We provide a robust methodology
that allows, under suitable conditions, to explicitly characterize its
solutions. We discuss some interesting features that the indifference sets
exhibit, and which are typically absent in purely unidimensional problems.
For instance, the optimal mapping may be discontinuous, and so women of
similar types may marry men of very different types.

Lastly, we consider three specific examples that illustrate different
potential applications of matching models. One is a standard, hedonic model
of the type used, in particular, in the empirical IO literature. While these
models typically assume imperfect competition, we show how a competitive
version can be analyzed. Under standard conditions, existence, but also
uniqueness and purity, obtain naturally. In particular, we show that the
mathematical notion of purity has in this context a natural interpretation
in terms of `bunching'. In addition, it is in general possible, using
matching or optimal transportations techniques, to derive closed form
solutions for the pricing schedule; we illustrate how this can be done using
specific distributions.

Our second example considers a context where women are characterized by two
traits (socio-economic status, from now on SES, and fertility) whereas men
only differ by their SES - a case recently studied by \cite{L}. Then it may
be the case that small changes in a middle class woman's SES (while her
fertility remains constant) result in a large change in her husband's
income. Again, we provide a general characterization of the solution and
describe some of its qualitative features.

Lastly, the multi-to-one dimensional framework naturally leads to
investigating the relationship between matching models and principal-agents
problems under multidimensional asymmetric information. While this general
question remains widely open (and quite challenging), we illustrate our
contribution by considering a competitive variant of the seminal model of 
\cite{rc}, in which goods can be produced by a competitive and $1$%
-dimensionally heterogeneous set of producers, rather than a single
monopolist. Unlike the original Rochet-Chon\'{e} framework, this model is
equivalent to a matching problem of our form, and can be solved explicitly
by the methods described above. We provide a full characterization of the
resulting equilibrium price schedule. In particular, we show that in our
competitive framework, and in contrast to the original monopolist setting,
there is never bunching: consumers of different types \textit{always} buy
goods with different characteristics. In other words, the strange bunching
patterns emphasized by Rochet and Chon\'{e} do not appear to be
intrinsically linked to the multidimensional nature of the adverse selection
problem; rather, they are due to the distorsions created by the presence of
a monopolist producer.

\section{Multidimensional matching under transferable utility: basic
properties}

\subsection{General framework and basic results}

\subsubsection{The model}

We consider sets $X\subseteq {\mathbf{R}}^{m}$ and $Y\subseteq {\mathbf{R}}%
^{n}$, parametrizing populations of agents on two sides of a market. In what
follows, we shall stick to the marriage market interpretation (so that $X$
and $Y$ will denote the set of potential wives and husbands respectively),
although alternative interpretations are obviously possible. They are
distributed according to probability measures $\mu $ on $X$ and $\nu $ on $Y$%
, respectively. In the transferable utility framework, a potential matching
of agents $x\in X$ and $y\in Y$ generates a combined surplus $s(x,y)$, where 
$s:X\times Y\rightarrow {\mathbf{R}}$. This surplus can be divided in any
way between the agents $x$ and $y$. For simplicity, we assume that $s$ and
its derivatives are smooth and bounded unless otherwise remarked; many of
the results we describe can also be extended to surpluses with less
smoothness, as in \cite{cmn} and \cite{NoldekeSamuelson15p} for example.

A \textit{matching} is characterized by a probability measure $\gamma $ on
the product $X\times Y$, whose marginals are $\mu $ and $\nu $, that is 
\begin{equation}
\gamma (A\times Y)=\mu (A)\quad \mathrm{and}\quad \gamma (X\times B)=\nu (B)
\label{marginals}
\end{equation}%
for all Borel $A\subset X,B\subset Y$. Intuitively, a matching is an
assignment of the agents in the sets $X$ and $Y$ into pairs, and $\gamma
\left( x,y\right) $ is related to the probability that $x$ will be matched
to $y$; in particular, $\left( x,y\right) \notin \spt \gamma$\footnote{%
Here $\mathop{\rm spt}\gamma $ refers to the support of $\gamma $, i.e. the
smallest closed set containing the full mass of $\gamma $.} implies that agents $x$
and $y$ are not matched together. The marginal condition is often called the 
\textit{market clearing} criterion. We denote the set of all matchings by $%
\Gamma (\mu ,\nu )$.

Integrable functions $u:X \rightarrow {\mathbf{R}}$ and $v:Y\rightarrow {%
\mathbf{R}}$ are called payoff functions corresponding to $\gamma$ if they
satisfy the \textit{budget constraint}:

\begin{equation}
u(x)+v(y)\leq s(x,y)  \label{budgetconstraint}
\end{equation}%
$\gamma $ almost everywhere --- i.e., for any pair of agents who match with
positive probability. For such a pair $(x,y)\in \mathop{\rm spt}\gamma $,%
 the functions $%
u(x)$ and $v(y)$ are interpreted respectively as the indirect utilities
derived from the match by agents $x$ and $y$; the constraint %
\eqref{budgetconstraint} ensures that the total indirect utility $u(x)+v(y)$
collected by the two agents does not exceed the total surplus $s(x,y)$
available to them.

A matching $\gamma $ is called \textit{stable} if there exist payoff
functions $u(x)$ and $v(y)$ satisfying both (\ref{budgetconstraint}) and the
reverse inequality 
\begin{equation}
u(x)+v(y)-s(x,y)\geq 0  \label{utilities}
\end{equation}%
%
%
%
%
%
%
%
%
%
%
%
%
%
%
%
for all $(x,y)\in X\times Y$. Condition \eqref{utilities} expresses the
stability of the matching in the following sense; if we had $%
u(x)+v(y)<s(x,y) $ for any (currently unmatched) pair of agents, it would be
desirable for each of them to leave their current partners and match
together, dividing the excess surplus $s(x,y)-u(x)-v(y)>0$ in such a way as
to increase the payoffs to both $x$ and $y$. Note that %
\eqref{budgetconstraint} and \eqref{utilities} together ensure $%
u(x)+v(y)=s(x,y)$, $\gamma $ almost everywhere: if two agents match with
positive probability, then they split the surplus generated between them.

For simplicity, we shall assume complete participation is incentivized
throughout and that supply balances demand (reflected by the fact that $\mu $
and $\nu $ have equal mass); when these assumptions are violated it is
well-known that they can be restored by augmenting both sides of the market
with a fictitious type representing the outside option of remaining
unmatched (see for instance \cite{cmn}).

Given a stable match $\gamma $ and associated matching functions $u,v$, the
set 
\begin{equation*}
S=\{(x,y)\in X\times Y\mid u(x)+v(y)=s(x,y)\}
\end{equation*}%
is of particular interest; as $\mathop{\rm spt}\gamma \subset S$, it tells
us which agents can match together. If $S$ is concentrated on a graph $\{(x,{%
F}(x))\mid x\in S\}$ of some function ${F}:X\longrightarrow Y$, the stable
matching is called \textit{pure}, the interpretation being that almost all
agents of type $x$ must match with agents of the same type $y={F}(x)$; in
particular, purity excludes the presence of \emph{randomization}, whereby an
agent may be randomly assigned to different partners. In this case, the
distribution $\nu $ agrees with the {image} ${F}_{\#}\mu $ of $\mu $ under ${%
F}$, which assigns mass 
\begin{equation}
({F}_{\#}\mu )(V):=\mu \lbrack {F}^{-1}(V)]  \label{push-forward}
\end{equation}%
to each $V\subset Y$.\footnote{%
Also called the \emph{push-forward} ${F}_{\#}\mu $ of $\mu $ through ${F}$;
see e.g.\ \cite{akm}.} More generally, it is interesting and useful to
understand the geometry of~$S$.

Although $u$ and $v$ will not generally be everywhere differentiable, some
mild regularity condition guarantees differentiability almost everywhere, as
stated by the following result:

\begin{lemma}
\label{T:semiconvexity} If the surplus function $s$ is Lipschitz, so are the
payoffs $u$ and $v$ --- and with the same Lipschitz constant; if $s\in
C^{2}(X\times Y)$, then $u$ and $v$ have second-order Taylor expansions, 
Lebesgue almost-everywhere.
\end{lemma}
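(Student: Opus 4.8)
The plan is to exploit the Kantorovich duality structure: the payoffs $u,v$ can be taken to be $c$-conjugates of one another with respect to the surplus $s$, and then to deduce the asserted regularity from the regularity of $s$ and from the standard theory of semiconvex (or, dually, semiconcave) functions. First I would recall that among all payoff pairs for a stable match one may always replace $u$ by its double $s$-transform, so that without loss of generality
\begin{equation*}
u(x)=\sup_{y\in Y}\bigl(s(x,y)-v(y)\bigr),\qquad v(y)=\sup_{x\in X}\bigl(s(x,y)-u(x)\bigr),
\end{equation*}
with the two suprema in agreement with the payoffs used to define $S$ (this is the content of the stability inequalities \eqref{budgetconstraint}--\eqref{utilities}, which force $u(x)+v(y)=s(x,y)$ on $\spt\gamma$ and $\le$ elsewhere). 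Having done this, $u$ is an envelope of the family $\{s(\cdot,y)-v(y)\}_{y\in Y}$ and $v$ an envelope of $\{s(x,\cdot)-u(x)\}_{x\in X}$.

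For the Lipschitz claim, if $s$ is Lipschitz with constant $L$ in $x$ (uniformly in $y$), then each function $x\mapsto s(x,y)-v(y)$ is $L$-Lipschitz, hence so is their supremum $u$; symmetrically $v$ is $L$-Lipschitz in $y$. Since $X,Y$ are bounded the suprema are finite, so this requires only the elementary fact that a pointwise supremum of a uniformly Lipschitz family is Lipschitz with the same constant.

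For the second-order claim, assume $s\in C^2(X\times Y)$; then on the bounded set $Y$ there is a single constant $C$ with $D^2_{xx}s(x,y)\le C\,I$ for all $(x,y)$, so each $x\mapsto s(x,y)-v(y)$ is $C$-semiconcave, and therefore $u$, being a supremum of a family of $C$-semiconcave functions, is itself $C$-semiconcave; dually $v$ is semiconcave. A semiconcave function on a finite-dimensional domain is, by Alexandrov's theorem (applied after subtracting the smooth quadratic $\tfrac{C}{2}|x|^2$ to reduce to the concave case), twice differentiable in the sense of possessing a second-order Taylor expansion at Lebesgue-almost-every point; this gives the conclusion. The main obstacle — really the only delicate point — is the passage from ``pointwise supremum of smooth functions'' to ``semiconcave'': one must be careful that the semiconcavity constant can be chosen uniformly over the indexing family, which is where boundedness of $X\times Y$ together with continuity of $D^2s$ is used, and one must invoke Alexandrov's theorem correctly (it applies to convex/concave, hence to semiconcave, functions, but not to arbitrary Lipschitz functions, so the $C^2$ hypothesis on $s$ is genuinely needed here). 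Everything else is routine.
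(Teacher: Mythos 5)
Your overall strategy is the standard one (and essentially what the references the paper cites for this lemma, Villani and Santambrogio, do): normalize the payoffs to be $s$-conjugates of each other, get Lipschitz bounds from the envelope representation, and get second-order expansions from Alexandrov's theorem. The normalization step and the Lipschitz step are correct as written.

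However, the semiconcavity step contains a genuine error of direction. The pointwise supremum of a family of $C$-semiconcave functions is \emph{not} semiconcave in general --- suprema create upward kinks (think of $|x|=\sup\{x,-x\}$, a supremum of linear, hence semiconcave, functions, which is not semiconcave at $0$); it is the \emph{infimum} that preserves semiconcavity. What is true, and what you should use, is the opposite bound: since $s\in C^2$ on the bounded set $X\times Y$, there is a single constant $C$ with $D^2_{xx}s(x,y)\geq -C\,I$ for all $(x,y)$, so each $x\mapsto s(x,y)-v(y)+\tfrac{C}{2}|x|^2$ is convex, and a supremum of convex functions is convex; hence $u$ is $C$-\emph{semiconvex} (consistent with the paper's label for the lemma), and likewise $v$ is semiconvex as a supremum over $x$ of the functions $y\mapsto s(x,y)-u(x)$. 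Your stated conclusion that $u$ is semiconcave is false in general for exactly the kink reason above. The damage is contained, because Alexandrov's theorem applies equally to semiconvex functions (add, rather than subtract, the quadratic $\tfrac{C}{2}|x|^2$ to reduce to the convex case), so the final claim of Lebesgue-a.e.\ second-order Taylor expansions survives once the sign of the Hessian bound and the direction of the envelope principle are corrected.
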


\begin{proof}
See \cite{V2} or \cite{Santambrogio15p}.
\end{proof}

When the probability measures $\mu $ and $\nu $ come from Lebesgue
densities, this almost-everywhere differentiability proves sufficient for
many analytic purposes. We use $\mathop{\rm Dom}Du$ (respectively $%
\mathop{\rm Dom}D^{2}u$) to denote those $x\in X$ at which $u$ has a first-
(respectively second-)order Taylor expansion, and $\mathop{\rm Dom}_0 D^iu
:= \left(\overline X\right)^0 \cap \mathop{\rm Dom} D^iu$ 
where $\overline X$ and $X^0$ denote the closure and interior of $X$,
respectively.

The fact that $S$ is the zero-set of the non-negative function %
\eqref{utilities} enters crucially. It implies in particular the first-order
and second order conditions 
\begin{equation}
(Du(x),Dv(y))=(D_{x}s(x,y),D_{y}s(x,y))  \label{foc}
\end{equation}%
%
%
%
%
and 
\begin{equation}
\left( 
\begin{matrix}
D^{2}u(x) & 0 \\ 
0 & D^{2}v(y)%
\end{matrix}%
\right) \geq \left( 
\begin{matrix}
D_{xx}^{2}s(x,y) & D_{xy}^{2}s(x,y) \\ 
D_{yx}^{2}s(x,y) & D_{yy}^{2}s(x,y)%
\end{matrix}%
\right)  \label{soc}
\end{equation}%
are satisfied at each $(x,y)\in S\cap (X\times Y)^{0}$ for which the
derivatives in question exist; here $X^{0}$ denotes the interior of $X$, and
inequality \eqref{soc} should be understood to mean the difference of these $%
(m+n)\times (m+n)$ symmetric matrices is non-negative definite.

The first-order condition 
\begin{equation}
Du(x)=D_{x}s(x,y)  \label{xfoc}
\end{equation}%
(for example) has an interesting, economic interpretation. In a
transferability utility model, the wife's share $u(x)$ of the surplus comes
at the expense of her husband $y$'s share. Thus \eqref{xfoc} expresses the
equality of his marginal willingness $Du(x)$ to pay for variations in her
qualities $x=(x^{1},\ldots ,x^{m})$ with the couple's marginal surplus $%
D_{x}s(x,y)$ for the same variations. Similarly 
\begin{equation}
Dv(y)=D_{y}s(x,y)  \label{yfoc}
\end{equation}%
equates her marginal willingness to pay for variations in his qualities $%
y=(y^{1},\ldots ,y^{n})$ with their marginal surplus for such variations.
This has important consequences for situations where characteristics are not
exogenously given but result from some investment made by individuals before
the beginning of the game (human capital being an obvious example). Then %
\eqref{xfoc} implies that\ the marginal return, for the individual, of an
investment in characteristics is exactly equal to the social return (defined
as the contribution of the investment to aggregate surplus). In other words,
one expect that for some equilibria such investments will be efficient,
despite being made non-cooperatively before the matching game; their impact
on global welfare is internalized by matching mechanisms, a point made by 
\cite{ColeMailathPostlewaite01} and \cite{IyigunWalsh07} and generalized by 
\cite{NoldekeSamuelson15p}.

\subsubsection{Variational interpretation: optimal transport and duality}

The problem of identifying stable matches turns out to have a variational
formulation, known as the optimal transport, or Monge-Kantorovich, problem
in the mathematics literature (see for instance \cite{V2}, \cite%
{Santambrogio15p} and \cite{Galichon16}). This is the problem of matching the measures $\mu $ and $%
\nu $ so as the maximize the total surplus; that is, to find $\gamma $ among
the set $\Gamma (\mu ,\nu )$ which maximizes

\begin{equation}  \label{MK}
s[\gamma]:=\int_{X \times Y}s(x,y)d\gamma(x,y).  \tag{MK}
\end{equation}

The following theorem can be traced back to \cite{SS} for finite type spaces 
$X$ and $Y$, and to \cite{GOZ} more generally. It asserts an equivalence
between \eqref{MK} and stable matchings.

\begin{theorem}
\label{equivalence} A matching measure $\gamma \in \Gamma(\mu,\nu)$ is
stable if and only if it maximizes \eqref{MK}.
\end{theorem}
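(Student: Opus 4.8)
The plan is to prove the two implications separately. The implication ``stable $\Rightarrow$ optimal'' is elementary, using only the defining inequalities \eqref{budgetconstraint} and \eqref{utilities}; the converse ``optimal $\Rightarrow$ stable'' is the substantive half and rests on Kantorovich duality.

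For the first direction, suppose $\gamma$ is stable, with payoffs $u,v$ satisfying \eqref{budgetconstraint} $\gamma$-a.e.\ and \eqref{utilities} everywhere; as already noted in the text, the two together force $u(x)+v(y)=s(x,y)$ for $\gamma$-a.e.\ $(x,y)$. Then for an arbitrary competitor $\tilde\gamma\in\Gamma(\mu,\nu)$ I would integrate the pointwise bound $s(x,y)\le u(x)+v(y)$ against $\tilde\gamma$, use the marginal constraints \eqref{marginals} to rewrite $\int(u(x)+v(y))\,d\tilde\gamma=\int u\,d\mu+\int v\,d\nu$, and note that this last quantity depends only on $\mu,\nu$, not on the plan; hence it also equals $\int(u+v)\,d\gamma=\int s\,d\gamma=s[\gamma]$. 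This gives $s[\tilde\gamma]\le s[\gamma]$, so $\gamma$ maximizes \eqref{MK}. (Incidentally, the same chain shows the optimal value is bounded above by $\int u\,d\mu+\int v\,d\nu$ for every ``feasible'' pair with $u(x)+v(y)\ge s(x,y)$ everywhere.)

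For the converse, suppose $\gamma$ maximizes \eqref{MK}. The key input is the Kantorovich duality theorem: since $s$ is bounded and continuous, $\max_{\Gamma(\mu,\nu)}\int s\,d\gamma$ equals the infimum of $\int u\,d\mu+\int v\,d\nu$ over integrable pairs with $u(x)+v(y)\ge s(x,y)$ everywhere, and this infimum is \emph{attained} --- one may take $u$ to be $s$-concave and $v(y)=\inf_{x}\bigl(s(x,y)-u(x)\bigr)$ its $s$-transform, either by appealing directly to \cite{GOZ}, \cite{V2} or \cite{Santambrogio15p}, or by extracting cyclical monotonicity of $\spt\gamma$ from optimality and running a Rockafellar-type chain construction to build $u$ explicitly. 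Given such an optimal dual pair, non-negativity of $u(x)+v(y)-s(x,y)$ together with $\int(u+v)\,d\gamma-\int s\,d\gamma=\bigl(\int u\,d\mu+\int v\,d\nu\bigr)-s[\gamma]=0$ forces $u(x)+v(y)=s(x,y)$ for $\gamma$-a.e.\ $(x,y)$. Thus $(u,v)$ simultaneously witnesses \eqref{budgetconstraint} ($\gamma$-a.e.) and \eqref{utilities} (everywhere), so $\gamma$ is stable.

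The main obstacle is precisely the existence of the dual optimizers $(u,v)$ in the second direction, i.e.\ the non-trivial content of Kantorovich duality (no duality gap) together with attainment of the dual minimum. Under the paper's standing smoothness and boundedness hypotheses on $s$ this is classical, so in practice I would simply cite it; the only real work, if one insists on a self-contained argument, is the measurable-selection and approximation care needed to pass from cyclical monotonicity of $\spt\gamma$ to a globally defined integrable potential $u$, and to verify the integrability of $u$ and $v=u^{s}$.
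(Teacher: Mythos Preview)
Your proposal is correct and follows essentially the same duality-based route that the paper itself sketches: the paper does not give a formal proof of this theorem but simply attributes it to \cite{SS} and \cite{GOZ}, and then in the paragraph immediately following outlines exactly your argument --- integrate the stability inequality against an arbitrary plan, use the marginal constraints, and invoke Kantorovich duality \eqref{duality} (with attainment of the dual) to get the converse. Your write-up is more explicit about the two directions and about where the real content lies (existence of dual optimizers), but there is no substantive difference in approach.
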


As the maximization of a linear functional over a convex set, problem %
\eqref{MK} has a dual problem, which is useful both in studying it
maximizers, and in clarifying its relation with stable matching. The dual
problem to \eqref{MK} is to minimize 
\begin{equation}  \label{D}
\mu[u] +\nu[v]:=\int_{X }u(x)d\mu(x) +\int_{Y }v(y)d\nu(y).  \tag{MK$_*$}
\end{equation}
among functions $u \in L^1(\mu)$ and $v \in L^1(\nu)$ satisfying the
stability condition \eqref{utilities}. It is well known that under mild
conditions, duality holds (see, for instance, \cite{V2}), that is:

\begin{equation}  \label{duality}
\max_{\gamma\ \mathrm{satisfying}\ \eqref{marginals}} s[\gamma]=
\min_{(u,v)\ \mathrm{satisfying}\ \eqref{utilities}}\big(\mu[u] +\nu[v]\big).
\end{equation}
Note that for any $u$ and $v$ satisfying the stability constraint %
\eqref{utilities} and any matching $\gamma \in \Gamma(\mu,\nu)$, the
marginal condition implies

\begin{equation*}
\mu \lbrack u]+\nu \lbrack v]=\int_{X\times Y}\left( u(x)+v(y)\right)
d\gamma (x,y)\geq \int_{X\times Y}s(x,y)d\gamma (x,y)
\end{equation*}%
and we can have equality if and only if $u(x)+v(y)=s(x,y)$ holds $\gamma $%
-almost everywhere. It then follows from the duality theorem that $\gamma $
is a maximizer in \eqref{MK} (and hence a stable match) and $u,v$ are
minimizers in the dual problem \eqref{D}, precisely when $u(x)+v(y)=s(x,y)$
holds $\gamma $ almost everywhere; in other words, the solutions to \eqref{D}
coincide with the payoff functions.

An immediate corollary is the following:

\begin{corollary}
Let $s$ and $\bar{s}$ be two surplus functions. Assume there exists two
functions $f$ and $g$, mapping ${\mathbf{R}}^{m}$ to ${\mathbf{R}}$ and ${%
\mathbf{R}}^{n}$ to ${\mathbf{R}}$ respectively, such that 
\begin{equation*}
s\left( x,y\right) =\bar{s}\left( x,y\right) +f\left( x\right) +g\left(
y\right)
\end{equation*}%
For any measures $\mu $ and $\nu $, any stable matching for $s$ is a stable
matching for $\bar{s}$ and conversely.
\end{corollary}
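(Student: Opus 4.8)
The plan is to exploit the variational characterization of stable matchings from Theorem~\ref{equivalence} together with the fact, noted just above, that the marginal conditions \eqref{marginals} pin down the integral of any function of the form $f(x)+g(y)$ against every $\gamma\in\Gamma(\mu,\nu)$. First I would observe that, since $f\in L^1(\mu)$ and $g\in L^1(\nu)$ (they are bounded on the relevant supports, or can be assumed integrable), for any matching $\gamma\in\Gamma(\mu,\nu)$ one has
\begin{equation*}
s[\gamma]=\int_{X\times Y}\bigl(\bar s(x,y)+f(x)+g(y)\bigr)\,d\gamma(x,y)=\bar s[\gamma]+\mu[f]+\nu[g],
\end{equation*}
where the last constant $\mu[f]+\nu[g]$ does not depend on $\gamma$. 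Hence the two linear functionals $\gamma\mapsto s[\gamma]$ and $\gamma\mapsto\bar s[\gamma]$ differ by a constant on the feasible set $\Gamma(\mu,\nu)$, so they have exactly the same set of maximizers.

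Next I would invoke Theorem~\ref{equivalence} in both directions: a matching $\gamma$ is stable for the surplus $s$ if and only if it maximizes $s[\gamma]$ over $\Gamma(\mu,\nu)$, and likewise $\gamma$ is stable for $\bar s$ if and only if it maximizes $\bar s[\gamma]$. Combining this with the previous paragraph, $\gamma$ is stable for $s$ $\iff$ $\gamma$ maximizes $s[\gamma]$ $\iff$ $\gamma$ maximizes $\bar s[\gamma]$ $\iff$ $\gamma$ is stable for $\bar s$. This gives the claimed equivalence for any pair of measures $\mu,\nu$.

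There is essentially no hard step here; the only point requiring a word of care is the integrability of $f$ and $g$ so that the splitting of the integral is legitimate and the additive constant is finite. Under the paper's blanket smoothness-and-boundedness hypotheses on $s$ and $\bar s$ (and hence on their difference $f(x)+g(y)$, restricted to the compact-or-bounded type spaces in play), this is immediate; more generally one simply includes $f\in L^1(\mu)$, $g\in L^1(\nu)$ as part of the hypothesis, exactly as in the statement. One could alternatively argue at the level of payoff functions---noting that $(u,v)$ satisfies the budget/stability pair \eqref{budgetconstraint}--\eqref{utilities} for $s$ precisely when $(u-f,v-g)$ does so for $\bar s$---but routing the argument through the maximization problem \eqref{MK} and Theorem~\ref{equivalence} is cleaner and avoids re-deriving the correspondence between payoffs and dual minimizers.
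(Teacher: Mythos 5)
Your proposal is correct and follows essentially the same route as the paper: split $s[\gamma]$ into $\bar s[\gamma]+\mu[f]+\nu[g]$ using the marginal conditions \eqref{marginals}, observe that the two functionals differ by a constant on $\Gamma(\mu,\nu)$ so their maximizers coincide, and conclude via the equivalence of stability and optimality in Theorem~\ref{equivalence}. Your explicit attention to the integrability of $f$ and $g$ and the invocation of the equivalence in both directions merely make explicit what the paper leaves implicit.
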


\begin{proof}
Any stable measure $\gamma $ for $s$ solves the surplus maximization problem:%
\begin{equation}  \label{Prs}
\max_{\gamma\ \mathrm{satisfying}\ \eqref{marginals} }\int_{X\times
Y}s(x,y)d\gamma (x,y).
\end{equation}%
which is equivalent to:%
\begin{equation*}
\max_{\gamma\ \mathrm{satisfying}\ \eqref{marginals} }\int_{X\times Y}\bar{s}%
(x,y)d\gamma (x,y)+\int_{X}f(x)d\mu (x)+\int_{Y}g(y)d\nu (y)
\end{equation*}%
The last two integrals are given (by the marginal conditions on $\gamma $),
and so any $\gamma $ that solves (\ref{Prs}) also solves (\ref{Prsbar}): 
\begin{equation}  \label{Prsbar}
\max_{\gamma\ \mathrm{satisfying}\ \eqref{marginals} }\int_{X\times Y}\bar{s}%
(x,y)d\gamma (x,y).
\end{equation}
\end{proof}

An important consequence of this result is that\emph{\ the observation of
matching patterns can only (at best) identify the surplus up to two additive
functions of }$x$\emph{\ and }$y$\emph{\ respectively}. We shall see later
on that, in general, $s$ cannot be identified even up to two such additive
functions.

Problem \eqref{MK} has been studied extensively over the past 25 years, and
Theorem \ref{equivalence} allows the application of a large resulting body
of theory to the stable matching problem. In particular, conditions on $s$
leading to existence, uniqueness and purity of the solution to \eqref{MK}
are well known and surveyed in \cite{V2}. These properties can be compactly
and elegantly expressed in terms of the \emph{cross difference}, introduced
by \cite{Mc14} and defined on $(X\times Y)^{2}$ by:

\begin{equation*}
\delta(x,y,x_0,y_0) =s(x,y)+s(x_0,y_0)-s(x,y_0)-s(x_0,y)
\end{equation*}

The relevance of the cross difference to problem \eqref{MK} will become more
apparent in what follows. 
For now, we hint at its role by noting that in one dimension, $m=n=1$,
positivity of the cross difference whenever $(x-x_{0})\cdot (y-y_{0})>0$ is
equivalent to supermodularity of $s$ (the Spence-Mirrlees condition). For
more general $X$ and $Y$, the cross difference is zero along the diagonal $%
\{(x,y)=(x_{0},y_{0})\}$ and nonnegative along $\mathop{\rm spt}\gamma
\times \mathop{\rm spt}\gamma \subseteq (X\times Y)^{2}$ for any stable
match $\gamma $. This non-negativity of $s$ on $(\mathop{\rm spt}\gamma
)^{2} $ is typically referred to as \emph{$s$-monotonicity} of $%
\mathop{\rm
spt}\gamma $, and is a special case of a more general condition called \emph{%
$s$-cyclical monotonicity}, which characterizes the support of optimal
matchings.

\subsubsection{Existence, purity and uniqueness of a stable matching}

The variational formulation is quite helpful in establishing the basic
properties of stable matchings. Existence, for instance, can now be asserted
using basic continuity and compactness arguments, as stated by the following
result:

\begin{theorem}
\label{Xist}Assume $X\subset {\mathbf{R}}^{m}$ and $Y\subset {\mathbf{R}}%
^{n} $ are bounded and $s\in C(X\times Y)$. Then there exists an optimizer $%
\gamma $ to \eqref{MK}, and therefore a stable match.
\end{theorem}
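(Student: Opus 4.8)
The plan is to establish existence of a maximizer for \eqref{MK} by the direct method of the calculus of variations, exploiting the compactness of the set of matchings $\Gamma(\mu,\nu)$ in the weak-$*$ topology together with the continuity of the linear functional $\gamma\mapsto s[\gamma]$. First I would observe that since $X$ and $Y$ are bounded, their closures $\overline X$ and $\overline Y$ are compact, hence $\overline X\times\overline Y$ is a compact metric space; without loss of generality $\mu$ and $\nu$ may be regarded as Borel probability measures on these compact sets, and every $\gamma\in\Gamma(\mu,\nu)$ is supported in the compact set $\overline X\times\overline Y$. By Prokhorov's theorem, the family $\Gamma(\mu,\nu)$, having fixed marginals $\mu$ and $\nu$, is tight and hence relatively compact in the weak-$*$ topology on the space of Borel probability measures on $\overline X\times\overline Y$.

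Next I would check that $\Gamma(\mu,\nu)$ is itself weak-$*$ closed. If $\gamma_k\to\gamma$ weak-$*$ with each $\gamma_k\in\Gamma(\mu,\nu)$, then for any $\varphi\in C(\overline X)$, testing against $(x,y)\mapsto\varphi(x)$ gives $\int\varphi\,d(\text{first marginal of }\gamma)=\lim_k\int\varphi\,d\mu=\int\varphi\,d\mu$, so the first marginal of $\gamma$ is $\mu$; symmetrically the second marginal is $\nu$. Hence $\Gamma(\mu,\nu)$ is a nonempty (it contains the product $\mu\otimes\nu$), weak-$*$ compact set. Moreover, since $s\in C(X\times Y)$ and we may assume $s$ extends continuously to the compact set $\overline X\times\overline Y$ (or simply note that $s$ is bounded there, which is all that is used), the map $\gamma\mapsto s[\gamma]=\int s\,d\gamma$ is weak-$*$ continuous by definition of weak-$*$ convergence applied to the test function $s$.

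A continuous real-valued function on a nonempty compact set attains its maximum, so there exists $\gamma\in\Gamma(\mu,\nu)$ maximizing \eqref{MK}. By Theorem \ref{equivalence}, this maximizer is a stable match, which completes the argument.

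There is no serious obstacle here; the only point requiring a little care is the passage to the compactification, i.e.\ ensuring that working on $\overline X\times\overline Y$ rather than $X\times Y$ does not change the problem — this is immediate because the marginals $\mu,\nu$ assign full mass to $X,Y$, so any limiting $\gamma$ is automatically concentrated on (the closure of) $X\times Y$ and the continuity hypothesis on $s$ is used only through its boundedness and uniform continuity on a set containing $\spt\gamma$. One could alternatively invoke the general lower semicontinuity/compactness results for optimal transport quoted in \cite{V2} or \cite{Santambrogio15p}, but the self-contained argument above is short enough to give directly.
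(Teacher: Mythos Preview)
Your argument is correct and is precisely the standard direct-method proof one finds in the references the paper cites. The paper itself does not give a proof at all: it simply writes ``See \cite{V2} or \cite{Santambrogio15p}'' and moves on. So rather than taking a different route, you have supplied the self-contained argument that the paper outsources. The only point to watch is the extension of $s$ to $\overline X\times\overline Y$: continuity on $X\times Y$ alone does not guarantee boundedness or a continuous extension when $X,Y$ are merely bounded (not closed). You flag this yourself, and in the context of the paper it is harmless because of the standing assumption that $s$ and its derivatives are smooth and bounded; alternatively one may simply read the hypothesis as $s\in C(\overline X\times\overline Y)$, which is how the cited references state it.
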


\begin{proof}
See \cite{V2} or \cite{Santambrogio15p}
\end{proof}

We now consider uniqueness and purity. Aside from its theoretical interest,
uniqueness of the optimal matching $\gamma $ plays an important
computational role, as in its absence more sophisticated techniques must be
employed. In practice, solutions are often assumed to be pure in empirical
studies. Since this conclusion is not generically satisfied \cite%
{McCannRifford15p}, it is desirable to know conditions on $s$, $\mu $ and $%
\nu $ which guarantee it. 
Furthermore, in hedonic contexts, purity implies the absence of 'bunching'
(whereby different agents consume the same product); this property will be
particularly important when we investigate the interactions between matching
and contract theory.

Note, first, that uniqueness is not guaranteed in general. For instance,
when the surplus function takes the additive form $s(x,y)=f(x)+g(y)$, the
functional

\begin{equation*}
s[\gamma]=\int_{X \times Y}s(x,y)d\gamma(x,y) =\int_Xf(x)d\mu(x)
+\int_Yg(y)d\nu(y)
\end{equation*}
is \textit{constant} throughout the set $\Gamma(\mu,\nu)$ of feasible
matching measures, and so any matching $\gamma$ is optimal and hence stable.
It is therefore clear that certain structural conditions on $s$ are indeed
needed to ensure purity and uniqueness.

The key condition for purity of the optimal matching is a nonlocal
generalization of the Spence-Mirrlees condition, 
known as the 
\textit{twist} condition:

\begin{definition}
The function $s \in C^1$ satisfies the \emph{twist condition }if, for each
fixed $x_{0}\in X$ and $y_{0}\neq y\in Y$, the mapping 
\begin{equation}  \label{twist}
x\in X\mapsto \delta (x,y,x_{0},y_{0})
\end{equation}%
has no critical points.
\end{definition}

By differentiating $\delta$ with respect to $x$ and rearranging, we see that
this is equivalent to

\begin{equation}  \label{twinjection}
D_{x}s(x,y)\neq D_{x}s(x,y_{0})
\end{equation}%
for all $x$ and distinct $y\neq y_{0}$. The twist 
condition is therefore equivalent to the \textit{injectivity} of $y\mapsto
D_{x}s(x,y)$, for each fixed $x$. This injectivity in turn implies that the
husband type $y$ of woman type $x\in \mathop{\rm Dom}Du$ is uniquely
determined by his marginal willingness $Du(x)$ to pay for her qualities
through the first-order condition~\eqref{xfoc}. For instance, in a
one-dimensional context ($m=1=n$), the classical Spence-Mirrlees condition
imposes that either $\frac{\partial ^{2}s}{\partial x\partial y}>0$ or $%
\frac{\partial ^{2}s}{\partial x\partial y}<0$ over $X\times Y$, which
implies that $y\mapsto \frac{\partial s}{\partial x}(x,y)$ is strictly
monotone (and hence injective) for each fixed $x$. It is in this sense that
the twist condition can be viewed as a non-local generalization of the
Spence-Mirrlees condition.

The twist condition is sufficient to guarantee purity, as stated by the
following result:

\begin{theorem}[\protect\cite{G}, \protect\cite{lev}]
\label{purity} Assume that $\mu $ is absolutely continuous with respect to
Lebesgue measure and the surplus $s$ satisfies the twist condition. Then any
solution $\gamma $ to \eqref{MK} is pure.
\end{theorem}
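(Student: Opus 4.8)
The plan is to produce a Borel map $\f\colon X\to Y$ such that the optimal $\gamma$ is concentrated on its graph, reading $\f$ off from the first-order condition \eqref{xfoc} and obtaining its single-valuedness from the twist condition.

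First I would fix an optimal $\gamma$ and, invoking the duality \eqref{duality} and the discussion following it, select minimizers $u,v$ of \eqref{D}; these satisfy $u(x)+v(y)\le s(x,y)$ for \emph{all} $(x,y)\in X\times Y$, with equality $\gamma$-almost everywhere. Since $s$ and its derivatives are bounded, $s$ is Lipschitz, so Lemma~\ref{T:semiconvexity} makes $u$ Lipschitz, hence differentiable Lebesgue-a.e.\ by Rademacher's theorem; because $\mu$ is absolutely continuous with respect to Lebesgue measure, the set $\overline X\setminus\Domo Du$ of points at which $u$ lacks an interior first-order Taylor expansion is $\mu$-null. Consequently the set $G:=\{(x,y)\in X\times Y:\ u(x)+v(y)=s(x,y)\ \text{and}\ x\in\Domo Du\}$ carries full $\gamma$-mass: its complement lies in the union of the $\gamma$-null "slack" set $\{u(x)+v(y)<s(x,y)\}$ with $(\overline X\setminus\Domo Du)\times Y$, and the latter has $\gamma$-measure $\mu(\overline X\setminus\Domo Du)=0$ by the marginal condition \eqref{marginals}.

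Next I would extract the first-order condition on $G$. If $(x,y)\in G$, then $u(x')-s(x',y)\ge -v(y)=u(x)-s(x,y)$ for every $x'\in X$, so $x$ is an interior minimizer of $x'\mapsto u(x')-s(x',y)$; as $u$ is differentiable at $x$ and $s\in C^1$, this forces $Du(x)=D_xs(x,y)$, which is precisely \eqref{xfoc}. The twist condition, rewritten as the injectivity \eqref{twinjection} of $y\mapsto D_xs(x,y)$ for each fixed $x$, then shows that this equation has at most one solution $y$; hence there is a well-defined map $\f$ on $\Domo Du$ with $(x,y)\in G\implies y=\f(x)$. Therefore $G\subseteq\{(x,\f(x)):x\in\Domo Du\}$, and since $\gamma(G)=1$ the matching $\gamma$ is concentrated on this graph, i.e.\ $\gamma$ is pure. (Borel measurability of $\f$ is routine, following from that of $x\mapsto Du(x)$ together with the regularity of $D_xs$.)

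I expect the only genuinely delicate point to be the bookkeeping above: that the first-order condition is available not merely for $\mu$-a.e.\ $x$ but for $\gamma$-a.e.\ $(x,y)$, which rests on the fact that $\mu\ll\mathcal L^m$ lets us upgrade the Lebesgue-null non-differentiability set to a $\mu$-null set, and that $\mu$ is the $x$-marginal of $\gamma$ so this pulls back to a $\gamma$-null set; and, relatedly, that $\mu$-a.e.\ point of $X$ is interior (so that minimality of $x'\mapsto u(x')-s(x',y)$ really kills the gradient rather than yielding only a one-sided inequality). Once these are in place, the twist condition does the rest essentially for free.
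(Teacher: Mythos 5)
Your proposal is correct and follows essentially the same route as the paper's own proof: dual potentials obtained via Theorem \ref{equivalence} and Lemma \ref{T:semiconvexity}, Rademacher's theorem together with $\mu\ll\mathcal{L}^m$ to get the first-order condition \eqref{xfoc} for $\gamma$-a.e.\ pair, and the twist condition in the form \eqref{twinjection} to invert that relation and define ${F}(x)=[D_xs(x,\cdot)]^{-1}(Du(x))$; your additional bookkeeping (pulling the $\mu$-null non-differentiability set back to a $\gamma$-null set via the marginal condition \eqref{marginals}, and the interiority caveat) only makes explicit what the paper leaves tacit. One small slip of direction: minimizers of \eqref{D} satisfy the stability constraint $u(x)+v(y)\ge s(x,y)$ for \emph{all} $(x,y)$ as in \eqref{utilities} (with equality $\gamma$-a.e.\ from the budget constraint), not $\le$ --- which is in fact exactly the direction your subsequent minimization argument uses.
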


\begin{proof}
Let $\gamma$ solve \eqref{MK} for a twisted surplus $s \in C^1$. According
to Theorems \ref{T:semiconvexity} and \ref{equivalence}, there exist
Lipschitz potentials $u$ on $X$ and $v$ on $Y$ satisfying $u(x) + v(y) \ge
s(x,y)$ for all $(x,y) \in X \times Y$, with equality holding $\gamma$-a.e.
Being Lipschitz, $u$ is differentiable Lebesgue almost everywhere, hence $%
\mu $-a.e., where $\mu$ denotes the left marginal of $\gamma$. Thus for $\mu$%
-a.e.\ $x$ we find $Du(x) = D_x s(x,y)$ as in \eqref{foc}. The twist
condition \eqref{twist} implies we can invert this relation to write $y =
F(x)$, where $F(x) = [D_x s(x, \cdot)]^{-1}(Du(x))$. This shows $\gamma$ is
pure.
\end{proof}


The absolute continuity of $\mu $ is a technical condition required to
ensure the utilities can be differentiated on a set of full $\mu $ measure;
the payoff functions are guaranteed to be Lipschitz if the surplus is, and
are hence differentiable Lebesgue almost everywhere by Rademacher's theorem
(but not everywhere in general). The condition on the measure can be
weakened somewhat, but some regularity is needed: as a simple
counterexample, if $\mu =\delta _{x_{0}}$ is a Dirac mass but $\nu $ is not,
then the optimal matching (indeed, the only measure in $\Gamma (\mu ,\nu )$)
is product measure $\delta _{x_{0}}\otimes \nu $, which pairs every point $y$
with $x_{0}$ and is certainly not pure.

Three remarks can be made at this point. First, if $s$ is twice continuously
differentiable and $Y$ has nonempty interior, the twist condition
immediately implies that $n\leq m$, as it asserts the existence of a smooth
injection \eqref{twinjection} from an open subset of ${\mathbf{R}}^{n}$ into 
${\mathbf{R}}^{m}$. Second, it is worth noting that in many relevant
situations, the twist condition does indeed fail; for example, if we replace 
$X$ and $Y$ with compact smooth manifolds, it fails for \textit{any} smooth
surplus function $s$. Third, the twist condition is \emph{not} necessary for
purity. For instance, \cite{KitagawaWarren12} provide a setting in which
purity holds in the absence of twist.

One can readily see that purity implies uniqueness:

\begin{corollary}
Under the conclusions of the preceding theorem, the optimal matching $\gamma 
$ is unique.
\end{corollary}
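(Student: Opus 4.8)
The plan is to exploit the linearity of the objective in \eqref{MK} together with the purity conclusion of Theorem \ref{purity}. First I would record that the set of maximizers of \eqref{MK} is convex: since $\gamma \mapsto s[\gamma]$ is a linear functional and $\Gamma(\mu,\nu)$ is a convex set (being cut out by the linear marginal constraints \eqref{marginals}), any convex combination of optimizers is again feasible and attains the same value, hence is again optimal.

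Next, suppose for contradiction that $\gamma_{0}$ and $\gamma_{1}$ are two distinct optimal matchings, and set $\bar\gamma := \tfrac12(\gamma_{0}+\gamma_{1})$. By the previous paragraph $\bar\gamma$ solves \eqref{MK}, so Theorem \ref{purity} applies: $\bar\gamma$ is pure, i.e.\ concentrated on the graph $\{(x,F(x))\mid x \in X\}$ of some map $F : X \to Y$ (explicitly $F(x) = [D_x s(x,\cdot)]^{-1}(Du(x))$ for $\mu$-a.e.\ $x$, as in the proof of Theorem \ref{purity}). The key observation is then that each $\gamma_{i}$ is itself concentrated on this same graph: indeed $\gamma_{i} \le 2\bar\gamma$ as nonnegative measures, so $\gamma_{i}$ is absolutely continuous with respect to $\bar\gamma$ and must vanish on the complement of $\mathrm{graph}(F)$.

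Finally, I would invoke the elementary fact that a probability measure on $X\times Y$ concentrated on $\mathrm{graph}(F)$ and having left marginal $\mu$ is necessarily the push-forward $(\mathrm{id}\times F)_{\#}\mu$, hence is uniquely determined by $\mu$ and $F$. Applying this to $\gamma_{0}$ and to $\gamma_{1}$, which share the left marginal $\mu$ and the graph of $F$, yields $\gamma_{0} = (\mathrm{id}\times F)_{\#}\mu = \gamma_{1}$, contradicting their distinctness. I expect the only point requiring a little care to be this last measure-theoretic step---disintegrating $\gamma_i$ along its left marginal and noting the conditional measures are Dirac masses at $F(x)$ because the mass sits on the graph---but this is standard and needs no new hypothesis beyond those already in force.
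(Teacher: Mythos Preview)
Your proposal is correct and follows essentially the same route as the paper: average two optimizers, apply the purity theorem to the average, use the inequality $\gamma_i \le 2\bar\gamma$ (non-negativity) to force each $\gamma_i$ onto the same graph, and then identify any graph-supported measure with left marginal $\mu$ as $(\mathrm{id}\times F)_{\#}\mu$. The only cosmetic differences are that the paper argues directly rather than by contradiction, and for the final identification step it cites Lemma~3.1 of \cite{akm} instead of sketching the disintegration argument you outline.
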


\begin{proof}
Suppose two solutions $\gamma_0$ and $\gamma_1$ to \eqref{MK} exist.
Convexity of the problem makes it clear that $\gamma_2=(\gamma_0+\gamma_1)/2$
is again a solution. The conclusion asserts that $\gamma_2$ concentrates on
the graph of a map ${F}:X\longrightarrow Y$, and vanishes outside this
graph. Non-negativity ensures the same must be true for $\gamma_0$ and $%
\gamma_1$. But then $\gamma_0= ({F} \times id)_\#\mu = \gamma_1$ by Lemma
3.1 of \cite{akm}
\end{proof}

The converse to this Corollary is not true; i.e., one can easily find
situations where the optimal matching is unique but not pure. Additional
conditions which ensure uniqueness, but not purity, of the optimal matching
can be found in \cite{cmn} and \cite{McCannRifford15p}.

\subsubsection{An example}

As an illustration of Theorem \ref{purity}, we consider a particular case
that has been widely used in empirical applications (\cite{GalichonSalanie12}%
, \cite{GalichonDupuy14}, \cite{Lindenlaub15} to name just a few). Assume
that $m=n$, and that the surplus takes the form:%
\begin{equation*}
s\left( x,y\right) = f_{X}\left( x\right) +g_{Y}\left( y\right) +
\sum_{k=1}^{n}f_{k}\left( x_{k}\right) g_{k}\left(y_{k}\right)
\end{equation*}

Two remarks can be made about this form. First, we may, without loss of
generality, disregard the first two terms by assuming that $f_{X}=g_{Y}=0$;
the stable measure will not be affected. Second, this form is necessary and
sufficient for the matrix of cross derivatives:%
\begin{equation*}
D_{xy}^{2}s=\left( \left( \frac{\partial ^{2}s}{\partial x_{k}\partial y_{l}}%
\right) \right)
\end{equation*}%
to be diagonal (a case investigated by \cite{Lindenlaub15}).

In that case, for any $y\neq \bar{y}$, we have that:%
\begin{equation*}
D_{x}s\left( x,y\right) -D_{x}s\left( x,\bar{y}\right) =\left( 
\begin{array}{c}
f_{1}^{\prime }\left( x_{1}\right) \left( g_{1}\left( y_{1}\right)
-g_{1}\left( \bar{y}_{1}\right) \right) \\ 
\vdots \\ 
f_{n}^{\prime }\left( x_{n}\right) \left( g_{n}\left( y_{n}\right)
-g_{n}\left( \bar{y}_{n}\right) \right)%
\end{array}%
\right)
\end{equation*}

In particular (and assuming that none of the $f_{i}$'s have anywhere
vanishing derivatives), the twist condition is satisfied whenever the $g_{i}$%
's are strictly monotonic. We conclude that, in that case, the stable match
is unique and pure; that is, there exists a function $F$ mapping ${\mathbf{R}%
}^{m}$ to itself, such that $x$ is matched a.s. with $y=\left( F_{1}\left(
x\right) ,...,F_{m}\left( x\right) \right) $. In addition, \cite%
{Lindenlaub15} show that if the $f$ and $g$ are strictly increasing, then $%
F_{i}$ is strictly increasing in $x_{i}$ for all $i$ (a property she calls
multidimensional positive assortative matching).

\subsubsection{Local structure of the optimal matching}

In the absence of the twist condition, or in the case where both marginals
are singular, one may not expect purity of the optimal match. However, under
a generic nondegeneracy criterion, which is a local generalization of the
Spence-Mirrlees condition, something can be still be asserted about the
local geometric structure of its support. For a fixed $(x_{0},y_{0})$, let $%
H $ be the Hessian of the function $(x,y)\mapsto \delta (x,y,x_{0},y_{0})$,
evaluated at the point $(x,y)=(x_{0},y_{0})$. 
A simple calculation yields that $H$ takes the block form: 
\begin{equation*}
H=%
\begin{bmatrix}
0 & D_{xy}^{2}s \\ 
D_{yx}^{2}s & 0%
\end{bmatrix}%
\end{equation*}%
where

\begin{equation*}
D_{xy}^{2}s=(\frac{\partial ^{2}s}{\partial x_{i}y_{j}})_{ij}
\end{equation*}%
is the $m\times n$ matrix of mixed second order partials, and the $0$'s in
the upper left and lower right hand corners represent $m\times m$ and $%
n\times n$ vanishing blocks, respectively. Denoting by $r$ the rank of $%
D_{xy}^{2}s$, we have the following theorem (see \cite{mpw} and \cite{P}).

\begin{theorem}
\label{spacelike} There exists a neighborhood $N$ of $(x_{0},y_{0})$ such
that the intersection 
$N\cap S$ is contained in a Lipschitz submanifold of $X\times Y$ of
dimension $m+n-r$. At points where $S$ is differentiable 
it is \emph{spacelike} in the sense that $H(v,v)\geq 0$ for any $v$ which is
tangent to the set $S$.
\end{theorem}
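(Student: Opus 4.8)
The plan is to derive both conclusions of Theorem~\ref{spacelike} from the second-order condition~\eqref{soc}, which holds at every point of $S\cap (X\times Y)^0$ where the relevant derivatives exist. The starting observation is that the non-negative matrix inequality~\eqref{soc} says precisely that the $(m+n)\times(m+n)$ matrix
\begin{equation*}
M(x,y):=
\begin{bmatrix}
D^2u(x)-D^2_{xx}s(x,y) & -D^2_{xy}s(x,y)\\
-D^2_{yx}s(x,y) & D^2v(y)-D^2_{yy}s(x,y)
\end{bmatrix}
\end{equation*}
is positive semidefinite wherever $u$ and $v$ admit second-order Taylor expansions; by Lemma~\ref{T:semiconvexity} this is a set of full Lebesgue (hence full $\mu\otimes\nu$) measure in $(X\times Y)^0$. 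The key algebraic fact is that the rank of the off-diagonal block $D^2_{xy}s$ forces the kernel of $M(x,y)$ to have dimension at least $m+n-2r$ is too weak; instead one argues that positive-semidefiniteness of a symmetric block matrix with an off-diagonal block of rank $r$ forces, after a change of basis adapted to the column/row spaces of $D^2_{xy}s$, at least $m-r$ directions in the $x$-factor and $n-r$ directions in the $y$-factor on which the whole quadratic form vanishes, giving the bound $m+n-r$ on the dimension of the locus the tangent space can occupy (the standard Schur-complement bookkeeping used in \cite{mpw}, \cite{P}).

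First I would establish the dimension bound. Fix $(x_0,y_0)$ and suppose, for contradiction, that $S$ contains near $(x_0,y_0)$ more mass than a set of dimension $m+n-r$; more carefully, following \cite{mpw} and \cite{P}, one shows the tangent cone to $S$ at any differentiability point $(x,y)$ near $(x_0,y_0)$ is a subspace $T$ on which the quadratic form associated to~\eqref{soc}, namely $q(v)=\langle M(x,y)v,v\rangle$, must vanish. Indeed, if $(x(t),y(t))$ is a $C^1$ curve in $S$ through $(x,y)$, then $u(x(t))+v(y(t))-s(x(t),y(t))\equiv 0$; differentiating twice and using the first-order condition~\eqref{foc} to kill the linear term, one gets $\langle M(x,y)(\dot x,\dot y),(\dot x,\dot y)\rangle + (\text{correction terms from }D^2 x,D^2 y)\le 0$, and since $M\ge 0$ the quadratic part must be exactly zero along $S$ — this is the \emph{spacelike} property, $H(v,v)=q(v)\ge 0$ for $v$ tangent to $S$, once we note $M(x_0,y_0)$ and the Hessian $H$ of $\delta$ at $(x_0,y_0)$ agree up to the positive-semidefinite blocks $D^2u,D^2v$, which only help the inequality. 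Since $q\ge 0$ and vanishes on $T$, $T$ must lie in the kernel of $M$; a linear-algebra lemma (Schur complement applied to the block structure, exploiting $\mathrm{rank}\,D^2_{xy}s=r$) bounds $\dim\ker M\le m+n-r$ on the complement where $M$ is genuinely positive, which caps $\dim T$ at $m+n-r$. One then invokes a Lipschitz-rectifiability / covering argument (as in \cite{mpw}) to pass from the pointwise tangent bound to the statement that $N\cap S$ lies in a single Lipschitz submanifold of dimension $m+n-r$, shrinking $N$ as needed.

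The \emph{spacelike} conclusion then comes essentially for free from the computation above: at a point where $S$ is differentiable with tangent space $T$, any $v\in T$ is realized as $(\dot x(0),\dot y(0))$ for a curve in $S$, and the twice-differentiated identity together with~\eqref{soc} gives $H(v,v)\ge 0$ — here one must be slightly careful that $H$ is the Hessian of $\delta$ at the \emph{base point} $(x_0,y_0)$ while the identity is differentiated at a nearby point, so one either works directly at $(x_0,y_0)$ or notes continuity of the relevant second derivatives to transfer the inequality. The main obstacle, and the step deserving the most care, is the passage from the pointwise/infinitesimal vanishing of $q$ on tangent directions to the global containment of $N\cap S$ in a genuine Lipschitz submanifold of the asserted dimension: this requires knowing that $S$ is $s$-cyclically monotone (which it is, being the contact set of dual optimizers) and then applying the structure theory for such sets from \cite{mpw}, \cite{P}, rather than a naive implicit-function-theorem argument, since $u$ and $v$ are only twice differentiable almost everywhere and $S$ need not be smooth. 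I would therefore organize the write-up so that the linear-algebra lemma and the second-order identity are stated cleanly first, and the rectifiability step is quoted from \cite{mpw} and \cite{P} with the rank parameter $r$ tracked explicitly.
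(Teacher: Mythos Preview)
The paper does not supply its own proof of Theorem~\ref{spacelike}; it simply cites \cite{mpw} and \cite{P}, so the comparison is really against those sources. Your sketch captures the right circle of ideas---the second-order condition~\eqref{soc} forces tangent directions to $S$ to lie in the null cone of a positive-semidefinite form whose off-diagonal block has rank $r$---and you correctly identify the delicate step as passing from this infinitesimal constraint to containment in a single Lipschitz submanifold.

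Two points deserve cleaning up. First, your identification ``$H(v,v)=q(v)$'' is not correct as written: the off-diagonal blocks of your $M$ carry $-D^2_{xy}s$ while those of $H$ carry $+D^2_{xy}s$, so in fact $M+H=\mathrm{diag}\bigl(D^2u-D^2_{xx}s,\;D^2v-D^2_{yy}s\bigr)$. The spacelike inequality then follows because $M(v,v)=0$ on tangent vectors $v=(\dot x,\dot y)$ gives
\[
H(v,v)=\bigl(D^2u-D^2_{xx}s\bigr)[\dot x,\dot x]+\bigl(D^2v-D^2_{yy}s\bigr)[\dot y,\dot y]\ge 0,
\]
each diagonal block being positive semidefinite by restricting~\eqref{soc} to vectors of the form $(\dot x,0)$ and $(0,\dot y)$. (Relatedly, the ``$\le 0$'' you obtain from twice differentiating the identity $\phi\equiv 0$ along a curve should be ``$=0$''; the first-order terms vanish by~\eqref{foc}, and what remains is exactly $M(v,v)$.) Second, the Lipschitz-submanifold step in \cite{mpw} does not proceed by bootstrapping from a pointwise tangent-space bound. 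Instead, after a linear change of coordinates adapted to the rank-$r$ block $D^2_{xy}s$, the $s$-monotonicity of $\mathop{\rm spt}\gamma$ becomes ordinary monotonicity of a relation in $\mathbf{R}^r$, and Minty's device (the $45^\circ$ rotation of a monotone graph) exhibits it directly as a $1$-Lipschitz graph over an $(m+n-r)$-dimensional subspace. Your plan to quote the structure theory from \cite{mpw} and \cite{P} is therefore the right move, but be aware that the engine there is the Minty parametrization of monotone sets, not a covering argument built on tangent-space estimates; the tangent-space picture is the \emph{consequence}, not the mechanism.
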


In particular, in the special case where $m=n$ and $H$ has full rank, $r=n$,
the support is at most $m=n$ dimensional (locally). When these dimensions
differ, and the $n\times m$ matrix has full rank $r=\min (n,m)$, the
dimension is $\max (n,m)$ and its codimension is $\min (n,m)$ \cite{P}. If $%
r=\min (n,m)$ we say $s$ is \emph{non-degenerate} at $(x_0,y_0)$.

The spacelike assertion of Theorem \ref{spacelike} can be useful to identify
the orientation of the optimal matching, as we will see below.

\begin{example}
When $m=n=1$, assuming the Spence-Mirrlees condition $\frac{\partial ^{2}s}{%
\partial x\partial y}>0$, the theorem tells us that the solution
concentrates on a $1$-dimensional Lipschitz curve $(x(t),y(t))$. Wherever
that curve is differentiable, the spacelike condition boils down to $%
x^{\prime }(t)\frac{\partial ^{2}s}{\partial x\partial y}y^{\prime }(t)\geq
0 $, or $x^{\prime }(t)y^{\prime }(t)\geq 0$, yielding positive assortative
matching. There is only one matching $\gamma $ with this structure, and it
can be computed explicitly from the measures $\mu $ and $\nu $ (provided $%
\mu $ has no atoms); it is given by ${F}_{\#}\mu $ where ${F}:X\rightarrow Y$
satisfies 
\begin{equation*}
\int_{(-\infty ,{F}(x))}d\nu \leq \int_{-\infty }^{x}d\mu \leq
\int_{(-\infty ,{F}(x)]}d\nu
\end{equation*}%
In words: $x$ is matched with $y=F\left( x\right) $ such that the number of
women whose characteristic is larger than $x$ equals the number of men above 
$y$.
\end{example}

\subsubsection{Regularity (smoothness) of the matching function}

\label{sect: regularity} When $s$ satisfies the twist condition, and the
optimal matching is therefore pure and unique, it is interesting to ask
whether the mapping ${F}:X\rightarrow Y$ generating the matching is
continuous; qualitatively, this is the question of whether women $x$ and $%
x^{\prime }$ whose types are `close' must marry men of similar
characteristics.

This turns out to generally not be the case, as a wide range of examples
throughout the literature on optimal transport show. In particular, as was
shown by \cite{mtw}, it is necessary (but not sufficient) that for each $%
x\in X$, the set $D_{x}s(x,Y)=\{D_{x}s(x,y)\mid y\in Y\}\subset {\mathbf{R}}%
^{n}$ be convex; if this condition fails, it is possible to construct
measures $\mu $ and $\nu $ with smooth positive densities for which ${F}$ is
discontinuous.

When $m>n$, this convexity is particularly restrictive; it can be shown to
fail, \textit{unless} the function $s$ takes a reducible, or \textit{index}
form, $s(x,y)=b(I(x),y)$, for some $I:X\rightarrow {\mathbf{R}}^{n}$ \cite%
{P2}. We will illustrate this phenomenon with an example later on.

Even when this convexity holds, the optimal map may not be continuous in
general; in this case, their regularity is governed by a restrictive fourth
order differential condition on $s$, known as the Ma-Trudinger-Wang
condition (see \cite{mtw} and \cite{loeper}). \cite{KimMcCann10} show this
condition to be equivalent to a sign condition on the curvature of the space 
$X\times Y$ geometrized using the Hessian $H$ as a (pseudo)-Riemannian
metric.\footnote{%
See 
\cite{Mc14} for a general overview of the regularity of optimal mappings.}

\subsection{Extensions}

\subsubsection{Individual utilities}

\label{individual utilities}The stability condition allows information on
individual utilities at the stable match to be recovered. To see why, note
first that stability implies that, for $\mu $ almost every $x$,%
\begin{equation}
u\left( x\right) =\max_{y}\left( s\left( x,y\right) -v\left( y\right)
\right) .  \label{v transform}
\end{equation}

Assume, now, that the matching is pure (say, because the twist condition is
satisfied). The envelope theorem then yields, wherever $u$ is differentiable
and $y={F}(x)$ is matched with $x$,

\begin{equation}
\frac{\partial u}{\partial x_{i}}\left( x\right) =\frac{\partial s}{\partial
x_{i}}\left( x,{F}\left( x\right) \right) .  \label{envelope}
\end{equation}%
which gives the partials of $u$, and therefore defines $u$ up to an additive
constant. Note, incidentally, that these partial differential equations must
be compatible, which generates restrictions on the matching function ${F}$;
namely, assuming double differentiability:%
\begin{eqnarray}
\sum_{k}\frac{\partial ^{2}s}{\partial x_{i}\partial y_{k}}\frac{\partial {F}%
_{k}}{\partial x_{j}} &=&\sum_{k}\frac{\partial ^{2}s}{\partial
x_{j}\partial y_{k}}\frac{\partial {F}_{k}}{\partial x_{i}}  \label{CCond} \\
&\geq &\frac{\partial ^{2}s}{\partial x^{i}\partial x^{j}}(x,F(x))
\label{soc2}
\end{eqnarray}%
where $y_{k}={F}_{k}\left( x\right) $ and the partials of $s$ are taken at $%
\left( x,{F}\left( x\right) \right) $ and the inequality is from \eqref{soc}%
. In particular, in the case of multi to one dimensional matching, then $y={F%
}\left( x_{1},...,x_{m}\right) $, and (\ref{CCond}) becomes a system of
partial differential equations that ${F}$ must satisfy (which reduces to a
single equation in case $m=2=n+1$); together with the measure restrictions
and the matrix inequality \eqref{soc2}, this typically identifies the
matching function ${F}$. We shall come back to this point in Section 4.

\subsubsection{Index and pseudo-index models}

A very special case, which has been largely used in practical applications,
is obtained when the surplus function $s$ is weakly separable in one vector
of characteristics. Assume, indeed, that there exist two functions $I$ and $%
\sigma $, mapping ${\mathbf{R}}^{n}$ to ${\mathbf{R}}$ and ${\mathbf{R}}%
^{m+1}$ to ${\mathbf{R}}$ respectively, such that:%
\begin{equation}
s\left( y,x\right) =\sigma \left( x,I\left( y\right) \right) .  \label{Ind}
\end{equation}%
In words, the various male characteristics $y$ affect the matching function
only through some one dimensional index $I\left( y\right) $. This assumption
is quite restrictive; in practice, it requires that men with different
characteristics $y$ and $y^{\prime }$ but the same index (i.e., $I\left(
y\right) =I\left( y^{\prime }\right) $) be perfect substitutes on the
matching market \emph{for any potential spouse }$x$.

Formally, if $s$ is smooth, then the index form requires that $s$ satisfies
the following conditions:%
\begin{equation*}
\frac{\partial }{\partial x_{m}}\left( \frac{\partial s/\partial y_{k}}{%
\partial s/\partial y_{l}}\right) =0\ \ \forall k,l,m.
\end{equation*}%
These conditions express the fact that the marginal rate of substitution
between $y_{k}$ and $y_{l}$ (which defines the slope of tangent to the
corresponding iso-surplus curve) does not depend on $x$; indeed, (\ref{Ind})
implies that:%
\begin{equation*}
\frac{\partial s/\partial y_{k}}{\partial s/\partial y_{l}}=\frac{\partial
I/\partial y_{k}}{\partial I/\partial y_{l}}.
\end{equation*}

The main practical interest of index models is that, whenever (\ref{Ind}) is
satisfied, the matching problems is de facto one-dimensional in $y$;
technically, one can replace the space $Y$ and the measure $\nu $ with $%
\tilde{Y}=\text{Im}I\subset \mathbb{R}$ and the \emph{push-forward} $\tilde
\nu :=I_\#\nu$ of $\nu $ through $I$ defined as in \eqref{push-forward}. 
In particular, when the index property (\ref{Ind}) is satisfied, then the
matching problem boils down to a multi-to-one dimensional problem, of the
type discussed in Section 4.\footnote{%
A practical difficulty is that, for most empirical applications, the index $%
I $ is not known ex ante and has to be empirically estimated. See \cite{COQ}
for a precise discussion.}

Another notable property of index models, that was observed in \cite{P2}, is
that when $m\leq n$, the convexity of the sets $\{D_{y}s(y,x)\mid x\in X\}$,
necessary to ensure that the matching function is continuous, \emph{requires}
that the function $s$ takes the form $s(x,y)=\sigma (x,I(y))$, where here $I:%
\mathbb{R}^{n}\rightarrow \mathbb{R}^{m}$. In particular, when $m=1$, $s$
must have an index form. Therefore, when $n>m=1$, if a surplus function $s$
is not of index form, there are absolutely continuous measures $\mu $ and $%
\nu $ (with smooth densities) for which the matching function ${F}$ is
discontinuous; we will see an example of this below.

Lastly, the notion of index model can for some purposes be slightly relaxed.
Specifically, we define a \emph{pseudo-index }model by assuming that there
exist three functions $\alpha $, $I$ and $\sigma $, mapping ${\mathbf{R}}%
^{n} $ to ${\mathbf{R}}$, ${\mathbf{R}}^{n}$ to ${\mathbf{R}}$ and ${\mathbf{%
R}}^{m+1}$ to ${\mathbf{R}}$ respectively, such that:%
\begin{equation}
s\left( x,y\right) =\alpha \left( y\right) +\sigma \left( x,I\left( y\right)
\right) .
\end{equation}%
Here, male characteristics $y$ affect the matching function through \emph{two%
} one-dimensional indices $\alpha \left( y\right) $ and $I\left( y\right) $.
The crucial remark, however, is the following. Assume that $D_{x}\sigma
\left( x,i\right) $ is injective in $i$; this simply requires that $\partial
\sigma /\partial x_{k}\left( x,i\right) $ is strictly monotonic in $i$ for
at least one $k$. Then:

\begin{equation*}
D_{x}s(x,y)=D_{x}\sigma \left( x,I\left( y\right) \right) \neq D_{x}\sigma
\left( x,I\left( y_{0}\right) \right) =D_{x}s(x,y_{0})
\end{equation*}%
for any $y,y_{0}$ such that $I\left( y\right) \neq I\left( y_{0}\right) $.
It follows from the proof of the twist theorem that the support of the
stable measure is born by the graph of a function. Although the stable
matching on $\mathbf{R}^{m}\times \mathbf{R}^{n}$ is not pure (since all
males with the same index are perfect substitutes), for the `reduced'
matching problem defined on $\mathbf{R}^{m}\times \mathbf{R}$ by the surplus
function $\sigma \left( x,i\right) $ and the measures $\mu $ and $I_{\#}\nu $%
, the stable matching is pure - i.e., there exists a function $\phi $ such
that any women $x$ is matched with probability one with a man whose index is 
$i=\phi \left( x\right) $. In particular, most results derived in the multi
to one dimensional case (see Section 4 below) still apply in that case. 

\subsubsection{Links with hedonic models}

Next, we briefly recall the canonical link between matching and hedonic
models,\footnote{%
See \cite{cmn}, \cite{ekeland2} and \cite{ekeland3}.} which will be crucial
for some of our applications (particularly the competitive IO model and the
competitive version of Rochet-Chon\'{e}). An hedonic model involves three
sets: a set $X$ of buyers (endowed with a measure $\mu $), a set $Y$ of
sellers (endowed with a measure $\nu $) and a set $Z$ of products;
intuitively, a product is defined by a finite vector of characteristics, and
buyers purchase one product (at most) in any period --- think of a car or a
house, for instance. Both buyers and sellers are price-takers, and consider
the price $P\left( z\right) $ of any product $z\in Z$ as given. A buyer with
a vector of characteristics $x$ maximizes a quasi-linear utility of the form%
\begin{equation*}
U=u\left( x,z\right) -P\left( z\right)
\end{equation*}%
while producer $y$ maximizes profit%
\begin{equation*}
\Pi =P\left( z\right) -c\left( y,z\right)
\end{equation*}%
where $c\left( y,z\right) $ is the cost, for producer $y$ to produce product 
$z$.

\begin{remark}
It is important to note that the producer's profit depends on his
characteristics, on the product's characteristics and on the price, but 
\emph{not }on the characteristics of the buyer. This `private value' aspect
will be crucial for the relationship with bidimensional adverse selection
that we discuss below.
\end{remark}

An hedonic equilibrium is defined as a measure $\alpha $ on the product set $%
X\times Y\times Z$, whose first and second marginals are $\mu $ and $\nu $,
and a function $P$ such that, for any $\left( \bar{x},\bar{y},\bar{z}\right)
\in \mathop{\rm
spt}\alpha $, then 
\begin{equation*}
\bar{z}\in \arg \max_{z}\left( u\left( x,z\right) -P\left( z\right) \right)
\cap \arg \max_{z}\left( P\left( z\right) -c\left( y,z\right) \right).
\end{equation*}%
In words, $\alpha $ represents an assignment of buyers and sellers to each
other and to products, and an equilibrium is reached if the product assigned
to $x$ (resp. $y$) maximizes $x$'s utility ($y$'s profit).

To see the link with matching models, define the pairwise surplus function 
\begin{equation}
s\left( x,y\right) =\sup_{z\in Z}u\left( x,z\right) -c\left( y,z\right) .
\label{surplus}
\end{equation}%
and consider the matching model defined by $\left(X,Y,s\right) $. Then \cite%
{cmn} prove the following results:

\begin{itemize}
\item for any equilibrium of the hedonic model, the projection of the
measure $\alpha $ on the set $X\times Y$, together with the functions%
\begin{equation*}
U\left( x\right) =\max_{z}u\left( x,z\right) -P\left( z\right) \ \text{and }%
V\left( y\right) =\max_{z}\left( P\left( z\right) -c\left( y,z\right) \right)
\end{equation*}%
form a stable matching.

\item Conversely, for any $\gamma $ that solves the matching problem, there
exist a price function $P$ that satisfies 
\begin{equation}
\inf_{y\in Y}\left\{ v\left( y,z\right) +r\left( y\right) \right\} \geq
P\left( z\right) \geq \sup_{x\in X}\left\{ u\left( x,z\right) -q\left(
x\right) \right\}  \label{squeeze}
\end{equation}%
With $\alpha \equiv \left( id_{X}\times id_{Y}\times z_{0}\right)
_{\#}\gamma $, where $z_{0}=z_{0}(x,y)\in \arg \max_{z}u(x,z)-c(y,z)$, any
such $P$ forms an equilibrium pair $\left( \alpha ,P\right) $.
\end{itemize}

\subsubsection{Testability}

Lastly, let's briefly consider the important issue of testability: what
testable restrictions (if any) are generated by the general matching
structure described above? Obviously, the answer depends on what we can
observe. Consider the simplest case, in which we only observe matching
patterns (`who marries whom'). Technically, we are now facing an inverse
problem: knowing the spaces $X$ and $Y$ and the measure $\gamma $, can we
find a surplus $s$ for which $\gamma $ is stable? This question should
however be slightly rephrased to rule out degerate solutions; for instance,
any measure is stable for the degenerate surplus $s\left( x,y\right) =0\ \
\forall x,y$.

We therefore consider the following problem: Given two spaces $X$, $Y$ and
some measure $\gamma $ on $X\times Y$, is is always possible to find a
surplus $s$ such that $\gamma $ is the \emph{unique }stable matching of the
matching problem $\left( X,Y,s\right) $?

A first remark is that if we impose enough `regularity' in the model, the
answer is positive. Specifically, let us consider the case in which the
support of the measure is born by the graph of some function $F$, and that $%
F $ is non-degenerate (in the sense that the derivative of $F$ has full rank
over the entire space). Then one can always find a surplus for which $\gamma 
$ is the unique stable matching: we just need to take $s(x,y)=-|{F}%
(x)-y|^{2}/2$. Indeed, $\gamma $ obviously maximizes the primal, optimal
transportation problem, which guarantees stability; moreover, the surplus
satisfies the twist condition, which guarantees uniqueness. The
corresponding payoffs are $u(x)=0=v(y)$.

However, non-degeneracy is crucial for this result to hold. For one thing,
if $F$ is degenerate, the twist condition does not hold, and while $\gamma $
is always stable for $s\left( x,y\right) =-|{F}(x)-y|^{2}/2$, it may not be
the unique stable matching. Moreover, and while Theorem \ref{spacelike}
implies that any stable matching for a non-degenerate surplus concentrates
on a set of dimension at most $\max (m,n)$, it is possible to find measures
supported on sets of this dimension which are \emph{not} stable for any $%
C^{2}$, \emph{non-degenerate} surplus. To see this, consider the $m=n=1$
case; let $X=Y=(0,1)\subseteq \mathbf{R}$. Nondegeneracy here simply means $%
\frac{\partial ^{2}s}{\partial x\partial y}\neq 0$, which implies either $%
\frac{\partial ^{2}s}{\partial x\partial y}>0$ everywhere (so $s$ is
super-modular) or $\frac{\partial ^{2}s}{\partial x\partial y}<0$ everywhere
(so $s$ is submodular). In these two cases, it is well known that stable
matches concentrate on monotone increasing or decreasing sets, respectively.
Therefore, any $\gamma $ concentrating on a set of dimension $\max (m,n)=1$
(for instance, a smooth curve), which is neither globally increasing nor
decreasing (for example, the curve ($y=4(x-1/2)^{2}$), cannot be stable for
any non-degenerate surplus.

\section{Matching with unequal dimensions}

We now turn special attention to the case in which the dimensions $m\geq n$
of heterogeneity on the two sides of the market are unequal. In this case,
one expects many-to-one rather than one-to-one matching. In a companion
publication, we develop a detailed mathematical theory for this situation,
focusing 
especially on the case $m > n=1$. Here we announce only the main conclusions
of that theory and a few of the underlying ideas, suppressing technical
details wherever possible to be able to move quickly on to interpretation
and applications.

When $m\ge n$, it is natural to expect that at equilibrium the subset ${F}%
^{-1}(y)\subset X \subset {\mathbf{R}}^m$ of partners which a man of type $%
y\in \mathop{\rm Dom}_0 Dv$ is indifferent to will generically have
dimension $m-n$, or equivalently, codimension $n$. We are interested in
specifying conditions under which this indifference set will in fact be a
smooth submanifold.\footnote{%
In the familiar case $n=m$ it would then consist of one (or more) isolated
points --- a single one if $s$ happens to be twisted.} Let us explore this
situation, paying particular attention to the separate roles played by the
surplus function $s(x,y)$ as opposed to the populations $\mu $ and $\nu $ in
determining these indifference sets.

\subsection{Potential indifference sets}

For any equilibrium matching $\gamma$ and payoffs $(u,v)$, we have already
seen that $(x,y) \in S \cap (X\times \mathop{\rm Dom}_0 Dv)$ implies %
\eqref{yfoc}. That is, all partner types $x \in X$ for husband $y \in %
\mathop{\rm Dom}_0 Dv$ lie in the same level set of the map $x \mapsto D_y
s(x,y)$. If we know $Dv(y)$, we can determine this level set precisely; it
depends on $\mu$ and $\nu$ as well as $s$. However, in the absence of this
knowledge it is useful to define the \emph{potential indifference sets},
which for given $y \in Y$ are merely the level sets of the map $x\in X
\mapsto D_y s(x,y)$. We can parameterize these level sets by (cotangent)
vectors $k \in {\mathbf{R}}^n$: 
\begin{equation}  \label{cotangent-parameterization}
X(y,k) := \{ x \in X \mid D_y s(x,y) = k\},
\end{equation}
or we can think of $y \in Y$ as inducing an equivalence relation between
points of $X$, under which $x$ and $\bar x \in X$ are equivalent if and only
if 
\begin{equation*}
D_y s(x,y) = D_y s(\bar x,y).
\end{equation*}%
%
%
%
%
%
%
%
%
%
%
%
%
%
%
%
Under this equivalence relation, the equivalent classes take the form %
\eqref{cotangent-parameterization}. We call these equivalence classes \emph{%
potential indifference sets}, since they represent a set of partner types
which $y \in \mathop{\rm Dom} Dv$ has the potential to be indifferent
between. The equivalence class containing a given partner type $\bar x \in X$
will also be denoted by 
\begin{equation}  \label{equivalence class}
L_{\bar x}(y) = X(y,D_y s(\bar x,y)) = \{ x \in X \mid D_y s(x,y) = D_y
s(\bar x,y)\}.
\end{equation}

A key observation concerning potential indifference sets is the following
proposition. 

\begin{definition}[Surplus degeneracy]
Given $X \subset {\mathbf{R}}^m$ and $Y \subset {\mathbf{R}}^n$, we say $s
\in C^2(X\times Y)$ \emph{degenerates} at $(x,y) \in X \times Y$ if $%
rank(D^2_{xy}s(\bar x,\bar y))<\min\{m,n\}$. Otherwise we say $s$ is \emph{%
non-degenerate} at $(\bar x,\bar y)$.
\end{definition}

\begin{proposition}[Structure of potential indifference sets]
\label{P:indifference structure} Let $s \in C^{r+1}(X \times Y)$ for some $%
r\ge 1$, where $X \subset {\mathbf{R}}^m$ and $Y \subset {\mathbf{R}}^n$
with $m \ge n$. If $s$ does not degenerate at $(\bar x,\bar y) \in X \times
Y $, then $\bar x$ admits a neighbourhood $U \subset {\mathbf{R}}^m$ such
that $L_{\bar x}(\bar y) \cap U$ coincides with the intersection of $X$ with
a $C^r$-smooth, codimension $n$ submanifold of $U$.
\end{proposition}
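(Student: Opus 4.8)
The plan is to recognize $L_{\bar x}(\bar y)$, near $\bar x$, as a regular level set of a $C^r$ map into $\mathbf{R}^n$ and then invoke the implicit function theorem. Concretely, I would fix the husband type $\bar y$ and introduce the map $G(x):=D_y s(x,\bar y)\in\mathbf{R}^n$. Since $s\in C^{r+1}(X\times Y)$ --- and, under the standing convention, is the restriction of a $C^{r+1}$ function on a neighbourhood of $X\times Y$ --- the first $y$-derivatives of $s$ are $C^r$ in $x$, so $G$ is $C^r$ on an open neighbourhood $\tilde U\subset\mathbf{R}^m$ of $\bar x$. By the defining formula \eqref{equivalence class}, $L_{\bar x}(\bar y)=\{x\in X \mid G(x)=G(\bar x)\}=G^{-1}(k)\cap X$, where $k:=D_y s(\bar x,\bar y)$.

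Next I would compute the Jacobian $DG(\bar x)$: its $(j,i)$ entry is $\partial^2 s/\partial x_i\partial y_j(\bar x,\bar y)$, so $DG(\bar x)$ is the transpose of the $m\times n$ matrix $D^2_{xy}s(\bar x,\bar y)$ and in particular has the same rank. Non-degeneracy of $s$ at $(\bar x,\bar y)$ says this rank equals $\min\{m,n\}=n$; hence $DG(\bar x):\mathbf{R}^m\to\mathbf{R}^n$ is surjective, i.e.\ $G$ is a submersion at $\bar x$ and $k$ is a regular value of $G$ near $\bar x$. The implicit function theorem (submersion normal form) then supplies a neighbourhood $U\subseteq\tilde U$ of $\bar x$ and a $C^r$ change of coordinates on $U$ straightening $G$, so that $M:=G^{-1}(k)\cap U$ is an embedded $C^r$ submanifold of $U$ of codimension $n$ (dimension $m-n$). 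Intersecting with $X$ yields $L_{\bar x}(\bar y)\cap U=M\cap X$, which is exactly the assertion.

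I do not expect a genuine obstacle here; the argument is essentially the regular-value theorem. The two points needing a little care are: (i) $X$ is not assumed open, which is precisely why the conclusion is phrased as the intersection of $X$ with a codimension-$n$ submanifold of the ambient open set $U$, rather than as a submanifold of $X$ itself; and (ii) ensuring $G$ is defined and $C^r$ on a full open neighbourhood of $\bar x$, which follows from the convention that $s$ and its derivatives are smooth (here $C^{r+1}$) on a neighbourhood of $X\times Y$. Finally, the loss of exactly one derivative (from a $C^{r+1}$ surplus to a $C^r$ manifold) is accounted for by the single $y$-differentiation used to form $G$, and the full-rank hypothesis enters only at the submersion step --- when $\mathrm{rank}\,D^2_{xy}s<n$ the constant-rank theorem still gives a submanifold, but of the wrong codimension, so non-degeneracy is essentially needed for the stated conclusion.
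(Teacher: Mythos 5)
Your proof is correct and follows essentially the same route as the paper: both treat $L_{\bar x}(\bar y)$ locally as the preimage of a regular value of the $C^r$ map $x \mapsto D_y s(x,\bar y)$, with non-degeneracy supplying the rank-$n$ (submersion) condition, and then apply the preimage/implicit function theorem to get a codimension-$n$ $C^r$ submanifold whose intersection with $X$ is the potential indifference set. The paper merely phrases the same step via a choice of coordinates making the leading $n\times n$ block of $D^2_{xy}s$ invertible and invoking the inverse function theorem, which is equivalent to your submersion argument.
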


\begin{proof}
Since $s \in C^2$, the surplus extends to a neighbourhood $U \times V$ of $%
(\bar x,\bar y)$ on which 
$s$ continues to be non-degenerate (by lower semicontinuity of the rank).
The set $\{ x \in U \mid D_y s(x,y) = D_y s(\bar x,\bar y) \}$ forms a
codimension $n$ submanifold of $U$, by the preimage theorem \cite[\S 1.4]%
{GuilleminPollack74}. More specifically, the rank condition implies that
choosing a suitable orthonormal basis for ${\mathbf{R}}^m$ yields $\det [%
\frac{\partial^2 s}{\partial x^i \partial y^j}(\bar x,\bar y)]_{1\le i,j\le
n} \ne 0$. In these coordinates, the potential indifference set is locally
parameterized as the inverse image under the $C^r$ map $x \in U \mapsto (D_y
s(x,\bar y),x_{n+1},\ldots,x_m)$ of the affine subspace $\{D_y s(\bar x,\bar
y)\} \times {\mathbf{R}}^{n-m}$. Taking $U$ and $V$ smaller if necessary,
the inverse function theorem then shows $L_{\bar x}(\bar y) \cap U $ to be $%
C^r$.
\end{proof}

Although we have stated the proposition in local form, it implies that if $%
\bar{k}=D_{y}s(\bar{x},\bar{y})$ is a \emph{regular value} of $x\in X\mapsto
D_{y}s(x,\bar{y})$ --- meaning $D_{xy}^{2}s(x,\bar{y})$ has rank $n$
throughout $L_{\bar{x}}(\bar{y})$ --- then $L_{\bar{x}}(\bar{y})=X(\bar{y},%
\bar{k})$ is the intersection of $X$ with an $m-n$ dimensional submanifold
of ${\mathbf{R}}^{m}$. Note however that this proposition says nothing about
points $(\bar{x},\bar{y})$ where $s$ degenerates, which can happen
throughout $\mathop{\rm
spt}\gamma $.

\subsection{Potential versus actual indifferences sets}

As argued above, the potential indifference sets %
\eqref{cotangent-parameterization} and \eqref{equivalence class} are
determined by the surplus function $s(x,y)$ without reference to the
populations $\mu $ and $\nu $ to be matched. On the other hand, the
indifference set actually realized by each $y\in Y$ depends on the
relationship between $\mu $, $\nu $ and $s$. This dependency is generally
complicated, as illustrated by the following example.

\begin{example}
Consider the surplus function:%
\begin{equation*}
s\left( x,y\right) =x_{1}y_{1}+x_{2}y_{2}+x_{3}y_{1}y_{2}
\end{equation*}%
where $X\subset {\mathbf{R}}^{3},Y\subset {\mathbf{R}}^{2}$. The potential
indifference sets are given, for any $k\in {{\mathbf{R}}}^{2}$, by: 
\begin{equation}
X(y,k):=\left\{ x\in X\mid 
\begin{array}{c}
x_{1}+x_{3}y_{2}=k_{1}\text{ } \\ 
\text{and} \\ 
x_{2}+x_{3}y_{1}=k_{2}%
\end{array}%
\right\} .
\end{equation}%
These are straight lines in ${\mathbf{R}}^{3}$, parallel to the vector $%
\left( 
\begin{array}{c}
y_{2}\text{ } \\ 
y_{1} \\ 
-1%
\end{array}%
\right) $. Therefore, for any given $y\in {\mathbf{R}}^{2}$, we know that
the set of spouses matched with $y$ (the indifference set corresponding to
husband $y$) will be contained in such a straight line. However, it is certainly not true
that any such line (obtained for an arbitrary choice of $k$) will be an
indifference set curve. For a given $y$, the exact equation of the
indifference set corresponding to $y$ is defined by the value of the
specific vector $k$ which is relevant for that particular $y$ --- and this
depends on the measures $\mu $ and $\nu $.
\end{example}

However, there is one case in the problem may simplify substantially: the
case of multi-to-one dimensional matching, namely $n=1$. In this case,
suppose $D_{xy}^{2}s(\cdot ,y)$ is non-vanishing (i.e.\ $\frac{\partial s}{%
\partial y}(\cdot ,y)$ takes only regular values). 
Then the potential indifference sets $X(y,k)$ are codimension $1$ in ${%
\mathbf{R}}^{m}$; that is, they are curves in ${\mathbf{R}}^{2}$, surfaces
in ${\mathbf{R}}^{3}$, and hypersurfaces in higher dimensions $m\geq 4$.
Moreover, as $k$ moves through ${\mathbf{R}}$, these potential indifference
sets sweep out more and more of the mass of $\mu $. For each $y\in Y$ there
will be some choice of $k\in {\mathbf{R}}$ for which the $\mu $ measure of $%
\{x\mid D_{y}s(x,y)\leq k\}$ exactly coincides with the $\nu $ measure of $%
(-\infty ,y]$ (assuming both measures are absolutely continuous with respect
to Lebesgue, or at least that $\mu $ concentrates no mass on hypersurfaces
and $\nu $ has no atoms). In this case the potential indifference set $%
X(y,k) $ is said to split the population proportionately at $y$, making it a
natural candidate for being the true indifference set ${F}^{-1}(y)$ to be
matched with $y$.\footnote{%
Since $k=s_{y}(x,y)$ can be recovered from any $x\in X(y,k)$ and $y$, we may
equivalently say $x$ splits the population proportionately at $y$, and vice
versa.} In the next Section, we go on to describe and contrast situations in
which this expectation is born out and leads to a complete solution from
those in which it does not.

\section{Multi-to-one dimensional matching}

We now explain a new approach to a specific class of models, largely
unexplored in either the mathematics or economics literature, but which can
often be solved explicitly with the techniques outlined below and developed
more fully in \cite{ChiapporiMcCannPass15m}. These are \emph{multi-to-one
dimensional} models, in which agents on one side of the market (say wives)
are bi-dimensional (or, potentially, higher dimensional) while agents on the
other side (husbands) are one-dimensional. Thus, we are matching a
distribution on $x=\left( x_{1},...,x_{m}\right) \in {\mathbf{R}}^{m}$ with
another on $y\in {\mathbf{R}}$. The surplus $s$ is then a function $s\left(
x_{1},...,x_{m},y\right) $ of $m+1$ real variables, and will typically be
increasing in each argument.

The crucial notion, is this setting, is that of a \emph{iso-husband curve},
defined as the indifference set of a given husband $y$, i.e. as the
submanifold of wives among which husband $y$ turns out to be indifferent
facing the given market conditions. Iso-husband curves, as we shall see,
play a key role in the construction of an explicit solution to the
matching/optimal transportation problem. In addition, a crucial property of
these curves is that they can in principle be empirically identified; see 
\cite{COQ} for a detailed discussion. In fact, it has been argued that the
theoretical properties of iso-husband curves could provide the most powerful
empirical tests of matching theory (see for instance \cite{COQ1}).

The goal is to construct from data $(s,\mu ,\nu )$ a matching function ${F}%
:X\longrightarrow Y\subset {\mathbf{R}}$, whose level sets ${F}^{-1}(y)$
constitute iso-husband curves. At the end of the preceding section we
identified a natural candidate for this indifference set: namely the
potential indifference set which divides the mass of $\mu $ in the same
ratio as $y$ divides $\nu $; whether or not these natural candidates
actually fit together to form the level sets of a function or not depends on
a subtle interaction between $\mu $, $\nu $ and $s$. When they do, we say
the model is \emph{nested}, and in that case we show that the resulting
function ${F}:X\longrightarrow Y$ produces a stable equilibrium match.


Mathematically, the simplest multi-to-one dimensional models arise from
index surpluses, which have economic motivation arising from \cite{COQ}.
These models are nested for \emph{every} choice of the distributions $\nu $
and $\mu $, as we demonstrate below. Moreover, in the following section, we
will discuss three applications of multidimensional matching theory; the
first two of these deal explicitly with multi-to-one dimensional problems.
The first model arises in the marriage market, where recent research
indicates that it is appropriate to model women using both education and
fertility and men using income only \cite{L}\cite{L2}. A second example
comes from a hedonic variant of the Rochet-Chon\'{e} screening problem.


\subsection{Constructing explicit solutions for nested data}

We now give a heuristic description of a general algorithm for constructing
the solution to the matching problem when one of the dimensions is $n=1$;
the aim is to find a solution to \eqref{CCond} above (or,
equivalently, \eqref{envelope}), which satisfies the mass balance condition $%
\nu ={F}_{\#}\mu $ and the spacelike condition in Theorem \ref{spacelike}.
In order to work, our approach requires the nestedness property mentioned
above and detailed below. These conditions are satisfied in a wide class of
multi-to-one dimensional matching problems; they are illustrated in the
theorem and examples presented below. However, except in the Spence-Mirrlees
(with $m=n=1$) and in the index and pseudo-index cases, this nestedness
depends not only on $s$, but also on $\mu $ and $\nu $.

For each fixed $y\in Y\subseteq {\mathbf{R}}$, our goal is to identify the
iso-husband (or indifference) set $\{x\in X\mid {F}(x)=y\}$ of husband type $%
y$ facing the given market conditions. When differentiability of $v$ holds
at $y$, the argument in the preceding section implies that this is contained
in one of the potential indifference sets $X(y,k)$ from %
\eqref{cotangent-parameterization}. 
Proposition \ref{P:indifference structure} indicates when this set will have
codimension $1$; it generally divides $X$ into two pieces: the sublevel set 
\begin{equation}  \label{Xsubseteq}
X_{\leq }(y,k):=\{x\in X\mid \frac{\partial s}{\partial y}(x,y)\leq k\},
\end{equation}%
and its complement $X_{>}(y,k):=X\setminus X_{\leq }(y,k)$. We denote its
strict variant by $X_{<}(y,k):=X_{\leq }(y,k)\setminus X(y,k)$.

To select the appropriate level set, 
assuming ${F}$ is differentiable, the spacelike condition implies $\frac{%
\partial {F}}{\partial x_i} (x) \frac{\partial^2 s}{\partial x_i\partial y}
\geq 0$ for each $i=1,\ldots, m$. 
Summing on $i$ suggests points $\bar x$ in the region $X_\le(y,k)$ 
get paired with points $\bar y \le y$. We therefore choose the unique level
set \textit{splitting the population proportionately} with $y$; that is, the 
$k=k(y)$ for which the $\mu$ measure of female types $X_\le(y,k)$ 
coincides with the $\nu$ measure of male types $(-\infty,y]$. We then set $%
y:={F}(x)$ for each $x$ in the set $X(y,k)$. 

Our first theorem specifies conditions under which the resulting match $%
\gamma = (id \times {F})_\#\mu$ optimizes the Kantorovich problem \eqref{MK}%
; we view it as the natural generalization of the positive assortative
matching results of \cite{Mirrlees71} \cite{Becker73} and \cite{Spence73}
from the one-dimensional to the multi-to-one dimensional setting. Unlike
their criterion, which depends only on $s$, ours relates $s$ to $\mu$ and $%
\nu$, by requiring the sublevel sets $y \in Y \mapsto X_\le(y,k(y))$
identified by the procedure above to depend monotonically on $y\in {\mathbf{R%
}}$, with the strict inclusion $X_\le(y,k(y))\subset X_<(y^{\prime
},k(y^{\prime }))$ holding whenever $\nu[(y,y^{\prime })]>0$. We say the
model $(s,\mu,\nu)$ is \emph{nested} in this case. Our situation is
naturally more complicated than theirs, 
since there is no obvious ordering of the women's types, but generally a
variety of possible orderings depending on population frequencies $\mu$ and $%
\nu$; nestedness rather asserts that the women's preferences enjoy some
degree of compatibility, in the sense that for 
$[\underline y,\bar y] \subset Y$ any wife $\bar x \in X(\bar y,k(\bar y))$
assigned to the higher type of husband has a greater willingness to pay for
variations in the qualities of husband type $\bar y$ (and similarly of
husband type $\underline y$) than does any wife $\underline x \in
X(\underline y,k(\underline y))$ assigned to the lower type of husband.




\begin{theorem}[Optimality of nested matchings]
\label{T:nested} Let $X \subset {\mathbf{R}}^m$ and $Y\subset {\mathbf{R}}$
be connected open sets equipped with Borel probability measures $\mu$ and $%
\nu$. Assume $\nu$ has no atoms and $\mu$ vanishes on each $C^1$
hypersurface. Use $s \in C^2(X\times Y)$ and $s_y =\frac{\partial s}{%
\partial y}$ to define $X_\le$, $X_<$ etc., as in \eqref{Xsubseteq}

Assume $s$ is non-degenerate, $|D_x s_y| \ne 0$, throughout ${X \times Y}$.
Then for each $y \in \overline Y$ there is a maximal interval $%
K(y)=[k^-(y),k^+(y)] \ne \emptyset$ such that $\mu[X_\le(y,k)] =\nu[%
(-\infty,y)]$ for all $k\in K(y)$. Both $k^+$ and $-k^-$ are upper
semicontinuous. Assume both maps $y \in Y \mapsto X_\le (y,k^\pm(y))$ are
non-decreasing, and moreover that $\int_y^{y^{\prime }} d\nu >0$ implies $%
X_\le (y,k^+(y)) \subseteq X_< (y^{\prime -}(y^{\prime }))$. Then $k^+=k^-$
holds $\nu$-a.e. Setting ${F}(x) = y$ for each $x \in X(y,k^+(y))$ defines a
stable match ${F}:X \longrightarrow Y$ [$\mu$-a.e.]. Moreover, $\gamma=(id
\times {F})_\#\mu$ maximizes \eqref{MK} uniquely on $\Gamma(\mu,\nu)$.
Finally, if $\mathop{\rm spt} \nu$ is connected then ${F}$ extends
continuously to $X$.
\end{theorem}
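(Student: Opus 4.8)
The plan is to verify in turn the several assertions in the theorem: (i) nonemptiness of the intervals $K(y)$ and the semicontinuity of $k^\pm$; (ii) that $k^+=k^-$ holds $\nu$-a.e.\ once the nestedness hypotheses are imposed; (iii) that $\mathcal F$ is well-defined $\mu$-a.e.\ and pushes $\mu$ to $\nu$; (iv) that $\gamma=(id\times\mathcal F)_\#\mu$ is optimal (hence stable by Theorem~\ref{equivalence}) and uniquely so; and (v) continuity of $\mathcal F$ under the connectedness hypothesis. For (i), fix $y$ and consider $\Phi_y(k):=\mu[X_\le(y,k)]$. Since $\mu$ vanishes on $C^1$ hypersurfaces and $|D_xs_y|\ne0$ makes each $X(y,k)$ such a hypersurface (by Proposition~\ref{P:indifference structure}), $\Phi_y$ is continuous and nondecreasing in $k$, ranging from near $0$ to near $1$ as $k$ sweeps $\mathbf R$; since $\nu$ has no atoms, $\nu[(-\infty,y)]=\nu[(-\infty,y]]$ is attained, and $\Phi_y^{-1}(\nu[(-\infty,y)])$ is a nonempty closed interval $K(y)=[k^-(y),k^+(y)]$. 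Upper semicontinuity of $k^+$ (and of $-k^-$) should follow from the monotonicity of $y\mapsto\nu[(-\infty,y)]$ together with the continuity of $(y,k)\mapsto\Phi_y(k)$: if $y_j\to y$ with $k^+(y_j)\to\ell$, pass to the limit in $\Phi_{y_j}(k^+(y_j))=\nu[(-\infty,y_j)]$ using right-continuity of the $\nu$-distribution function to get $\ell\le k^+(y)$.

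For (ii), this is where the nestedness hypotheses do their work. The assumption that $y\mapsto X_\le(y,k^\pm(y))$ are nondecreasing, with the strict-inclusion strengthening whenever $\nu$ charges $(y,y')$, forces the two sublevel-set-valued maps to agree except on a $\nu$-null set: at any $y$ that is not an atom-accumulation point one sandwiches $X_\le(y,k^-(y))$ between $X_<(y',k^+(y'))$ and $X_\le(y'',k^-(y''))$ for $y'<y<y''$ and lets $y',y''\to y$; the measures $\mu[X_\le(y,k^\pm(y))]$ both equal $\nu[(-\infty,y)]$, so the monotone limits coincide and $k^+(y)=k^-(y)$ off the (at most countable, hence $\nu$-null given atomlessness) exceptional set. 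Then set $k(y):=k^+(y)$ and define $\mathcal F(x)=y$ for $x\in X(y,k(y))$. Well-definedness $\mu$-a.e.: two potential indifference sets $X(y,k(y))$ and $X(y',k(y'))$ for $y<y'$ are disjoint whenever $X_\le(y,k(y))\subset X_<(y',k(y'))$, which holds $\nu\otimes\nu$-a.e.; the $\mu$-measure of the ``bad'' overlap where $k^+>k^-$ is controlled because the slabs $X_\le(y',k(y'))\setminus X_\le(y,k(y))$ have $\mu$-measure $\nu[(y,y')]$, so they partition $X$ up to a $\mu$-null set as $y$ ranges over $Y$. This simultaneously gives $\mathcal F_\#\mu=\nu$: the preimage $\mathcal F^{-1}((-\infty,y))$ is $X_<(y,k(y))$ up to $\mu$-null sets, of mass $\nu[(-\infty,y)]$.

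For (iv), I would build dual potentials certifying optimality rather than argue directly. Using \eqref{envelope}/\eqref{CCond} as a guide, define $v$ on $Y$ by $v(y):=v(y_0)+\int_{y_0}^y s_y(x_\tau,\tau)\,d\tau$ where $x_\tau$ is any point of $X(\tau,k(\tau))$ — the spacelike/nestedness structure is exactly what makes this integral independent of the choice of $x_\tau$ and makes $v$ well-defined; then set $u(x):=\max_{y\in Y}(s(x,y)-v(y))$. One must check $u(x)+v(\mathcal F(x))=s(x,\mathcal F(x))$ $\mu$-a.e., i.e.\ that $y=\mathcal F(x)$ achieves the max; the first-order condition $\partial_y s(x,y)=v'(y)=s_y(x_y,y)$ says $x$ and $x_y$ lie on the same potential indifference set, i.e.\ $x\in X(y,k(y))$, which is precisely $y=\mathcal F(x)$, and nestedness upgrades this critical point to the global maximizer by the same monotonicity that made the sublevel sets nested. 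Given such $(u,v)$ with $u\oplus v\ge s$ everywhere and equality $\gamma$-a.e., the duality discussion already in the excerpt shows $\gamma$ is optimal; uniqueness follows because the twist-type injectivity of $y\mapsto D_xs(x,y)$ implied by $|D_xs_y|\ne0$ (one dimension on the $Y$ side) forces any optimal plan to be pure and then concentrated on this same graph, as in the proof of Theorem~\ref{purity} and its Corollary. For (v), continuity of $\mathcal F$ when $\spt\nu$ is connected: the only way $\mathcal F$ could jump is across a value of $y$ carrying no $\nu$-mass in a neighborhood, i.e.\ a gap in $\spt\nu$; connectedness removes such gaps, and the strict nestedness inclusion then shows the level sets $X(y,k(y))$ vary continuously in $y$, so $\mathcal F$ extends continuously to all of $X$.

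The main obstacle is step (iv), and within it the well-definedness and correctness of the dual potential $v$ together with the verification that $\mathcal F(x)$ is the \emph{global} (not merely critical) maximizer of $y\mapsto s(x,y)-v(y)$: this is exactly the place where nestedness must be used in an essential, non-formal way, and where the companion paper's machinery is presumably invoked. The measure-theoretic bookkeeping in (iii) — controlling the $\mu$-null ``bad sets'' where $k^+\ne k^-$ or where potential indifference sets overlap — is routine but must be done carefully, leaning on the hypotheses that $\nu$ is atomless and $\mu$ charges no hypersurface.
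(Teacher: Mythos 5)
Your skeleton is the same as the paper's own (sketched) argument: existence of $K(y)$ and semicontinuity of $k^\pm$ from continuity of $h(y,k)=\mu[X_\le(y,k)]-\nu[(-\infty,y)]$, then a dual pair built by integrating the splitting level --- your $v(y)=v(y_0)+\int_{y_0}^{y}s_y(x_\tau,\tau)\,d\tau$ is literally $v'=k$ a.e., since $s_y(x_\tau,\tau)=k(\tau)$ --- with $u$ given by \eqref{v transform} and optimality certified through \eqref{duality}. Like the paper, you leave the crux (that nestedness promotes the first-order condition $v'(y)=s_y(x,y)$ to a \emph{global} maximum, so that $u(x)+v({F}(x))=s(x,{F}(x))$ holds on the graph) to the companion machinery, and you flag this honestly. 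However, two of the steps you do argue are genuinely gapped. The claim $k^+=k^-$ $\nu$-a.e.\ is circular as written: the identity $\mu[X_\le(y,k^-(y))]=\mu[X_\le(y,k^+(y))]=\nu[(-\infty,y)]$ holds for \emph{every} $y$ by the definition of $K(y)$ and carries no information about whether $k^-(y)<k^+(y)$; that situation is exactly a flat spot of $k\mapsto\mu[X_\le(y,k)]$, i.e.\ a $\mu$-null band $\{x\mid k^-(y)<s_y(x,y)\le k^+(y)\}$, and sandwiching the \emph{measures} of sublevel sets at nearby $y',y''$ again controls only measures, never the numbers $k^\pm(y)$. Your parenthetical claim that the exceptional set is at most countable is asserted, not derived, and your argument never actually uses the nestedness hypotheses (monotonicity of $y\mapsto X_\le(y,k^\pm(y))$ and the strict inclusion when $\nu[(y,y')]>0$), which is where the content must come from.

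The uniqueness step is also not sound as stated. You deduce it from ``twist-type injectivity of $y\mapsto D_xs(x,y)$ implied by $|D_xs_y|\ne0$,'' invoking Theorem~\ref{purity} and its corollary; but non-degeneracy only says this curve in ${\mathbf R}^m$ has nonvanishing velocity, hence is \emph{locally} injective --- it can return to the same value of $D_xs(x,\cdot)$ at distant $y$'s, so the twist condition is not implied (and, separately, the proof of Theorem~\ref{purity} needs $\mu$ absolutely continuous, whereas here $\mu$ is only assumed to vanish on hypersurfaces). The uniqueness in Theorem~\ref{T:nested} is instead extracted from the dual potentials themselves: every maximizer of \eqref{MK} concentrates on $\{u(x)+v(y)=s(x,y)\}$, and for $\nu$-a.e.\ $y$, where $v$ is differentiable with $v'(y)=k^+(y)=k^-(y)$, any partner $x$ of $y$ satisfies $s_y(x,y)=v'(y)$ and hence lies in $X(y,k^+(y))={F}^{-1}(y)$, which pins every optimizer down to $(id\times {F})_\#\mu$. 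That route --- together with the genuine proofs of the $\nu$-a.e.\ collapse $k^+=k^-$ and of the continuity of ${F}$ when $\spt\nu$ is connected, which you also treat only heuristically --- is what \cite{ChiapporiMcCannPass15m} supplies.
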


\begin{proof}[Idea of proof]
Non-degeneracy implies $X(y,k) := X_\le(y,k) \setminus X_<(y,k)$ is an $m-1$
dimensional $C^1$ submanifold of $X$ orthogonal to $D_x s_y(x,y) \ne 0$.
Since both $\mu$ and $\nu$ vanish on hypersurfaces, the function 
\begin{equation}  \label{h}
h(y,k):=\mu[X_\le(y,k)]-\nu[(-\infty,y)]
\end{equation}
is continuous, and for each $y \in Y$ climbs monotonically from $-\nu[%
(-\infty,y)]$ to $1- \nu[(-\infty,y)]$ with $k \in {\mathbf{R}}$. This
proves the existence of $k^\pm(y)$ and confirms the zero set of $h(y,k)$ is
closed. Thus $k^-$ is lower semicontinuous, $k^+$ is upper semicontinuous,
and by the intermediate value theorem $[k^-(y),k^+(y)]$ is non-empty.

The main strategy for the rest of the proof is to use $k^+(y)$ to construct
a Lipschitz equilibrium payoff function $v$ by solving $v^{\prime +}(y)=k^+(y)$
a.e. Together with $u$ from \eqref{v transform}, it can be shown that $(u,v)$
minimizes the dual problem \eqref{D} and $\gamma = (id \times {F})_\# \mu$
maximizes the planners problem \eqref{MK}. For details and the case $s \in
C^{1,1}$, see \cite{ChiapporiMcCannPass15m}.
\end{proof}


Note that nestedness is a property of the three-tuple $\left( s,\mu ,\nu
\right) $; that is, for most surplus functions, the model may or may not be
nested depending on the measures under consideration. In some cases,
however, the surplus function is such that the model is nested for all
measures $\left( s,\mu ,\nu \right) $. This is the case for the \emph{%
pseudo-index models} defined above. Indeed, assume that the surplus has the
form:

\begin{equation}
s\left( x_{1},x_{2},y\right) =\alpha \left( x_{1},x_{2}\right) +\sigma
\left( I\left( x_{1},x_{2}\right) ,y\right) .
\end{equation}%
Then 
\begin{equation*}
\frac{\partial s}{\partial y}(x,y)=\frac{\partial \sigma }{\partial y}\left(
I\left( x_{1},x_{2}\right) ,y\right)
\end{equation*}%
only depends on $\left( x_{1},x_{2}\right) $ through the one-dimensional
index $I\left( x\right) $. It follows that the iso-husband sets are defined
by equations of the type $I\left( x\right) =k\left( y\right) $, which do not
depend on $y$. Such curves cannot intersect as soon as $k\left( y\right) $
is strictly monotonic in $y$.\footnote{%
In the un-nested case, it may still be possible to solve the problem using
ad hoc methods. For example, in some cases it can be determined apriori that 
$\gamma $ must couple certain subsets of $X$ with certain subsets of $Y$.
Then, the method above may be applied succesfully to these subsets, even if
it fails when applied to the whole of $X$ and $Y$. We demonstrate this with
an example (see Remark \ref{uniform} below).}$^{,}$\footnote{%
A similar scheme was developed in \cite{P2}, and under strong conditions on
the marginals $\mu $ and $\nu $ and the surplus $s$, it was proven that this
yields the optimal solution.}

In \cite{ChiapporiMcCannPass15m} we show the converse is also true: a
non-degenerate surplus is pseudo-index if and only if $(s,\mu,\nu)$ is
nested for all choices of absolutely continuous population densities $\mu$
and $\nu$.


\subsection{Criteria for nestedness}

The preceding theorem illustrates the powerful implications of nestedness,
when it is present. For suitable data, the next theorem and corollaries give
characterizations of nestedness which are often simpler to check in
practice. These are based on a description of the motion of the iso-husband
set in response to changes in the husband type, which is obtained using the
theory of level set dynamics. Once again, sharper statements and detailed
proofs may be found in \cite{ChiapporiMcCannPass15m}. There the conclusion
of following theorem is also shown to be sharp, in the sense that $dk/dy$
diverges at the endpoints of $Y$ whenever the area of the iso-husband sets $%
F^{-1}(y)$ shrinks to zero. Since we intend to apply the theorem on
Lipschitz domains, we first recall the definition of transversality in that
context.

\begin{definition}[Transversality]
Recall $\overline X(y,k(y))$ intersects $\partial X \in C^1$
non-transversally if their normals coincide (or equivalently, are multiples
of each other) at some point of intersection. If $\partial X$ is merely
Lipschitz, they intersect non-transversally if the inward or outward normal
to $\overline X(y,k(y))$ is a generalized normal to $\partial X$, meaning it
can be expressed as limit of convex combinations of outward normals at
arbitrarily close points where $\partial X$ is differentiable.
\end{definition}


\begin{theorem}[Dependence of iso-husbands on husband type]
\label{T:non-nested} Let $X \subset {\mathbf{R}}^m$ and $Y\subset {\mathbf{R}%
}$ be connected Lipschitz domains, equipped with Borel probability measures $%
d\mu(x) =f(x) dx$ and $d\nu(y)=g(y)dy$ whose Lebesgue densities satisfy $%
\log f \in C^{1}(X)$ and $\log g \in C^0_{loc}(Y)$. Assume $s\in C^{3}$ and
non-degenerate throughout $\overline {X} \times Y$. Then the functions $%
k^\pm $ of Theorem~\ref{T:nested} coincide. Moreover $k:=k^\pm \in
C^{1}_{loc}(Y \setminus Z)$ outside the relatively closed set 
\begin{eqnarray}  \label{bad set Z}
Z &:=& \{ y \in Y \mid \overline X(y,k(y)) 
\mbox{ intersects $\p X$
non-transversally} \}.
\end{eqnarray}
As $y \in Y \setminus Z$ increases the outward normal velocity of $%
X_\le(y,k(y))$ at $x \in X(y,k(y))$ is given by $(k^{\prime }- s_{yy})/|D_x
s_y|$. 
\end{theorem}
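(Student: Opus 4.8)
The plan is to exploit the implicit-function structure of the equation $h(y,k)=0$ defining the level $k(y)$, combined with the classical theory of level-set motion, to obtain both the regularity $k\in C^1_{\text{loc}}(Y\setminus Z)$ and the formula for the normal velocity. First I would observe that by Proposition~\ref{P:indifference structure} and the non-degeneracy hypothesis $|D_xs_y|\ne 0$ on $\overline X\times Y$, each potential indifference set $X(y,k)=\{x\in X\mid s_y(x,y)=k\}$ is an embedded $C^2$ hypersurface (since $s\in C^3$), meeting $\partial X$ in a codimension-one set; and the signed-distance-type function $h(y,k)=\mu[X_\le(y,k)]-\nu[(-\infty,y)]$ of \eqref{h} is strictly increasing in $k$ because $f=e^{\log f}>0$ is continuous and the hypersurface $X(y,k)$ sweeps out positive $\mu$-mass as $k$ increases (here one uses the coarea formula: $\partial_k\mu[X_\le(y,k)] = \int_{X(y,k)} f/|D_xs_y|\,d\mathcal H^{m-1}>0$, which is where $\log f\in C^1$ and $s\in C^3$ enter to make this $k$-derivative continuous). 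Strict monotonicity in $k$ forces $k^-(y)=k^+(y)=:k(y)$, giving the first assertion.

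Next, to get $C^1_{\text{loc}}$ regularity of $k$ away from $Z$, I would apply the implicit function theorem to $h(y,k)=0$. The $k$-derivative of $h$ is the strictly positive coarea integral above; the $y$-derivative requires differentiating $\mu[X_\le(y,k)]$ in $y$ with $k$ held fixed, which by the transport theorem for moving domains equals $-\int_{X(y,k)} f\, V_y\, d\mathcal H^{m-1}$ where $V_y$ is the outward normal velocity of $X_\le(y,k)$ as $y$ varies with $k$ fixed, minus $g(y)$ from the $\nu$-term. Since $X(y,k)$ is a level set of $(x,y)\mapsto s_y(x,y)-k$, standard level-set calculus gives $V_y = -s_{yy}(x,y)/|D_xs_y(x,y)|$ on the interior portion of $X(y,k)$. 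The transversality hypothesis (i.e.\ $y\notin Z$) is exactly what guarantees that the boundary contribution $X(y,k)\cap\partial X$ moves continuously and contributes no hidden singular term, so that both partials of $h$ are continuous near $(y,k(y))$ and $\partial_k h\ne 0$; the implicit function theorem then yields $k\in C^1$ near $y$, with
\begin{equation*}
k'(y) = -\frac{\partial_y h}{\partial_k h} = \frac{g(y) + \int_{X(y,k)} f\, s_{yy}/|D_xs_y|\, d\mathcal H^{m-1}}{\int_{X(y,k)} f/|D_xs_y|\, d\mathcal H^{m-1}}.
\end{equation*}

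Finally, for the normal-velocity formula I would combine the two motions. The actual iso-husband set is $X(y,k(y))$, a level set of the function $\Phi(x,y):=s_y(x,y)-k(y)$; its outward normal velocity as $y$ increases is, by the chain rule for level-set motion, $-\Phi_y/|D_x\Phi| = -(s_{yy}-k')/|D_xs_y| = (k'-s_{yy})/|D_xs_y|$, which is the claimed expression. The main obstacle I anticipate is the rigorous justification of differentiating $\mu[X_\le(y,k)]$ under the measure — i.e.\ the transport/coarea bookkeeping near the Lipschitz boundary $\partial X$, and showing precisely that non-transversality is the only obstruction to continuity of $\partial_y h$. Handling the generalized-normal case at Lipschitz boundary points, and verifying that the set $Z$ where transversality fails is relatively closed (so that $Y\setminus Z$ is open and the local $C^1$ statement makes sense), will require the most care; this is presumably where the detailed arguments deferred to \cite{ChiapporiMcCannPass15m} are concentrated, and where one invokes the sharpness remark that $dk/dy$ blows up precisely when the area of $X(y,k(y))$ collapses (making $\partial_k h\to 0$).
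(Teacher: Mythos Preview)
Your proposal is correct and follows essentially the same route as the paper's own sketch: compute $h_k$ via the coarea formula and $h_y$ via level-set dynamics (outward normal velocity $-s_{yy}/|D_xs_y|$ at fixed $k$), observe $h_k>0$ forces $k^+=k^-$, invoke the implicit function theorem for $k\in C^1_{loc}$ away from $Z$, and combine the two velocities by the chain rule to get $(k'-s_{yy})/|D_xs_y|$. You have also correctly identified that the delicate points---continuity of the partial derivatives of $h$ near the Lipschitz boundary under the transversality hypothesis, and the relative closedness of $Z$---are exactly what the paper defers to the companion article \cite{ChiapporiMcCannPass15m}. (One minor slip: the transport theorem gives $\partial_y\mu[X_\le]=+\int f\,V_y\,d\mathcal H^{m-1}$ when $V_y$ is the \emph{outward} normal velocity, not minus; your final formula for $k'$ is nonetheless correct.)
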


\begin{proof}[Sketch of proof]
Under the additional regularity assumed in the theorem, we differentiate %
\eqref{h} at $y \not\in Z$ to obtain 
\begin{eqnarray}  \label{grad h}
h_k := \frac{\partial h}{\partial k} =& \displaystyle \int_{X(y,k)} f(x) 
\frac{d{\mathcal{H}}^{m-1}(x)}{|D_x s_y(x,y)|} > 0 &\mathrm{and} \cr h_y := 
\frac{\partial h}{\partial y} = & \displaystyle- g(y) - \int_{X(y,k)} f(x) 
\frac{s_{yy}(x,y)}{|D_x s_y(x,y)|} d{\mathcal{H}}^{m-1}(x),
\end{eqnarray}
where ${\mathcal{H}}^{m-1}$ is Hausdorff $m-1$ dimensional (surface)
measure. The formula for $h_k$ is a straightforward application of the
co-area formula from \cite{EvansGariepy92}; it can also be seen as a
consequence of the fundamental theorem of calculus, 
after applying the implicit function theorem to $s_y(x,y)=k$ to get $|D_x
s_y|^{-1}$ as the outward normal velocity at $x$ of $X_\le(y,k)$ when $k$ is
increased as $y$ is held fixed. The corresponding formula \eqref{grad h} for 
$h_y$ results from the fact that the analogous Lipschitz velocity is given
by $-s_{yy}/|D_x s_y|$ when $y$ is increased as $k$ is held fixed. Note that
because $y \not\in Z$ the limits from the left and the right which define
the derivatives \eqref{grad h} agree; we need not include any additional
contributions to these integrals coming from $\overline X(y,k) \cap \partial
X$ satisfying $s_y(x,y)=k$. The continous dependence of $h_k$ and $h_y$ on $%
(y,k)$ are shown in \cite{ChiapporiMcCannPass15m}, after which $k\in
C^1_{loc}$ can be inferred by applying the implicit function theorem to $%
h(y,k(y))=0$.
\end{proof}

As consequences of this description, we obtain two alternate
characterizations of nestedness in \cite{ChiapporiMcCannPass15m}:

\begin{corollary}[Dynamic criteria for nestedness]
\label{C:dynamic nestedness} Under the hypotheses of Theorem \ref%
{T:non-nested}: if the model is nested then $k^{\prime }- s_{yy} \ge 0$ for
all $y\in Y\setminus Z$ and $x \in X(y,k(y))$, with strict inequality
holding at some $x$ for each $y$. Conversely, if $Z=\emptyset$ and strict
inequality holds for all $y\in Y$ and $x \in X(y,k(y))$, then the model is
nested.
\end{corollary}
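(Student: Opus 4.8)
The plan is to exploit the formula for the outward normal velocity of $X_\le(y,k(y))$ furnished by Theorem~\ref{T:non-nested}, namely $(k'-s_{yy})/|D_x s_y|$, and to compare it against the definition of nestedness. Recall nestedness asserts that the family of sublevel sets $y\mapsto X_\le(y,k(y))$ is non-decreasing in $y$, with \emph{strict} growth (in the sense $X_\le(y,k(y))\subset X_<(y',k(y'))$) whenever $\nu[(y,y')]>0$. Since $\nu$ has a density $g$, the latter condition is equivalent to strict growth across \emph{every} nontrivial interval $(y,y')\subset Y$. The sign of $dk/dy-s_{yy}$ at a point $x\in X(y,k(y))$ is exactly the sign of the instantaneous outward normal velocity at $x$, because $|D_x s_y|>0$ throughout by non-degeneracy; so "the sublevel sets are moving outward (weakly/strictly) at $x$" translates directly into "$k'-s_{yy}\ge 0$ (resp.\ $>0$) at $x$."

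First I would prove the forward implication. Suppose the model is nested. Fix $y\in Y\setminus Z$, so that by Theorem~\ref{T:non-nested} $k$ is $C^1$ near $y$ and the normal velocity formula applies pointwise along the smooth hypersurface $X(y,k(y))$. If $k'(y)-s_{yy}(x_0,y)<0$ at some $x_0\in X(y,k(y))$, then by continuity the normal velocity is strictly negative on a relative neighbourhood of $x_0$ in $X(y,k(y))$, and for $y'$ slightly larger than $y$ the set $X_\le(y',k(y'))$ would fail to contain $X_\le(y,k(y))$ near $x_0$ --- contradicting monotonicity. Hence $k'-s_{yy}\ge 0$ everywhere on $X(y,k(y))$. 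For the strictness claim, observe that $\int_{X(y,k)}f\,(k'-s_{yy})/|D_x s_y|\,d\mathcal H^{m-1}$ equals, by the chain rule applied to $h(y,k(y))\equiv 0$ together with the formulas \eqref{grad h} for $h_k$ and $h_y$, precisely $h_k k' + h_y = 0 + (-h_y) + h_y$... more carefully: differentiating $h(y,k(y))=0$ gives $h_k k' + h_y=0$, and substituting \eqref{grad h} shows the total $\nu$-rate $g(y)$ of mass being handed up to husbands $>y$ must be matched by the net $\mu$-flux $\int_{X(y,k)} f\,(k'-s_{yy})/|D_x s_y|\,d\mathcal H^{m-1}$ out of $X_\le(y,k(y))$; since $g(y)>0$ this integral is strictly positive, forcing the integrand to be strictly positive at some $x$. (Should $g$ vanish on a subinterval one localizes the nestedness hypothesis there, or invokes continuity from neighbouring points.)

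For the converse, assume $Z=\emptyset$ and $k'-s_{yy}>0$ for all $y\in Y$ and all $x\in X(y,k(y))$. Then $k\in C^1_{loc}(Y)$ on all of $Y$ and the outward normal velocity of $X_\le(y,k(y))$ is strictly positive at every boundary point $x\in X(y,k(y))$. I would show this forces $X_\le(y,k(y))\subset X_<(y',k(y'))$ whenever $y<y'$: consider the function $\rho(y):=$ signed distance from a fixed $\bar x\in X$ to the moving hypersurface $X(y,k(y))$, or more robustly argue via the evolution of the indicator --- the hypersurfaces $X(y,k(y))$ are the level sets of the moving function $x\mapsto s_y(x,y)-k(y)$, whose $y$-derivative $s_{yy}(x,y)-k'(y)$ is strictly negative on the zero level; hence along the flow each point of the old hypersurface ends up strictly inside the new sublevel set, and no point of $X(y,k(y))$ can lie on or outside $X(y',k(y'))$. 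Compactness of $\overline{X(y,k(y))}\cap\overline X$ (using transversality, i.e.\ $Z=\emptyset$, to control behaviour at $\partial X$) upgrades the pointwise strict inequality to a uniform one on compact $y$-intervals, yielding the strict inclusion and hence nestedness.

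The main obstacle I anticipate is the passage from the \emph{infinitesimal} velocity condition to the \emph{global} ordering of the sublevel sets, particularly near $\partial X$ where $X(y,k(y))$ meets the (merely Lipschitz) boundary: one needs transversality ($Z=\emptyset$) to rule out the hypersurface "sticking" to $\partial X$ and to ensure the co-area computations in \eqref{grad h} carry no boundary terms, and one must check that the strictly positive normal velocity does not degenerate to zero in the limit as $x$ approaches $\partial X$. This is exactly where the sharpness remark following Theorem~\ref{T:non-nested} ("$dk/dy$ diverges when the iso-husband area shrinks to zero") becomes relevant, and handling it carefully is deferred to \cite{ChiapporiMcCannPass15m}; here I would give the argument under the clean hypotheses stated and point to that reference for the technical endgame.
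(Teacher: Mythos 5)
Your route is the intended one: the paper itself gives no argument for Corollary \ref{C:dynamic nestedness} (it defers the proof to the companion paper), presenting it precisely as a consequence of the normal-velocity description in Theorem \ref{T:non-nested}, which is exactly what you use. Your forward direction is fine, and your strictness step is the right mechanism: differentiating $h(y,k(y))=0$ and substituting \eqref{grad h} gives $\int_{X(y,k(y))} f\,(k'-s_{yy})/|D_x s_y|\,d{\mathcal{H}}^{m-1} = g(y)$, and since $\log g\in C^0_{loc}(Y)$ forces $g>0$ on $Y$, the integrand must be positive somewhere --- so your parenthetical worry about $g$ vanishing is moot under the stated hypotheses. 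The one place to tighten is the converse: the strict inequality $k'(y)-s_{yy}(x,y)>0$ is assumed only for $x$ \emph{on} the level set $X(y,k(y))$, so you cannot literally "flow each point of the old hypersurface forward," nor do you need the compactness/uniform-velocity upgrade or any control of the velocity near $\partial X$. The clean argument is a first-crossing comparison for $\phi(x,y):=s_y(x,y)-k(y)$ with $x$ fixed: if $\phi(x,y_0)\le 0$ but $\phi(x,y_1)\ge 0$ for some $y_1>y_0$, take the first $y_1$ at which $\phi(x,\cdot)$ returns to $0$; there $x\in X(y_1,k(y_1))$, so the hypothesis gives $\partial_y\phi(x,y_1)<0$, while the left-hand difference quotients force $\partial_y\phi(x,y_1)\ge 0$, a contradiction (here $Z=\emptyset$ is used only to guarantee $k\in C^1_{loc}(Y)$ via Theorem \ref{T:non-nested}). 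This yields $X_\le(y_0,k(y_0))\subseteq X_<(y_1,k(y_1))$ for all $y_0<y_1$, i.e.\ nestedness, pointwise and with no boundary or degeneration issues; with that substitution your proposal is correct.
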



\begin{corollary}[Unique splitting criterion for nestedness]
\label{C:unique splitting} A model $(s,\mu,\nu)$ satisfying the hypotheses
of Theorem \ref{T:non-nested} with $Z=\emptyset$ is nested if and only each $%
x \in X$ corresponds to a unique $y \in Y$ splitting the population
proportionately, i.e. which satisfies 
\begin{equation}  \label{population split}
\int_{X_\le (y, s_y(x,y))} d\mu = \int_{-\infty}^y d\nu.
\end{equation}
In this case, the stable matching is given by ${F}(x)=y$.
\end{corollary}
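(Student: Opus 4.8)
The plan is to deduce Corollary~\ref{C:unique splitting} from the dynamic criterion of Corollary~\ref{C:dynamic nestedness} together with the level-set description of Theorem~\ref{T:non-nested}. The key reformulation is this: define $h(y,k)$ by \eqref{h} and, for each fixed $x \in X$, consider the function $y \mapsto h(y, s_y(x,y))$. The condition \eqref{population split} says precisely that $y$ is a zero of this function. So the corollary asserts: when $Z=\emptyset$, the model is nested if and only if $y \mapsto h(y,s_y(x,y))$ has a unique zero for each $x \in X$.

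First I would compute the derivative of $G_x(y) := h(y,s_y(x,y))$ in $y$, using the chain rule and the formulas \eqref{grad h} for $h_k$ and $h_y$ from the sketch of Theorem~\ref{T:non-nested}. Since $\frac{d}{dy} s_y(x,y) = s_{yy}(x,y)$, one gets $G_x'(y) = h_y(y,k) + h_k(y,k)\, s_{yy}(x,y)$ evaluated at $k = s_y(x,y)$; the term $-g(y)$ in $h_y$ combines with the integral terms so that, after substituting, $G_x'(y)$ has the sign of a weighted average of $k' - s_{yy}$ over the indifference set minus a contribution — more precisely, along the curve $y \mapsto k(y)$ solving $h(y,k(y))=0$ one has $k'(y) = -h_y/h_k$, and feeding this in shows $G_x'(y)$ is comparable (via the positive factor $h_k$) to $s_{yy}(x,y) - s_{yy}$-average type quantities; the cleanest route is: on the curve $k(y)$, $G_x(k(y)\text{-level})$ vanishes, and differentiating the identity defining $k(y)$ gives the relation that lets one express the question purely in terms of the sign of $k'(y) - s_{yy}(x,y)$.

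Granting that bookkeeping, the logic closes as follows. For the forward direction, suppose the model is nested. Corollary~\ref{C:dynamic nestedness} gives $k'(y) - s_{yy}(x,y) \ge 0$ for all $y \in Y$ and $x \in X(y,k(y))$, with strictness somewhere on each iso-husband set; combined with Theorem~\ref{T:nested}, nestedness also says $F$ is well-defined $\mu$-a.e., i.e.\ almost every $x$ lies on exactly one $X(y,k(y))$. The monotonicity of the sublevel sets $y \mapsto X_\le(y,k(y))$ (strict where $\nu$ charges the interval) forces $G_x$ to be, in effect, monotone through its zero, giving uniqueness of the splitting $y$. For the converse, suppose each $x$ has a unique splitting partner $y$; I would argue the contrapositive — if the model is not nested then (since $Z = \emptyset$) Corollary~\ref{C:dynamic nestedness} fails, so $k' - s_{yy} < 0$ at some $(y_0, x_0)$ with $x_0 \in X(y_0, k(y_0))$, i.e.\ $G_{x_0}'$ changes sign, or the sublevel-set inclusion $X_\le(y,k(y)) \subseteq X_<(y',k(y'))$ fails for some $y<y'$ with $\nu[(y,y')]>0$. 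Either way one produces a point $x$ lying on $X(y,k(y))$ for two distinct values of $y$ (the failure of strict monotonic inclusion means some $x$ is in $X_\le(y',k(y'))$ but not in the interior, hence on the boundary $X(y',k(y'))$, while also on $X(y,k(y))$), contradicting uniqueness of splitting. Finally the last sentence, ${F}(x) = y$, is immediate from Theorem~\ref{T:nested}, since nestedness there yields exactly the matching function sending $x \in X(y,k^+(y))$ to $y$, and $k^\pm$ coincide under the present hypotheses.

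The main obstacle I anticipate is the converse direction: translating ``not nested'' into ``some $x$ splits proportionately at two husband types.'' Nestedness as defined in Theorem~\ref{T:nested} is a statement about monotone inclusion of the whole family of sublevel sets, and its negation has two flavors (a local sign failure of $k' - s_{yy}$, handled by Corollary~\ref{C:dynamic nestedness} when $Z=\emptyset$, versus a global failure of strict inclusion even when the derivative condition holds weakly). I would need to check carefully that in \emph{both} flavors the offending configuration genuinely yields a point $x$ on two distinct iso-husband sets $X(y,k(y))$ and $X(y',k(y'))$ — this uses that $X(y,k(y))$ is the topological boundary of $X_\le(y,k(y))$ within $X$ (from non-degeneracy, as in the proof sketch of Theorem~\ref{T:nested}), so a failure of \emph{strict} nested inclusion places a point on such a boundary for two parameter values. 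The forward direction and the chain-rule computation are essentially routine given the machinery already assembled; the delicacy is entirely in the equivalence of the two descriptions of non-nestedness and in handling the a.e.\ qualifiers so that the ``unique $y$ for each $x \in X$'' statement (note: for \emph{each} $x$, not just $\mu$-a.e.) is justified by the assumed regularity $\log f \in C^1$, $s \in C^3$ and $Z = \emptyset$.
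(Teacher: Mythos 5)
You should note at the outset that this paper does not actually contain a proof of Corollary \ref{C:unique splitting}: both corollaries following Theorem \ref{T:non-nested} are stated as consequences established in the companion paper \cite{ChiapporiMcCannPass15m}, so your argument has to stand on its own. Much of your setup is sound: under the hypotheses of Theorem \ref{T:non-nested} the splitting condition \eqref{population split} is equivalent to $s_y(x,y)=k(y)$, i.e.\ $x\in X(y,k(y))$ (because $k^-=k^+$), and your identity $G_x'(y)=h_k\,\bigl(s_{yy}(x,y)-k'(y)\bigr)$ for $G_x(y):=h(y,s_y(x,y))$ is correct \emph{at zeros of $G_x$} (elsewhere $k'=-h_y/h_k$ is evaluated at $(y,k(y))\ne(y,s_y(x,y))$). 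The forward direction is in fact easier than your route through Corollary \ref{C:dynamic nestedness}: since $\log g\in C^0_{loc}(Y)$ forces $g>0$, every interval $(y,y')$ is charged by $\nu$, so nestedness gives the strict inclusion $X_{\le}(y,k(y))\subset X_<(y',k(y'))$ for all $y<y'$, and a wife lying on both $X(y,k(y))$ and $X(y',k(y'))$ is immediately contradictory. Leaning on the dynamic criterion here is actually risky, since its forward conclusion is only the non-strict inequality $k'-s_{yy}\ge 0$ at each matched pair, which does not by itself make $G_x$ ``monotone through its zero''.

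The genuine gap is in the converse, in two places. First, your key step -- ``the failure of strict monotonic inclusion means some $x$ is in $X_\le(y',k(y'))$ but not in the interior, hence on the boundary $X(y',k(y'))$, while also on $X(y,k(y))$'' -- is false as stated: a violating point satisfies only $s_y(x,y)\le k(y)$ and $s_y(x,y')\ge k(y')$, i.e.\ $G_x(y)\le 0\le G_x(y')$, and need lie on neither indifference set; the intermediate value theorem then produces \emph{one} splitting husband in $[y,y']$, not the two needed to contradict uniqueness. (Similarly, negating nestedness through Corollary \ref{C:dynamic nestedness} yields only $k'-s_{yy}\le 0$ somewhere, not $<0$, and in the local flavour of your case analysis a sign change of $G_{x_0}$ at $y_0$ again gives only one crossing.) Second, to manufacture the second splitting husband -- and even to prove, in the forward direction, that \emph{every} $x\in X$ (not merely $\mu$-a.e.\ $x$) admits at least one -- you need endpoint control of $G_x$: for each fixed $x$, $G_x>0$ for $y$ near $\inf Y$ and $G_x<0$ for $y$ near $\sup Y$, i.e.\ $x$ lies strictly outside $X_\le(y,k(y))$ for small $y$ and strictly inside for large $y$. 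The mass identities $\mu[X_\le(y,k(y))]=\nu[(-\infty,y)]\to 0,1$ do not automatically give this pointwise statement; one must use non-degeneracy (a half-ball of definite $\mu$-mass on each side of the level surface through $x$), uniformly as $y$ approaches the ends of $Y$, where the stated hypotheses ($s\in C^3(\overline X\times Y)$, not $\overline Y$) make the argument delicate. This missing lemma is precisely where the work lies, and without it neither branch of your contrapositive closes.
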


The first corollary states that the model is nested if and only if all
iso-husband sets move outward as $y$ is increased. Besides proving useful in
the examples below, the second corollary shows nestedness is
methodologically essential: without it, ${F}$ fails to be well-defined
(unless proporitionate population splitting were to be augmented with some
further criteria).



\subsection{Smoothness of payoffs and matchings}

Finally, we are able to address the questions of smoothness  of the
matching function ${F}:X \longrightarrow Y$ in the nested case. In general
dimension, this is a notoriously subtle and challenging question \cite{V2}.
For $n=m>1$ a fairly satisfactory regularity theory has been developed
following work of \cite{mtw}. But for $m>n=1$ little is known, outside of
the pseudo-index case \cite{P2}.



Assuming transversality ($Z =\emptyset$), Theorems \ref{T:nested} and \ref%
{T:non-nested} give conditions 
under which $F$ is continuous and $k=dv/dk \in C^1_{loc}$ on the interiors
of their respective domains. Recalling $v^{\prime }({F}(x)) = s_y(x,{F}(x))$
from \eqref{foc}, differentiation yields 
\begin{equation}  \label{distributional map gradient}
(k^{\prime }({F}(x))-s_{yy}(x,{F}(x)))D{F}(x) = D_x s_y(x,{F}(x)).
\end{equation}
Thus we immediately see we can bootstrap continuous differentiability of the
matching function $F \in C^1$ from continuity $F\in C^0$, wherever the
normal velocity $k^{\prime }- s_{yy}$ of the iso-husband set of $x$ is \emph{%
strictly} positive. Even assuming $s \in C^\infty$, to get more smoothness
for ${F}$ from this identity, we need more smoothness for $k$, or
equivalently $v$.

Conditions guaranteeing $v \in C^{r,1}_{loc}(Y^0)$ (i.e.\ $r$ times
continuously differentiable, with Lipschitz derivatives) for any integer $%
r\ge 1$ are provided in \cite{ChiapporiMcCannPass15m}. The overall strategy
there is to use an induction on $r$ to extract additional smoothness of the
function $h$ from \eqref{h}, starting with that provided by Theorems \ref%
{T:nested} and \ref{T:non-nested}. Since $h(y,k(y))=0$, this smoothness is
transferred to $k=dv/dy$ using the implicit function theorem. To
differentiate expressions like \eqref{grad h}, we first use a suitably
general form of the divergence theorem to rewrite them as 
\begin{eqnarray*}
h_k &=& \displaystyle \int_{X_\le(y,k)} \nabla_X \cdot V d^m x - \int_{%
\overline{X_\le(y,k)} \cap \partial X} V \cdot \hat n_X d{\mathcal{H}}^{m-1}
\\
h_y &=& \displaystyle- g(y) - \displaystyle \int_{X_\le(y,k)} \nabla_X \cdot
(s_{yy} V) d^m x + \int_{\overline{X_\le(y,k)} \cap \partial X} s_{yy} V
\cdot \hat n_X d{\mathcal{H}}^{m-1}
\end{eqnarray*}
where $V(x)= fs_{yy} \frac{D_xs_y}{|D_xs_y|^2} \Big|_{y={F}(x)}$. Their
derivatives are then given as integrals along the moving interfaces,
weighted by normal velocities as in Theorem \ref{T:non-nested} and its
proof. These in turn must be controlled by appropriate assumptions and a
delicately structured inductive hypothesis.

Finally we note that smoothness often holds for the one-dimensional husbands'  payoff function $v(y)$  but not for the multi-dimensional wives' payoff $u(x)$ or matching function $F(x)$ \cite{ChiapporiMcCannPass15m}.
\subsection{Surplus identification in the nested case}

Assume that we can observe iso-husband sets in a multiple market setting;
what does it tell us about the surplus? We now give a precise answer to that
question. As already noted, if we only observe matching patterns, then the
surplus $s$ can be identified at best up to an additive function of $x$ and
an additive function of $y$. This, however, is not the only flexibility we
have in defining $s$. To see why, remember that an arbitrary iso-husband
set, with an equation of the form $y=F\left( x\right) $, lies in a level set 
\begin{equation*}
\frac{\partial s\left( x,F\left( x\right) \right) }{\partial y}=k
\end{equation*}%
for some constant $k=v^{\prime }(y)$. Knowing the map $F$ for a single pair $%
(\mu,\nu)$ tells the direction (but not the magnitude) of $D_x s_y$ along
the graph of $F$: assuming enough smoothness 
it is given by \eqref{distributional map gradient}. To get the magnitude we
need to know the husband's marginal share $v^{\prime \prime }$ of the payoff
also.

Although maps $F$ will not generally exist for all choices of pairs $%
(\mu,\nu)$ (counterexamples are given in \cite{cmn}), if we know the map $F$
corresponding to enough choices of $(\mu,\nu)$, then by a suitable choice we
can make the graph of $F$ pass through any point $(x^{\prime },y^{\prime })$
that we choose. Lemma \ref{L:local match} below shows that this can be done
with sufficient smoothness for \eqref{distributional map gradient} to hold,
and that it costs no generality to take $\mu$ uniform on a small ball around 
$x$. 
In this way, we learn the direction and sign (but not the magnitude, unless
we also know the marginal payoffs $v^{\prime }$) of $D_x s_y$ globally. 

If we also know the payoffs we can integrate $D_x s_y$ to find $s(x,y)$, up
to an arbitrary additive function $f_0(x) + g_0(y)$ (the constants of
integration). On the other hand, without knowing the payoffs, the direction
of $D_x s_y$ globally is enough to determine the level sets of $s_y(\cdot,
y) $ for each $y$.

Now, the set of continuous functions with the same level sets as a given
function is exactly the set of monotonic transforms of that function. In
other words, a function $G(\cdot )$ has the same level sets as $\frac{%
\partial s}{\partial y}(\cdot ,y)$ if and only if:%
\begin{equation*}
G\left( x\right) =H\left( \frac{\partial s\left( x,y\right) }{\partial y}%
\right)
\end{equation*}%
for some monotonic $H$, possibly depending on $y$. Fixing $y_{0}\in Y$, we
conclude that if $\bar{s}$ is the surplus generating the given iso-husband
sets, then another non-degenerate surplus $s$ generates the same iso-husband
sets if and only if there exists a function $H(z,y)$ with $H_{z}>0$ such
that:%
\begin{equation*}
s\left( x,y\right) =s\left( x,y_{0}\right) +\int_{y_{0}}^{y}H\left( \frac{%
\partial \bar{s}\left( x,t\right) }{\partial y},t\right) dt.
\end{equation*}

This conclusion is implied by the following lemma.

\begin{lemma}[Any couple can be smoothly matched]
\label{L:local match} Fix open sets $X \subset {\mathbf{R}}^m$, $Y \subset {%
\mathbf{R}}$ and $s \in C^{r+1}(X \times Y)$ non-degenerate, with $r \ge 1$.
Given $(x^{\prime },y^{\prime }) \in X \times Y$, and arbitrary small
neighbourhoods $U$ of $x^{\prime }$ and $V$ of $y^{\prime }$, we can find an
atomless probability measure $\nu$ on $V$ such the stable match ${F}\in
C^r(\overline U)$ between $\nu$ and the uniform measure $\mu$ on $U$
satisfies $y^{\prime }={F}(x^{\prime })$, and moreoever: %
\eqref{distributional map gradient} holds for all $x \in \overline U$, and
the husband's payoff $v$ is a quadratic function on $Y$.
\end{lemma}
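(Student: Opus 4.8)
The idea is to engineer $\nu$ so that the
candidate matching constructed by proportionate population splitting in
Theorem~\ref{T:nested} is exactly the level-set foliation of an
explicitly chosen smooth map $F$ through $(x',y')$, with the husband's
payoff quadratic. The starting observation is that the construction in
Theorems~\ref{T:nested} and~\ref{T:non-nested} is driven by the ODE
$v'(y)=k(y)$ together with the splitting relation $h(y,k(y))=0$. If we
instead \emph{prescribe} $v$ to be quadratic, say $v(y) = a + b(y-y') +
\tfrac{c}{2}(y-y')^2$ with $c>0$ to be chosen, then $k(y) := v'(y) = b +
c(y-y')$ is affine, and the iso-husband set of type $y$ must be the
potential indifference set $X(y,k(y)) = \{x \in U \mid s_y(x,y) =
b+c(y-y')\}$. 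By Proposition~\ref{P:indifference structure} and
non-degeneracy of $s$, for $U,V$ small these are $C^r$ codimension-one
submanifolds depending $C^{r}$-smoothly on $y$, and we may choose $b =
s_y(x',y')$ so that the leaf through $y'$ passes through $x'$; I would
also invoke the non-degeneracy $|D_x s_y| \neq 0$ to see, via the
implicit function theorem, that for $U,V$ small enough these leaves
foliate $U$, i.e.\ $F(x) := $ (the unique $y$ with $s_y(x,y) =
b+c(y-y'))$ is a well-defined $C^r$ map on $\overline U$ (shrinking $U$
so its closure stays inside the foliated region). Differentiating
$s_y(x,F(x)) = b + c(F(x)-y')$ in $x$ gives precisely
\eqref{distributional map gradient} with $k' = c$, so that identity
holds on all of $\overline U$ by construction.

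\textbf{Constructing $\nu$ and verifying nestedness.} With $\mu$ the
uniform (normalized Lebesgue) measure on $U$ and $F$ as above, I would
simply \emph{define} $\nu := F_\# \mu$; this is a Borel probability
measure supported in $F(U) \subseteq V$. It is atomless because each
leaf $F^{-1}(y)$ is a hypersurface and $\mu$ (being absolutely
continuous) vanishes on hypersurfaces; one should check $\nu$ has a
density by the co-area formula (as in \eqref{grad h}), namely $g(y) =
\int_{X(y,k(y))} f \,|D_x s_y|^{-1}\, d\mathcal{H}^{m-1}$ divided by the
appropriate Jacobian factor coming from $dk/dy = c$, which is continuous
and strictly positive for $U,V$ small by compactness; so the hypotheses
of Theorems~\ref{T:nested} and~\ref{T:non-nested} are met (transversality
$Z = \emptyset$ holds automatically once $U$ is chosen small enough that
$\overline{X(y,k(y))}$ meets $\partial U$ transversally, which can be
arranged since the leaves are close to a fixed affine hyperplane and
$\partial U$ can be taken to be, e.g., a sphere). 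Now by
Corollary~\ref{C:dynamic nestedness} the model $(s,\mu,\nu)$ is nested
precisely when $k' - s_{yy} \geq 0$ along the leaves with strict
inequality somewhere for each $y$; since $k' = c$ is at our disposal and
$s_{yy}$ is bounded on the compact set $\overline{U}\times \overline V$,
choosing $c > \sup |s_{yy}|$ makes $k' - s_{yy} > 0$ everywhere, so the
model is (strictly) nested. Theorem~\ref{T:nested} then certifies that
the stable match between $\mu$ and $\nu$ is exactly $F$, and
Theorem~\ref{T:non-nested} (with the smoothness $\log f \in C^1$ trivial
for uniform $\mu$, and $\log g \in C^0_{loc}$ from the density formula
above, and $s \in C^{r+1} \subseteq C^3$ after possibly also assuming
$r \geq 2$ or arguing the $C^1$ case separately) gives $k \in
C^1_{loc}$, whence $F \in C^r(\overline U)$ via
\eqref{distributional map gradient} and a bootstrap. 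Finally $v$ is
quadratic by fiat, and $y' = F(x')$ by the choice of $b$.

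\textbf{Main obstacle.} The delicate point is the simultaneous
fulfilment of \emph{all} the regularity hypotheses of
Theorem~\ref{T:non-nested} for the \emph{manufactured} measure $\nu =
F_\#\mu$: one must show the induced density $g$ is genuinely in
$C^0_{loc}(V)$ (and, for the promised $F \in C^r$, enough higher
regularity must be extracted), which requires the co-area/divergence-theorem
computation of $g$ in terms of integrals over the leaves to be carried
out carefully and shown to inherit smoothness from $s$, $f$ and the
foliation --- essentially the inductive bootstrap sketched after
Theorem~\ref{T:non-nested}, now run in reverse. A secondary technical
nuisance is choosing $U$ (its shape, not just its size) so that
transversality $Z = \emptyset$ holds and so that $\overline U$ lies
inside the region where the leaves genuinely foliate; I expect a round
ball of small radius to work once $c$ is fixed, because then every leaf
is a small $C^r$ perturbation of a fixed affine hyperplane through $x'$.
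Modulo these verifications --- all of which are of the type already
handled in \cite{ChiapporiMcCannPass15m} --- the construction is
otherwise a direct application of the nested-matching machinery with the
roles of ``given $\nu$, find $F$'' and ``given $F$, find $\nu$''
interchanged.
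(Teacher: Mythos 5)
Your construction of the candidate match is the same as the paper's: prescribe a quadratic payoff $v$ with $v''=c>\sup s_{yy}$ and $v'(y')=s_y(x',y')$, define $F(x)$ as the unique root of the first-order condition $s_y(x,y)=v'(y)$ (unique because $t\mapsto s_y(x,t)-v'(t)$ is strictly decreasing), get $F\in C^r(\overline U)$ and \eqref{distributional map gradient} from the implicit function theorem, and set $\nu:=F_\#\mu$. Where you diverge is the verification that $F$ is in fact the stable match between $\mu$ and $\nu$: you route this through Corollary \ref{C:dynamic nestedness}, Theorem \ref{T:non-nested} and Theorem \ref{T:nested}, and this is where your argument has a genuine gap. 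Theorem \ref{T:non-nested} and the dynamic nestedness criterion require $s\in C^3$, a Lipschitz domain, $\log g\in C^0_{loc}$ for the induced density $g$ of $\nu$, and transversality $Z=\emptyset$; none of these is granted by the lemma (which allows $r=1$, i.e.\ $s\in C^2$ only), and the continuity and positivity of the manufactured density $g$, as well as the transversality of the leaves with $\partial U$, are exactly the points you defer as the ``main obstacle.'' Moreover your final appeal to Theorem \ref{T:non-nested} to recover $F\in C^r$ by bootstrap is superfluous (you already have $F\in C^r$ by construction), so the heavy machinery is being imported solely to certify stability — the one thing it cannot deliver under the lemma's hypotheses without the missing verifications.

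The repair is to drop that machinery. Either note that nestedness can be checked directly without any dynamic criterion: since $t\mapsto s_y(x,t)-v'(t)$ is strictly decreasing, $X_\le(y,v'(y))=\{x\in U\mid F(x)\le y\}$, so the sublevel sets are trivially monotone (indeed $X_\le(y,v'(y))\subseteq X_<(y',v'(y'))$ for all $y<y'$), and then only Theorem \ref{T:nested} is needed, whose remaining hypotheses ($\nu$ atomless because $F^{-1}(y)$ is a $C^r$ hypersurface, $\mu$ uniform) are immediate. Or, better and as the paper does, verify stability directly from duality: with $u(x):=\max_{y\in\overline Y}\bigl(s(x,y)-v(y)\bigr)$, the pair $(u,v)$ satisfies the stability constraint \eqref{utilities} everywhere, with equality at $(x,F(x))$ by construction; integrating $u(x)+v(F(x))=s(x,F(x))$ against $\mu$ shows $\gamma=(id\times F)_\#\mu$ attains the maximum and $(u,v)$ the minimum in \eqref{duality}, hence $\gamma$ is stable. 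This direct argument needs no density regularity, no transversality, and no restriction beyond $s\in C^{r+1}$ with $r\ge1$, which is why the paper's proof closes the case $r=1$ that your route leaves open.
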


\begin{proof}
Since the lemma concerns only the behaviour of $s$ near $(x^{\prime
},y^{\prime })$, it costs no generality to assume $X$ and $Y$ bounded and $s
\in C^{r+1}(\overline {X} \times \overline Y)$. Take $v(y)$ to be a
quadratic function of $y$ satisfying 
\begin{equation}  \label{siconcavity}
v^{\prime \prime }(y) = const > \max_{(x,y) \in \overline{X} \times 
\overline{Y}} s_{yy}(x,y)
\end{equation}
and $v^{\prime }(y^{\prime }) = s_y(x^{\prime },y^{\prime })$. Then strict
concavity shows for each $x \in \overline X$, 
\begin{equation}  \label{si transform}
u(x) = \max_{y \in \overline Y} s(x,y) - v(y)
\end{equation}
is uniquely attained by some $y \in \overline Y$, denoted $y={F}(x)$. Strict
concavity also shows that if 
\begin{equation}  \label{sifoc}
s_y(x,y_0) - v^{\prime }(y_0) = 0
\end{equation}
for some $y_0 \in \overline Y$ then $y_0 ={F}(x)$. In particular, $y^{\prime
}={F}(x^{\prime })$, so noting \eqref{siconcavity}, the implicit function
theorem shows $y_0(x)={F}(x)$ is a $C^r$ solution to \eqref{sifoc} on some
neighbourhood $\overline U$ of $x^{\prime }$. Differentiating \eqref{sifoc}
shows \eqref{distributional map gradient} holds on $\overline U$, so
non-degeneracy of $s$ implies $D{F}(x)$ is non-vanishing. 
Now we can take $U$ and $V = {F}(U)$ to be as small as we please around $%
x^{\prime }$ and $y^{\prime }$. 
We claim ${F}$ is the stable match between the uniform measure $\mu$ on $U$,
and its image $\nu := {F}_\#\mu$, while $u$ and $v$ above are the
corresponding payoffs. The definition \eqref{si transform} shows $(u,v)$ to
be stable \eqref{utilities}, hence integrating $u(x) +v({F}(x)) = s(x,{F}%
(x)) $ against $\mu$ shows $\gamma= (id \times F)_\#\mu$ attains the maximum
and $(u,v)$ attain the minimum in \eqref{duality}. Finally, $\nu$ is
atomless since $DF \ne 0$ shows $F^{-1}(y)$ to be a $C^r$ hypersurface in $%
\bar U$, hence $\mu$ negligible, by the implicit function theorem once more.
\end{proof}

\begin{remark}[Any couple has a strongly nested matching]
If $r \ge 2$ in the lemma above, then taking $U$ smaller if necessary, so $%
\partial U$ intersects each level set of $F$ transversally (or at least ${%
\mathcal{H}}^{m-1}$ negligibly), the co-area formula can be used as in \cite%
{ChiapporiMcCannPass15m} to show $d\nu(y) =g(y) dy$ has a density satisfying 
$\log g \in C_{loc}(V)$. Replacing $(X,Y)$ by $(U,V)$, Corollary \ref%
{C:dynamic nestedness} then shows the model is nested, since $v^{\prime
\prime }(x) - s_{yy}(x,{F}(x))>0$.
\end{remark}



\section{Applications}

We now consider three applications of our framework.

\subsection{Application 1: Income and fertility}

Our first application considers a model introduced in \cite{L} and \cite{L2}%
. The main question relates to the impact of marriage market considerations
on women's decision to acquire higher education (and more precisely a post
graduate degree). The trade-off Low considers is between human capital and
fertility (what she calls `reproductive capital'). By engaging in post
graduate studies, a woman increases her human capital, which boosts her
future income (among other things). Most of the time, however, a post
graduate degree will require postponing the birth of children to a later
stage of her life, when her fertility may have declined. Husbands, in Low's
model, are interested in both the income of potential spouses and their
fertility; the interaction between these two attributes, and their
consequences on marital patterns are precisely what Low investigates.

In what follows, we generalize Low's model in two directions. First, while
Low assumes that fertility can take only two values (`high' for younger
women and `low' for older, more educated ones), we allow for general
distributions of the two characteristics; in particular, we consider various
correlation patterns between them (from independence to negative
correlation). Second, the preferences we consider are slightly more general
than Low's, in the sense that we allow the presence of children to \emph{%
decrease} utility when the family is poor; in other words, this is a model
in which birth control technologies are either absent or largely imperfect
(several examples can obviously be found in the developing world or in the
history of Western societies). The second feature drastically changes the
qualitative properties of the stable matching; specifically, the model is
nested for some measures but not for others. Moreover, for specific measures
(uniform in our case), both the nested and the unnested cases can be
explicitly solved, which enable us to compare them in a systematic way.

\subsubsection{The model}

\label{incomefertility}Men and women are parameterized by subsets $Y\subset {%
\mathbf{R}}$ and $X\subset {\mathbf{R}}^{2}$, respectively, where

\begin{itemize}
\item $y \in Y$ represents the husband's income,

\item $x_{1}=p$ represents the wife's fertility (probability of having a
child)

\item $x_{2}=x$ is the wife's income.
\end{itemize}

All couples with children receive some lump sum benefit $B$.\footnote{%
A similar example could readily be constructed with either means-tested
benefits or tax rebates instead of a lump sum payment; the only difference
being that the benefit would then be decreasing or increasing with the
couple's income.}

Let $q_{i}$ denote private expenditures of individual $i$ and $Q$ denote
expenditures on children. If the couple has children, their preferences take
the form%
\begin{eqnarray*}
u_{m}\left( q_{m},Q\right) &=&q_{m}\left( Q+1/2\right) \\
u_{f}\left( q_{f},Q\right) &=&q_{f}\left( Q+1/2\right)
\end{eqnarray*}%
whereas without children they are%
\begin{equation*}
u_{g}\left( q_{g}\right) =q_{g},\ \ g=m,f.
\end{equation*}

Suppose man $y$ marries woman $(x,p)$. Then, with probability $p$, they have
a child and maximize their combined utility $\left( q_{m}+q_{f}\right)
\left( Q+1/2\right) $ under the budget constraint: 
\begin{equation*}
q_{m}+q_{f}+Q=x+y+B
\end{equation*}%
If we assume $x+y\geq 1/2-B$, the solution to this maximization problem is 
\begin{equation*}
Q+1/2=q_{m}+q_{f}=\frac{x+y+B+1/2}{2}
\end{equation*}%
generating a total utility equal to $\left( x+y+B+1/2\right) ^{2}/4$. With
probability $1-p$, they do not have a child, and 
\begin{equation*}
u_{m}+u_{f}=q_{m}+q_{f}=x+y.
\end{equation*}

The total utility is therefore equal to the expected value of these two
possible outcomes:%
\begin{equation*}
s\left( p,x,y\right) =p\frac{\left( x+y+B+1/2\right) ^{2}}{4}+\left(
1-p\right) \left( x+y\right) .
\end{equation*}

In what follows, for the sake of simplicity we set the parameter $B$ to $1/2$%
; the solution is therefore defined for all non negative $x$ and $y$, and
the surplus is%
\begin{equation*}
s\left( p,x,y\right) =p\frac{\left( x+y+1\right) ^{2}}{4}+\left( 1-p\right)
\left( x+y\right) .
\end{equation*}

A particular feature of this model is that if parents are poor enough (when $%
x+y<1$, so that $Q<1/2)$ their utility with children is less than without -
implying that no efficient birth control device is available. Such a
situation can simply be ruled out by assuming that $x+y\geq 1$ for all
couples; alternatively, we may consider cases in which some couples are
`poor' ($x+y<1$) whereas others are `wealthy' ($x+y\geq 1$). We consider
both cases in our Examples 1 and 2 respectively; interestingly enough, the
mathematical properties are quite different in the two settings, since one
is nested whereas the other is not.

\subsubsection{Solution}

For this surplus, equation \eqref{envelope} becomes:%
\begin{equation}
(x+y-1)p=K(y).  \label{gen_inc_fert_sol}
\end{equation}%
where $K(y)=2[\frac{d v(y)}{d y}-1]$. Note, in particular, that if $K\left( 
\bar{y}\right) =0$ for some $\bar{y}$ then all women with $\bar{x}=1-\bar{y}$
marry $\bar{y}$ irrespective of their $p$. In the explicit examples that
follow, we will use the proportionate splitting condition to pin down $%
K\left( y\right) $. For now, we note two general properties: 

\begin{itemize}
\item the iso-husband set of females $\left( p,x\right) $ for husband $y$ is
given by the potential indifference curve 
\begin{equation*}
x=1-y+\frac{K\left( y\right) }{p}.
\end{equation*}%
In the $\left( p,x\right) $ plane, this curve is decreasing if and only if $%
K\left( y\right) >0$. It intersects lines of constant $p$ and $x$
transversally, as well as the northwest and southeast corners, Since the
other two corners match with the endpoints of $Y$, this 
will imply $Z$ is empty for the rectangular domains considered 
in Examples 1 and 2.


\item In addition, while%
\begin{equation}
\frac{\partial ^{2}s}{\partial x\partial y}=\frac{p}{2}\geq 0
\label{Low sxy}
\end{equation}%
\label{Low spy}we have that%
\begin{equation}
\frac{\partial ^{2}s}{\partial p\partial y}=\frac{x+y-1}{2}.
\end{equation}%
If $x+y\geq 1$, by \eqref{CCond} we expect the level sets of ${F}$ to be
decreasing curves in the $(x,p)$ plane (that is, we expect $K(y)>0$),
meaning husbands face a trade-off of the income versus the fertility of
their spouse. In the opposite case $x+y<1$, these iso-husband curves may be
increasing; we shall see an illustration in our second example.
\end{itemize}

Depending on the measures $\mu $ and $\nu $, the problem may or may not
admit a closed form solution. In what follows, we consider three examples.
First, we provide a complete resolution when the distributions are uniform
on $[0,1]\times \lbrack 1/2,1]$ and $[1/2,1]$, respectively. As a second
example, we consider uniform distributions on $[0,1]\times \lbrack 0,1]$ and 
$[0,1]$.

Lastly, we again take husbands to be uniformly distributed on $[1/2,1]$, but
take the distribution of wives to be uniform on $[1/2,1]\times \lbrack
1/2,3/4]\cup \lbrack 0,1/2]\times \lbrack 3/4,1]$. This distribution is
chosen to reflect an anticorrelation due to age between fertility and
income. Fertility is certainly negatively correlated with age; on the other
hand, women who pursue higher education tend to enter the marriage market
later in life, and so we expect age and education to exhibit a positive
correlation.

\subsubsection{Example 1: uniform and `large' incomes}


We start with a benchmark case in which the distributions $\mu $ and $\nu $
are uniform on $[0,1]\times \lbrack 1/2,1]$ and $[1/2,1]$ respectively.
Although the surplus degenerates at one corner $(p,x,y)=(0,\frac12,\frac12)$
of $X \times Y$, Proposition \ref{P:optimal match} shows this example is
nested, in view of Corollary \ref{C:unique splitting}. Moreover, the optimal
matching function ${F}(x,p)$ can be solved for almost explicitly; we have
the following result, proved in Appendix \ref{A:Verification of Example 1};
recall that $y$ splits the population proportionately at $(p,x)$ if 
\begin{eqnarray*}
\mu \Big( X_\le (y,s_y(p,x,y))\Big)&=&\mu \Big(\{(\bar{p},\bar{x})\mid \frac{%
\partial s(p,x,y)}{\partial y} \ge \frac{\partial s(\bar{p},\bar{x},y)}{%
\partial y}\}\Big) \\
&=&\nu ([\frac{1}{2},y]).
\end{eqnarray*}%

\begin{proposition}
\label{P:optimal match} Let $\mu$ and $\nu$ be uniform probability measures
on $[0,1] \times [1/2, 1]$ and $[1/2,1]$, respectively. For each $(p,x) \in
[0,1] \times [\frac{1}{2},1]$ there is a unique $y \in [\frac{1}{2},1]$ that
splits the population proportionately at $(p,x)$, and the optimal map takes
the form ${F}(p,x) =y$.
\end{proposition}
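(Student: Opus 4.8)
The plan is to verify the hypotheses of Corollary \ref{C:unique splitting} (via Theorem \ref{T:non-nested}) so that unique proportionate splitting holds, and then to read off the matching function explicitly. The surplus here is $s(p,x,y) = p(x+y-1)^2/4 + \ldots$ after dropping additive functions of $(p,x)$, so that from \eqref{gen_inc_fert_sol} the iso-husband set of $y$ is $\{(p,x) \mid (x+y-1)p = K(y)\}$, a hyperbola in the $(p,x)$ plane. First I would set up the geometry: compute $s_y(p,x,y) = p(x+y-1)/2 + (1-p)$, check that $D_{(p,x)}s_y = (\,(x+y-1)/2 + \ldots,\ p/2\,)$ is non-vanishing on the relevant rectangle $[0,1]\times[\tfrac12,1]\times[\tfrac12,1]$ except at the single corner $(0,\tfrac12,\tfrac12)$ flagged in the statement, and argue this lone degenerate corner is harmless — it is a measure-zero endpoint, the conclusion of Proposition \ref{P:indifference structure} is only needed $\mu$-a.e., and the transversality remark already recorded below \eqref{gen_inc_fert_sol} shows $Z=\emptyset$ because the iso-husband curves meet the sides of the rectangle transversally and meet the NW/SE corners (with the other two corners going to the endpoints of $Y$).

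Next I would carry out the proportionate-splitting computation. For fixed $y$ and fixed $k = s_y(p,x,y)$, the sublevel set $X_\le(y,k) = \{(\bar p,\bar x) \mid s_y(\bar p,\bar x,y)\le k\}$ is the region of $[0,1]\times[\tfrac12,1]$ below the relevant branch of the hyperbola $(\bar x + y - 1)\bar p = K$; its $\mu$-measure (normalized Lebesgue area) is an elementary integral — a logarithm-plus-linear expression in $K$ and $y$. Setting this equal to $\nu([\tfrac12,y]) = 2(y-\tfrac12)$ gives, for each $(p,x)$, an equation $h(y) = 0$ for the splitting husband type $y$, where $h$ packages the area formula with $K = K(y)$ determined implicitly by requiring the curve to pass through $(p,x)$, i.e. $K = (x+y-1)p$. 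I would then show $h$ is continuous with $h(\tfrac12)\ge 0 \ge h(1)$ (boundary bookkeeping: at $y=\tfrac12$ the right side $\nu([\tfrac12,y])$ vanishes while the left side is nonnegative; at $y=1$ the inequality reverses), so a root exists by the intermediate value theorem, and that $h$ is \emph{strictly} monotone in $y$ — equivalently, by Corollary \ref{C:dynamic nestedness} in the form $h_y = -g(y) - \int_{X(y,k)} f\, s_{yy}/|D_{(p,x)}s_y|\, d\mathcal H^{m-1} < 0$, using $g \equiv 2$ and $s_{yy} = p/2 \ge 0$, which makes both terms contribute with the same sign. Uniqueness of the root follows, and Corollary \ref{C:unique splitting} then gives nestedness together with ${F}(p,x) = y$.

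The main obstacle I anticipate is the bookkeeping at the boundary of the rectangle — specifically, keeping track of which branch of the hyperbola and which portion of $\partial([0,1]\times[\tfrac12,1])$ bounds $X_\le(y,k)$ as $(p,x)$ and $y$ range over their domains, since the area formula changes functional form when the level curve exits through the top edge $\bar x = 1$ versus the right edge $\bar p = 1$, and one must confirm the degenerate corner $(0,\tfrac12,\tfrac12)$ does not spoil continuity of $h$ there. The monotonicity estimate $h_y < 0$ is robust (it is forced by the sign $s_{yy}\ge 0$ together with $g>0$, exactly the favorable case of Corollary \ref{C:dynamic nestedness}), and the existence-of-root step is a routine sign check; so the real work — which I would relegate to the appendix reference \ref{A:Verification of Example 1} rather than grind through here — is the explicit case analysis of the area integral and the verification that the resulting ${F}$ indeed pushes $\mu$ forward to $\nu$, which is automatic once unique proportionate splitting is established since the level sets of ${F}$ are then exactly the curves $X(y,k(y))$ and these partition $X$ up to $\mu$-null sets.
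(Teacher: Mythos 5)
There is a genuine gap at the crux of your argument: the strict monotonicity claim. For fixed $(p,x)$ the function you call $h(y)$ is $y\mapsto \mu\bigl[X_\le\bigl(y,s_y(p,x,y)\bigr)\bigr]-\nu\bigl([\tfrac12,y]\bigr)$, i.e.\ the composition $h(y,k)$ evaluated along the moving level $k=s_y(p,x,y)$. Its derivative is therefore not the partial derivative $h_y$ from \eqref{grad h} but the total derivative $h_y + h_k\,s_{yy}(p,x,y)$, and the extra term is \emph{nonnegative} (since $h_k>0$ and $s_{yy}=p/2\ge 0$), so it fights against the two negative contributions rather than reinforcing them. Equivalently, the dynamic criterion of Corollary \ref{C:dynamic nestedness} requires $k'(y)-s_{yy}>0$ along the splitting curve, and the sign $s_{yy}\ge 0$ only delivers $k'(y)=-h_y/h_k>0$, which is strictly weaker; a convexity of the surplus in $y$ makes nestedness \emph{harder}, not automatic. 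That your "robust" sign argument cannot work is shown decisively by Example 2 of the paper (Section \ref{uniform}): the same surplus with uniform measures on $[0,1]^2$ and $[0,1]$ has $s_{yy}\ge0$ and $g>0$, yet is un-nested and admits several $y$ splitting the population proportionately at certain $(p,x)$. Any proof of uniqueness must therefore use the specific domain $[0,1]\times[\tfrac12,1]\times[\tfrac12,1]$, and this is exactly the work you propose to defer to the appendix.

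The paper's actual proof supplies that domain-specific computation. It parameterizes the proportionately splitting curves $x^y(p)=1-y+K(y)/p$, computes explicitly that curves hitting the bottom edge satisfy $p(y)=1/\ln\bigl(\tfrac{y}{y-1/2}\bigr)$ (so $K(y)=\tfrac{y-1/2}{\ln(y/(y-1/2))}$, both increasing, and this regime is $y\le \tfrac{e}{2(e-1)}$), while curves hitting the right edge satisfy the implicit equation $f(x,y):=x-y+(x+y-1)\ln\bigl(\tfrac{y}{x+y-1}\bigr)=0$, for which $f_x>0$ and $f_y<0$ (the latter using precisely $x\ge\tfrac12$ and $y\ge\tfrac{e}{2(e-1)}$), giving $x'(y)>0$ and $K(y)=x(y)+y-1$ increasing. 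Monotonicity of $K$ plus monotonicity of the boundary intersection points forces the splitting curves for distinct $y$ never to cross inside the rectangle, which is what yields the unique splitting $y$ for each $(p,x)$ before invoking Corollary \ref{C:unique splitting}. Your overall skeleton (IVT for existence, monotonicity for uniqueness, then Corollary \ref{C:unique splitting}) is a legitimate alternative organization, but the monotonicity step is the entire content of the proposition and is not established by the sign considerations you give.
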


Note that, by the proof of this Proposition, we can actually derive an
explicit formula for the optimal map:

\begin{equation*}
{F}(p,x) := \sup_y\Big\{y \mid\mu\Big(X_<(y,s_y(p,x,y)) \Big)>\nu( [\frac{1}{%
2},y))\Big\}.
\end{equation*}

We can go one step further, and identify the iso-husband curve consisting of
the set of points $(p,x)$ which match with a fixed $y$; this is exactly the
potential indifference 
curve \eqref{gen_inc_fert_sol}, for the correct $K(y)$. As is clear from the
proof of Proposition \ref{P:optimal match}, for $y\in \lbrack \frac{1}{2},%
\frac{e}{2(e-1)}]$, we have,

\begin{equation*}
K(y)-\frac{y-1/2}{\ln \left( \frac{y}{y-1/2}\right) }=0.
\end{equation*}


For $y\in \lbrack \frac{e}{2(e-1)},1]$, we have $K(y)=\bar{x}+y-1$, where $%
\bar{x}$ is the unique solution in the interval $[1/2,1]$ of the equation $%
x-y+(x+y-1)\ln (\frac{y}{x+y-1})=0$. A plot of $K\left( y\right) $ is
provided in Figure 1a, while Figure 1b gives the iso husband curves for
various values of $y$. Notice $K^{\prime }(y)=\bar{x}^{\prime }(y)+1$
diverges like $1/\log y$ at the endpoint $y=1$ where the length of the
iso-husband curve shrinks to zero; the payoff $v(y)$ therefore displays a
comparable singularity in its second derivative. 

\ 

\begin{center}
\textbf{Insert Figures 1a and 1b about here}

\ 
\end{center}

\subsubsection{Example 2: uniform, smaller incomes}

\label{uniform}We now consider the same surplus with different measures;
namely, the marginals are uniform on $[0,1]\times \lbrack 0,1]$ and $[0,1]$
respectively. From a mathematical perspective this case provides an
interesting illustration of an un-nested setting. Formally, our method to
derive the optimal match fails in that case; it turns out that there are
more than one $y$ that split the population proportionately for certain
choices $(x,p)$. However, we are able to use a refinement of the argument to
obtain the explicit optimal match. Specifically, we use the symmetry
embedded in the model to show that the optimal map is given by:%
\begin{equation}
G\left( p,x\right) =\left\{ 
\begin{array}{cl}
{F}\left( p,x\right) & \text{ if~}x>1/2, \\ 
1-{F}\left( p,1-x\right) & \text{ if }x<1/2,%
\end{array}%
\right.  \label{symmetricsolution}
\end{equation}%
where ${F}$ is as in the preceding Proposition. This solution displays a
discontinuity along the line $x=\frac{1}{2}$; each wife $(p,\frac{1}{2})$
with income $x=\frac{1}{2}$ is indifferent between two distinct husbands,
one richer ($y=\frac{e^{\frac{1}{p}}}{2(e^{\frac{1}{p}}-1)}$) and the other
poorer ($y=1-\frac{e^{\frac{1}{p}}}{2(e^{\frac{1}{p}}-1)}=\frac{e^{\frac{1}{p%
}}-2}{2(e^{\frac{1}{p}}-1)}$). Although the total surplus generated by the
latter marriage is smaller, her share of it remains the same. The
corresponding isohusband curves are plotted on Figure 2a. We have also
numerically solved this case; the resulting solution agrees with our
theoretical solution and is graphed below (Figure 2b).

\ 

\begin{center}
\textbf{Insert Figures 2a and 2b about here}

\ 
\end{center}

\subsubsection{Example 3: anticorrelated marginals}

In this case, the analytic solution is very complicated, and we provide only
a numerical simulation. For $\epsilon=0$ the numerical solution is plotted
below; one interesting feature is that lower income men (whose iso-husband
curves lie in the green, yellow or orange regions) are indifferent between a
selection of wives in both the low fertility, high income regime, $[0,
\frac12] \times [\frac34,1] $, and the high fertility, low income region $%
[\frac12, 1] \times [\frac12,\frac34+\epsilon]$, whereas the highest income
men, whose iso-husband curves lie in the dark red region, match exclusively
with wives in the high fertility, low income regime. This model does not fit
the hypotheses of Theorem \ref{T:non-nested} when $\epsilon=0$, since $X$ is
neither connected nor Lipschitz, but this shortcoming can be rectified by
taking $\epsilon>0$ arbitarily small.

\ 

\begin{center}
\textbf{Insert Figure 3 about here}

\ 
\end{center}

\subsection{Application 2: A competitive Rochet-Chon\'{e} model}

In our second example, we revisit a seminal model of the literature on
bidimensional adverse selection, due to \cite{rc}. We therefore consider a
hedonic model with:

\begin{itemize}
\item An $n$-dimensional space of products: $z=\left( z_{1},...,z_{n}\right)
\in Z\subset {\mathbf{R}}_{+}^{n}$;

\item An $n$-dimensional space of buyers: $x=\left( x_{1},...,x_{n}\right)
\in X\subset {\mathbf{R}}_{+}^{n}$. Consumers are distributed on this space
according to a probability measure $\mu $. Each consumer will buy exactly
one good. Their utility for purchasing product $z$ for price $P(z)$ is given
by $U\left( x,z\right) -P\left( z\right) $ where 
\begin{equation*}
U\left( x,z\right) =\sum_{i=1}^{n}x_{i}z_{i},
\end{equation*}

\item A one dimensional space of producers: $y\in Y\subset {\mathbf{R}}_{+}$%
. Producers are distributed according to a probability measure $\nu $. Each
producer will produce and sell exactly one good. Their profit in selling a
good $z$ for price $P(z)$ is $P\left( z\right) -c\left( y,z\right) $, where 
\begin{equation*}
c\left( y,z\right) =\frac{1}{2y}\sum_{i=1}^n z_{i}^{2}
\end{equation*}
is the production cost for producer $y$ to produce a good of type $z$.
\end{itemize}

We note that this is exactly the model of \cite{rc}, with the additional
twist that the monopoly producer is replaced with a competitive set of
heterogeneous producers (whose productivity increases in $y$).\footnote{%
Clearly, heterogeneity is not crucial, since the measure on the set $Y$
could be a Dirac. The crucial distinction with Rochet-Chon\'{e} is the
competitive assumption.} As we will see, however, introducing competition
has striking consequences. In particular, and somewhat surprisingly, there
is no bunching in this new variant of their model: consumers of different
types always buy goods of different types. We illustrate these facts with an
example, finding explicitly the optimal matching for certain choices of $\mu 
$ and $\nu $, and then provide a general proof.

\subsubsection{Equilibrium}

As argued above, this hedonic pricing problem is equivalent to the matching
problem on $X\times Y$ associated with the surplus:%
\begin{equation*}
s\left( x,y\right) =\max_{z\in Z}\left( \sum_{i=1}^{n}x_{i}z_{i}-\frac{1}{2y}%
\sum_{i=1}^{n}z_{i}^{2}\right)
\end{equation*}

The maximum in the preceding equation is obtained (assuming $\{yx\mid y \in
Y, x \in X\} \subset Z$) when 
\begin{equation*}
z_{i}=x_{i}y
\end{equation*}%
and so%
\begin{equation*}
s\left( x,y\right) =\frac{1}{2}y\left( \sum_{i=1}^nx_{i}^{2}\right).
\end{equation*}

Now, note that the twist condition is satisfied here; in fact, this surplus
function is of index form, $s(x,y)=S(I(x),y):=\frac{I(x)y}{2}$, where $%
I(x)=|x|^{2}$, and $S$ satisfies the classical Spence-Mirrlees condition $%
\frac{\partial ^{2}S}{\partial I\partial y}>0$. The isoproducer curves take
the form $\sum_{i=1}^{n}x_{i}^{2}=K(y)$, and the matching is monotone
increasing between $|x|$ and $y$.

\subsubsection{An explicit example}

We explicitly solve this problem for a particular choice of measures. Let $%
m=2$ and $\mu$ be uniform measure (normalized to have total mass 1) on the
quarter disk 
\begin{equation*}
\left\{ \left( x_{1},x_{2}\right) \mid x_{1}^{2}+x_{2}^{2}\le 1,x_{1}\geq
0,x_{2}\geq 0\right\}
\end{equation*}
and $\nu$ uniform on $\left[ 1,2\right] $. The stable matching balances the
mass of the quarter disc

\begin{equation*}
\{ x\mid |x|^2 \leq K(y) \}
\end{equation*}
with the $\nu$ mass of $[1,y]$. This implies

\begin{equation*}
\frac{\pi 4K(y)^2 }{4 \pi} =y-1
\end{equation*}
of $K(y) =\sqrt{y-1}$. In other words, the optimal matching takes the form 
\begin{equation*}
{F}(x) = |x|^2 +1.
\end{equation*}
Agent $x$ then buys the product $z$ such that: 
\begin{equation*}
z_{i}=x_{i}\left( \sum_{k=1}^nx_{k}^{2}+1\right) ,\ \ i=1,...,n.
\end{equation*}
The envelope condition implies that agent $x$'s utility satisfies 

\begin{equation*}
\frac{\partial u}{\partial x_{i}}(x)=\frac{\partial s}{\partial x_{i}}\left(
x,{F}\left( x\right) \right)
\end{equation*}%
or, 
\begin{equation*}
\frac{\partial u}{\partial x_{i}}(x)=x_{i}{F}\left( x\right) =x_{i}\left(
1+\sum_{k=1}^{n}x_{k}^{2}\right) .
\end{equation*}%
Similarly, 
\begin{equation*}
v^{\prime }\left( y\right) =\frac{1}{2}\sum_{i=1}^{n}x_{i}^{2}=\frac{y-1}{2}.
\end{equation*}%
Integrating these equations yields 
\begin{eqnarray*}
u\left( x\right) &=&A+\frac{1}{2}\sum_{i=1}^{n}x_{i}^{2}+\frac{1}{4}\left(
\sum_{i=1}^{n}x_{i}^{2}\right) ^{2} \\
v\left( y\right) &=&B+\frac{\left( y-1\right) ^{2}}{4}.
\end{eqnarray*}%
Now, for $y={F}(x)$, we must have $u(x)+v(y)=s(x,y)$. Inserting $y={F}%
(x)=|x|^{2}+1$ into our equation for $v$ yields 
\begin{equation*}
u(x)+v({F}(x))=A+B+\frac{1}{2}|x|^{2}+\frac{1}{4}|x|^{4}+\frac{1}{4}%
|x|^{4}=A+B+\frac{1}{2}|x|^{2}+\frac{1}{2}|x|^{4}.
\end{equation*}%
On the other hand, 
\begin{equation*}
s(x,{F}(x))=\frac{1}{2}(|x|^{2}+1)|x|^{2};
\end{equation*}%
equating these implies $A=-B$. 
In what follows we assume $A=B=0$; the interpretation is that the firm with
the highest production costs ($y=1$) makes zero profit.

We now turn our attention to the equilibrium pricing schedule, $P(z)$.
Following \cite{cmn}, the stability conditions 
\begin{equation*}
u\left( x\right) +v\left( y\right) \geq s\left( x,y\right) \geq U\left(
x,z\right) -c\left( y,z\right)
\end{equation*}%
imply 
\begin{equation*}
\inf_{y}\left( v\left( y\right) +c\left( y,z\right) \right) \geq P\left(
z\right) \geq \sup_{x}\left( U\left( x,z\right) -u\left( x\right) \right)
\end{equation*}%
or 
\begin{equation*}
\inf_{y}\left( \frac{\left( y-1\right) ^{2}}{4}+\frac{1}{2y}%
\sum_{i=1}^{n}z_{i}^{2}\right) \geq P\left( z\right) \geq \sup_{x}\left(
\sum_{i=1}^{n}x_{i}z_{i}-\frac{1}{2}\sum_{i=1}^{n}x_{i}^{2}-\frac{1}{4}%
\left( \sum_{i=1}^{n}x_{i}^{2}\right) ^{2}\right) .
\end{equation*}%
The solution takes the form 
\begin{equation*}
P\left( z\right) =\bar{P}\left( Z\right) ,\text{ }
\end{equation*}%
where $Z=|z|^{2}$. Note that when agent $y$ sells good $z$, we must have 
\begin{equation*}
P(Z)=\frac{\left( y-1\right) ^{2}}{4}+\frac{1}{2y}\sum_{i=1}^{n}z_{i}^{2}=%
\frac{\left( y-1\right) ^{2}}{4}+\frac{1}{2y}Z.
\end{equation*}%
The first order conditions then imply 
\begin{equation*}
\frac{1}{2}\left( y-1\right) =\frac{1}{2y^{2}}Z,
\end{equation*}%
so that $Z=y^{2}\left( y-1\right) $. We can then solve for $P(Z)$ in terms
of $y$; 
\begin{eqnarray*}
P &=&\frac{\left( y-1\right) ^{2}}{4}+\frac{1}{2y}Z \\
&=&\frac{1}{4}\left( 3y-1\right) \left( y-1\right) .
\end{eqnarray*}%
Noting that $Z=|z|^{2}=y^{2}|x|^{2}=y^{2}(y-1)$ then yields a parametric
representation for the curve $(Z,P(Z))$:

\begin{equation*}
(Z,P(Z))=\left( y^{2}\left( y-1\right) ,\frac{1}{4}\left( 3y-1\right) \left(
y-1\right) \right)
\end{equation*}%
This is plotted on Figure 4.

\begin{center}
\textbf{Insert Figure 4 about here}\ 
\end{center}

\ 

Some comments can be made on this solution. First, and unlike the Rochet-Chon%
\'{e} monopoly case, there is no \textit{exclusion}; all agents buy
products. This property is easily understood from the matching formulation;
as every potential match generates a positive surplus, surplus maximization
implies that no agents will be excluded.\footnote{%
Exclusion is linked to the monopoly logic: by excluding some consumers, the
monopolist can increase the rent received from other buyers.} Note, however,
that the introduction of outside options might reverse this conclusion. More
interesting (and more surprising) is the second finding, namely that \emph{%
there is no bunching of any type}: the efficient allocation is fully
separating so that different agents \emph{always} buy different goods. This
no bunching conclusion is less intuitive; it is not clear, a priori, whether
it holds for general choices of the measures $\mu $ and $\nu $, or it is an
artifact of the particular measures we use in this example. We now present a
theorem which states that the former interpretation is correct.


\subsubsection{The general case}

The method for deriving the optimal matching will always work here; as the
surplus has an index form, issues with level sets crossing do not arise. Of
particular interest, the no bunching result is completely general, as the
following result, proved in an appendix, confirms:

\begin{theorem}
\label{nobunching} Different consumer types always buy different goods. That
is, if $x\neq \bar{x}$, and $z$ and $\bar{z}$ represent the respective goods
they choose, then $z\neq \bar{z}$.

\begin{proof}
See Appendix
\end{proof}
\end{theorem}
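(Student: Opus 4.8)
The plan is to reduce the statement to the injectivity of an explicit map, and then exhaust the remaining freedom using the index structure $s(x,y)=\tfrac12 y|x|^{2}$ together with $s$-monotonicity. First I would pin down the good that a consumer actually buys. Since $U(x,z)-c(y,z)=\sum_{i}x_{i}z_{i}-\tfrac{1}{2y}\sum_{i}z_{i}^{2}$ is strictly concave in $z$ for $y>0$, the inner maximum defining the matching surplus is attained at the unique point $z_{0}(x,y)$ with $z_{0,i}(x,y)=x_{i}y$ (feasible, since $Z$ is assumed to contain these bundles). The twist condition holds here (the surplus has index form with $\p^{2}S/\p I\,\p y>0$), so the stable match is pure, $\gamma=(id\times F)_{\#}\mu$, and the equilibrium measure is $\alpha=\left(id_{X}\times id_{Y}\times z_{0}\right)_{\#}\gamma$; hence the good purchased by consumer $x$ is $z(x)=F(x)\,x$. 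Proving the theorem thus amounts to showing that $x\mapsto F(x)\,x$ is injective.

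Next I would record the structure of $F$. As already noted, the index form means the reduced one-dimensional matching between $I_{\#}\mu$ and $\nu$, with $I(x)=|x|^{2}$, is positive assortative, so $F(x)=\phi(|x|^{2})$ for a non-decreasing $\phi$ taking values among the (positive) producer types. Equivalently, and more robustly, I would simply invoke $s$-monotonicity of $\spt\gamma$: any two matched pairs $(x,y),(\bar x,\bar y)\in\spt\gamma$ satisfy
\[
\delta(x,y,\bar x,\bar y)=\tfrac12\,(y-\bar y)\,(|x|^{2}-|\bar x|^{2})\ \ge\ 0 .
\]

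The core step is then a two-line computation. Suppose $x\ne\bar x$ but $z(x)=z(\bar x)$, i.e.\ $F(x)\,x=F(\bar x)\,\bar x$ componentwise; put $y=F(x)$ and $\bar y=F(\bar x)$, both positive. If $x=0$ the equation forces $\bar y\,\bar x=0$, hence $\bar x=0=x$, a contradiction; so $x,\bar x\ne0$ and $x=\lambda\bar x$ with $\lambda:=\bar y/y>0$, whence $|x|^{2}=\lambda^{2}|\bar x|^{2}$ and $y-\bar y=y(1-\lambda)$. Substituting into the $s$-monotonicity inequality,
\[
0\ \le\ \delta(x,y,\bar x,\bar y)=\tfrac12\,y(1-\lambda)(\lambda^{2}-1)\,|\bar x|^{2}=-\tfrac12\,y\,(\lambda-1)^{2}(\lambda+1)\,|\bar x|^{2}\ \le\ 0 ,
\]
so equality holds throughout; since $y>0$, $\lambda+1>0$ and $|\bar x|^{2}>0$, this forces $\lambda=1$, i.e.\ $x=\bar x$, contradicting $x\ne\bar x$. (The same conclusion follows directly from $F(x)=\phi(|x|^{2})$ with $\phi$ positive and non-decreasing: if $\lambda>1$ then $|x|^{2}>|\bar x|^{2}$, so $\phi(|x|^{2})\ge\phi(|\bar x|^{2})$, i.e.\ $\lambda=\bar y/y\le1$, a contradiction, and symmetrically if $\lambda<1$.)

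I do not expect any deep obstacle here: once the purchased good is identified as $F(x)\,x$, and either the monotonicity of $\phi$ or the $s$-monotonicity of $\spt\gamma$ is in hand, everything reduces to the one-line cross-difference evaluation above. The only points that require a little care are the degenerate case $x=0$ (or $\bar x=0$), the precise set on which $F$ — equivalently $\phi$ — is defined, and, if one wants the statement for every consumer type rather than for $\mu$-almost every type (or every type in $\spt\mu$), checking that $F$ extends as the monotone index map to all of $X$; this is immediate from the explicit formula in the worked example and, in general, follows from purity together with the one-dimensional reduced problem.
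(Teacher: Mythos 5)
Your core argument is correct and rests on the same engine as the paper's proof, namely $s$-monotonicity (2-monotonicity) of $\spt\gamma$, but the execution differs slightly. The paper works with the raw hedonic quantities: from $z=yx=\bar y\bar x$ and the inequalities $s(x,\bar y)\ge x\cdot z-\tfrac{1}{2\bar y}|z|^2$, $s(\bar x,y)\ge \bar x\cdot z-\tfrac{1}{2y}|z|^2$, monotonicity forces equality, so $z$ also maximizes the joint surplus of the crossed pair $(\bar x,y)$; uniqueness of that maximizer gives $z=y\bar x=yx$ and one cancels $y$. You instead substitute the proportionality $x=\lambda\bar x$, $\lambda=\bar y/y$, into the explicit cross-difference of the reduced surplus $s(x,y)=\tfrac12 y|x|^2$ and obtain $\delta=-\tfrac12 y(\lambda-1)^2(\lambda+1)|\bar x|^2\le 0$, a clean sign contradiction; your handling of the degenerate case $x=0$ is also fine. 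The one point to fix is your opening reduction: you route the statement through the twist condition, purity and the matching function $F$ (and, in the fallback, through a monotone index map $\phi$), which tacitly assumes $\mu$ absolutely continuous (and $I_\#\mu$ atomless); but the theorem is stated with \emph{no} restrictions on $\mu$ and $\nu$ — indeed the paper emphasizes the case where $\nu$ is a Dirac mass — and $F$ need not exist in that generality. Fortunately your own "more robust" variant already repairs this: phrase the argument for arbitrary matched pairs $(x,y),(\bar x,\bar y)\in\spt\gamma$ with traded goods $z=yx$, $\bar z=\bar y\bar x$ (the unique maximizers of the joint surplus, exactly as in the definition of the hedonic equilibrium measure $\alpha$), and the cross-difference computation then covers the general case without ever invoking purity.
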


It can be stressed that there are no restrictions on the measures in this
theorem. In particular, it applies when $\mu $ is uniform measure on the
unit square in ${\mathbf{R}}^{2}$, and $\nu $ is a Dirac mass at $y=1$;
these conditions provide the closest analogue to the Rochet-Chon\'{e}
problem. The difference is that here we have a continuum of homogeneous
producers (each with identical production cost $\frac{|z|^{2}}{2}$ for good $%
z\in {\mathbf{R}}^{2}$) competing with each other, rather than a monopoly.
In other words, the bunching phenomena present in the Rochet-Chon\'{e}
solution are not intrinsically linked to the multidimensionality of the
consumer type space; rather, they are due to market inefficiencies created
by the monopoly.

Finally, it is interesting to note that the framework under consideration
can always be seen as a model of competition under adverse selection; in
that sense, the matching approach provides a natural definition of an
equilibrium in such a framework. A crucial remark, however, is that the
model is characterized by its \emph{private value }nature, since the
producer's profit is not directly related to the identity of the consumer
buying its product (it only depends on the characteristics of the product
and its price). This is in sharp contrast with a common value setting, in
which the buyer's characteristics directly impact the producer's profit.
Think, for instance, of an insurance model a la \cite{RothschildStiglitz},
in which the same insurance contract generates different profit depending on
the buyer's unobserved characteristics (in that case her risk). Technically,
the producer's cost function $c$ now depends on $y$ and $z$ as before, but
also on\emph{\ }$x$. The key point is that, in such a common value context,
the equivalence between matching and hedonic models is lost. In particular,
it is not true in general that there always exists a price function $P\left(
z\right) $ such that a stable matching can be implemented as an hedonic
equilibrium. Our approach, therefore, does construct a first bridge between
the literatures on matching and optimal transportation on the one hand, and 
\emph{competition} under asymmetric information on the other hand; however,
the relationship is specific to private value models. In the non-competitve
setting of the principal-agent framework, the analogous connection found in 
\cite{Carlier01} and \cite{FigalliKimMcCann11} has proven exteremely
fruitful.

\subsection{Application 3: a simple hedonic model}

As a last illustration, we consider a simple hedonic model, in which
consumers have heterogeneous tastes for differentiated products, which are
produced by firms with heterogeneous cost functions. The basic framework is
similar to the previous, Rochet-Chon\'{e} one\footnote{%
In particular, we keep utilities of the form $\sum_{i}x_{i}z_{i}$, where $%
x=\left( x_{1},...,x_{n}\right) $ denotres an agent's idiosyncratic
valuations of product characteristics. For empirical applications, $x$ will
typically be considered as a random vector.}, except for one feature: the
cost function is now:%
\begin{equation*}
c\left( y,z\right) =\sum_{i=1}^{n}\frac{z_{i}^{2}}{2y_{i}}
\end{equation*}%
where the vector $y=\left( y_{1},...,y_{n}\right) $ is producer-specific.
The multidimensionality of $y$ reflects the fact that different producers
made have different comparative advantages in producing \emph{some} of the
characteristics but not others; for instance, a producer may be good at
producing fast cars but less efficient for smaller ones.\footnote{%
We maintain here the assumption that each firm produces one good. This can
readily be relaxed provided that production functions are linear in \emph{%
quantity} produced (while quadratic in characteristics $z$).} Note that,
unlike most of the empirical IO literature (starting with the seminal
contributions by \cite{BerryLevinsohnPakes95} and \cite%
{BerryLevinsohnPakes04} ), which assume imperfect competition, our hedonic
framework posits that producers are price takers. In that sense, our
matching framework can be seen as an alternative approach to modeling
competition in differentiated products.

The characterization of the equilibrium follows the same path as before. The
surplus function is:%
\begin{equation*}
s\left( x,y\right) =\max_{z\in Z}\sum_{i=1}^{n}\left( x_{i}z_{i}-\frac{%
z_{i}^{2}}{2y_{i}}\right).
\end{equation*}%
Here, the maximum is obtained when $z_{i}=x_{i}y_{i}$, leading to:%
\begin{equation*}
s\left( x,y\right) =\frac{1}{2}\sum_{i=1}^{n}x_{i}^{2}y_{i}
\end{equation*}

Since $s$ is continuous, existence follows from Theorem \ref{Xist}.
Moreover, this form is a particular case of the general structure studied in
subsection 2.4. It follows that the twist condition is satisfied whenever $%
x\in {\mathbf{R}}_{+}^{n}$, an assumption we maintain throughout the current
subsection. This guarantees uniqueness and purity:\footnote{%
In fact, the restriction of $x$ to the positive orthant is not essential,
since the twist condition holds for all $x$ in the complement of the
coordinate hyperplanes $x_i=0$. The latter form a set of measure zero which it is always
possible to omit, as in \cite{cmn}.} there exists a function ${F}$ such that 
$x$ is matched with $y={F}\left( x\right) $. The simplicity of these
existence and uniqueness arguments is indeed an important advantage of the
matching approach.

As mentioned above, purity implies that different agents are matched with
different producers. Moreover, since $z_{i}=x_{i}y_{i}$, a result by
Lindenlaub implies that different agents always buy different products; in
other words, there is no bunching in this model. Clearly, these conclusions
follows from the assumption that the dimension of heterogeneity is the same
for individuals and firms. If this assumption is relaxed (for instance by
assuming that $n<m$), then a continuum of different buyers purchase from the
same producer, and the market shares can be analyzed using the results in
Section 3.

The specific form of function ${F} $ depends on the distributions of
individual and firm characteristics. As previously, we illustrate our
approach by explicitly solving this problem for a particular choice of
measures. Let $\mu $ be the uniform measure (normalized to have total mass
1) on the disk $\sum_{i=1}^{n}\left( x_{i}-a_{i}\right) ^{2}\leq 1$ and let $%
\nu $ be the uniform measure (again normalized to have total mass 1) on the
disk $\sum_{i=1}^{n}\left( y_{i}-b_{i}\right) ^{2}\leq 1$ with $%
a_{i},b_{i}>1,i=1,...,n$. Then the support of the stable measure is born by
the function ${F} =\left( {F} _{1},...,{F} _{n}\right) $:%
\begin{equation*}
y_{i}={F} _{i}\left( x\right) =x_{i}-a_{i}+b_{i}
\end{equation*}%
Agent $x$ then buys the product $z$ such that: 
\begin{equation*}
z_{i}=x_{i}\left( x_{i}-a_{i}+b_{i}\right)
\end{equation*}%
and firm $y$ produces the product $z$ such that: 
\begin{equation*}
z_{i}=\left( a_{i}-b_{i}+y_{i}\right) y_{i}
\end{equation*}

Utilities are derived as before:

\begin{eqnarray*}
\frac{\partial u}{\partial x_{i}}(x) &=&\frac{\partial s}{\partial x_{i}}%
\left( x,{F} \left( x\right) \right) \\
&=&x_{i}\left( x_{i}-a_{i}+b_{i}\right)
\end{eqnarray*}%
therefore 
\begin{equation*}
u\left( x\right) =K+\sum_{i=1}^{n}\left( \frac{x_{i}^{3}}{3}-\frac{%
a_{i}-b_{i}}{2}x_{i}^{2}\right)
\end{equation*}%
and similarly%
\begin{equation*}
v\left( y\right) =K^{\prime }+\frac{1}{6}\sum_{i=1}^{n}\left(
a_{i}-b_{i}+y_{i}\right) ^{3}
\end{equation*}%
with $K+K^{\prime }=0.$

The hedonic price satisfies: 
\begin{equation*}
\inf_{y}\left( v\left( y\right) +c\left( y,z\right) \right) \geq P\left(
z\right) \geq \sup_{x}\left( U\left( x,z\right) -u\left( x\right) \right)
\end{equation*}%
which gives: 
\begin{equation*}
P\left( z\right) =-K+\frac{1}{12}\sum_{i=1}^{n}\left[ \left( \left(
a_{i}-b_{i}\right) ^{2}+4z_{i}\right) ^{\frac{3}{2}}+\left(
a_{i}-b_{i}\right) \left( \left( a_{i}-b_{i}\right) ^{2}+6z_{i}\right) %
\right].
\end{equation*}

As usual in a quasi-linear framework (i.e. in the absence of income
effects), the price schedule is defined up to an additive constant $K$; the
latter can be pinned down, for instance, by imposing conditions on the
profit of the least productive firm.

\section{Conclusion}

This paper provides a general characterization of multidimensional matching
models, in terms of existence, uniqueness and qualitative properties of
stable matches. We show that recent developments in the literature on
multidimensional optimal transport can be exploited to derive conditions
under which stable matches are unique and pure, and to understand their
local geometry. We consider a first application of such models to a
competitive version of a standard, hedonic model in which heterogeneous
consumers purchase differentiated commodities from heterogeneous producers.

Of specific interest are situations in which the dimensions of heterogeneity
on the two sides of the market are unequal. We explore the topology of the
`indifference sets' that arise in this setting, and provide conditions under
which they can be expected to be smooth manifolds of dimension $m-n$. In
particular, we investigate the set of `multi-to-one dimensional matching
problems', and we introduce a nestedness criterion under which the
equilibrium match can be found more or less explicitly. Lastly, we stress
the deep relationships that exist between the multi-to-one dimensional
framework and models of competition under multidimensional asymmetric
information, at least in the case of private values. Considering a
competitive variant of the seminal, Rochet-Chon\'{e} (1998) problem, in
which goods can be produced by a $1$-dimensional, heterogeneous distribution
of producers, rather than a single monopolist, we provide a full
characterization of the resulting equilibrium price schedule. In particular,
we show that in our competitive framework, and in contrast to the original
monopolist setting, there is never bunching; that is, consumers of different
types \textit{always} buy goods of different types.


\begin{appendices}


\section{Verification of Examples 1 and 2}
\label{A:Verification of Example 1}

We begin with Proposition \ref{P:optimal match}. 

\begin{proof}
We verify that for each $(p,x) \in (0,1) \times (1/2,1)$, there is a unique $y \in (1/2,1)$ splitting the population proportionally at $(p,x)$; the result then follows from Corollary \ref{C:unique splitting}.

The proportionally splitting level curves $X(y,k(y))$ take the form $x^y(p)=1-y+\frac{K(y)}{p}$, where $K(y)$ is chosen to satisfy the population splitting condition.   Consider now the way that the curves $X(y,k(y))$  intersect the boundary; as each $p \mapsto x^y(p)$ is continuous on $(0,1)$ and monotone decreasing (note that $K(y) =(x+y-1)p \geq 0$), it  intersects either  $[0,1] \times \{1/2\}$ in a unique point $(p,x) = (p(y),1/2)$ or   $\{1\} \times [1/2,1]$ in a unique point $(p,x) = (1,x(y))$.
We will prove the following two properties:

\begin{enumerate}
\item $K(y)$ is monotone increasing in $y$.
\item The boundary intersection points have a certain monotonicity property.  Precisely, for each $y_0<y_1$, one of the following occurs:

\begin{enumerate}
\item $X(y_0,k(y_0))$ intersects $[0,1] \times \{1/2\}$ and $X(y_1,k(y_1))$ intersects $\{1\} \times [1/2,1]$. 
\item $X(y_0,k(y_0))$ and $X(y_1,k(y_1))$ both intersect $[0,1] \times \{1/2\}$, and $p(y_0)<p(y_1)$.
\item $X(y_0,k(y_0))$ and $X(y_1,k(y_1))$ both intersect $\{1\} \times [1/2,1]$, and $x(y_0) <x(y_1)$.
\end{enumerate}
\end{enumerate}

These two facts will imply the desired result as follows: 1)  will imply that $x^{y_1}(p) -x^{y_0}(p)$ is decreasing in $p$, for fixed $y_0 <y_1$, as the derivative of this function is $x^{y_1}(p) -x^{y_0}(p) =\frac{K(y_0)-K(y_1)}{p^2} <0$.  This means that if two of the population splitting curves intersect, (that is $x^{y_1}(\bar p) -x^{y_0}(\bar p)=0$) within the given domain, then $x^{y_1}(p) <x^{y_0}(p)$ for all $p > \bar p$.  This in turn implies that
the boundary intersection points satisfy one of the following:

\begin{enumerate}[a)]
\item $X(y_1,k(y_1))$ intersects $[0,1] \times \{1/2\}$ and $X(y_0,k(y_0))$ intersects $\{1\} \times [1/2,1]$. 
\item $X(y_0,k(y_0))$ and $X(y_1,k(y_1))$ both intersect $[0,1] \times \{1/2\}$, and $p(y_0)>p(y_1)$.
\item $X(y_0,k(y_0))$ and $X(y_1,k(y_1))$ both intersect $\{1\} \times [1/2,1]$, and $x(y_0) >x(y_1)$.
\end{enumerate}
This clearly contradicts point 2) above and so establishes the desired result.

To complete the proof, then, it remains only to verify points 1) and 2).  We first consider population splitting curves which pass through $[0,1] \times \{1/2\}$.  In this case, the proportional splitting condition is given by

\begin{eqnarray*}
y - 0.5 &=& \int_{0.5}^1 \frac{p(y)(\bar x +y -1)}{(x+y-1)}dx 
\\ &=& (\bar x +y -1)p(y)[\ln(1+y-1)-\ln(0.5+y-1)] ]
\\ &=& (y -0.5) p(y) \ln(\frac{y}{y-0.5}). 
\end{eqnarray*}
This means that $p(y) = \frac{1}{ \ln(\frac{y}{y-0.5})}$.  A simple calculation shows that this function is increasing in $y$.  The function can also be inverted to obtain $y(p) =\frac{e^{\frac{1}{ p}}}{2(e^{\frac{1}{ p}}-1)}$; this tells us that the population splitting level curves pass through $[0,1] \times \{1/2\}$ precisely for $y\leq y(1) = \frac{e}{2(e-1)}$.  For $y$ in this region, the monotonicity of $p(y)$ implies that 2) b) holds, and we note that $K(y) =\frac{y-1/2}{\ln(\frac{y}{y-1/2})}$, which is also montone increasing, verifying 1) in this region.

Therefore, for $y \geq \frac{e}{2(e-1)}$, the curve $X(y,k(y))$ intersects $\{1\} \times [1/2,1]$.  This confirms part 2) a); to complete the proof,  it remains only to show that $x(y)$ and $K(y)$ are monotone increasing on $[\frac{e}{2(e-1)},1]$.  In this region, the proportional splitting condition gives us:

\begin{eqnarray*}
1-y &=& \int_{ x(y)}^1 1-\frac{1(x(y) +y -1)}{(x+y-1)}dx 
\\ &=& 1- x (y)-(x(y) +y -1)[\ln(1+y-1)-\ln( x(y)+y-1) 
\end{eqnarray*}
or, equivalently, $x(y)$ is the unique solution in $[0.5,1]$ of 
\begin{equation}
0 =f(x,y):= x-y +(x +y -1)\ln(\frac{y}{x+y-1}).
\end{equation}
  Differentiating implicitly, we have
\begin{equation}\label{implicit_differentiation}
x'(y) =-\frac{f_y}{f_x}.
\end{equation}

We have, for $x<1$
$$
f_x = 1+\ln(\frac{y}{ x +y-1})-1=\ln(\frac{y}{ x +y-1})>0
$$
(as, for $x<1$, $\frac{y}{ x +y-1}>1$).

We now show $f_y <0$.   We have

\begin{equation}
f_y=-2+\ln(\frac{y}{ x +y -1})+\frac{ x +y -1}{y}.\label{yderivative}
\end{equation}

Now, as $ x  \leq 1$, the last term  satisfies
\begin{equation}\label{lastterm}
\frac{ x +y -1}{y} \leq 1
\end{equation}
 (with equality only when $ x =1$), and the second to last term is decreasing in $y$, so it is less than its value at  $y=\frac{e}{2(e-1)}$:
\begin{eqnarray}
\ln(\frac{y}{ x +y -1})& \leq&  \ln(\frac{e}{2(e-1)})-\ln( \frac{e}{2(e-1)}+ x-1)\\
&\leq &\ln(\frac{e}{2(e-1)})-\ln( \frac{e}{2(e-1)}-\frac{1}{2})\\
& = & 
1,\label{secondterm}
\end{eqnarray}
where the second inequality follows, as  $ x \geq \frac{1}{2}$.  Note that we have equality above only if $y = \frac{e}{2(e-1)}$ and $ x=\frac{1}{2}$.

Now note that one of the equalities \eqref{lastterm} or \eqref{secondterm} is always strict (as either $ x <1$ or $x > \frac{1}{2}$).  Therefore, the derivative \eqref{yderivative} is negative on the relevant range.

As $f_y <0$ and $f_x>0$, \eqref{implicit_differentiation} tells us $x'(y)>0$.  Finally, we have $K(y) =x(y)+y-1$, and so the monotonicity of $x(y)$ tells us $K(y)$ is also strictly increasing on this region.
\end{proof}

Finally, we turn our attention to uniform measures on $[0,1]$ and  $[0,1]^2$, and the proof of formula \eqref{symmetricsolution} in Section \ref{uniform}.  The proof requires the following Lemma.
\begin{lemma}\label{splitting}
Assume $\mu$ is uniform on  $[0,1]^2$ and $\nu$ is uniform on $[0,1]$. If $(p,x,y)$ is in the support of the optimal matching $\gamma$, and $p>0$ then either $x \geq \frac{1}{2}$ and $y \geq \frac{1}{2}$ or $x \leq \frac{1}{2}$ and $y \leq \frac{1}{2}$.
\end{lemma}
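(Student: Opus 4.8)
The plan is to exploit a reflection symmetry of the data $(s,\mu,\nu)$. Define $R\colon[0,1]^2\to[0,1]^2$ by $R(p,x)=(p,1-x)$ and $\rho\colon[0,1]\to[0,1]$ by $\rho(y)=1-y$, and put $\Phi:=R\times\rho$. Since $R$ preserves the uniform measure $\mu$ on $[0,1]^2$ and $\rho$ preserves the uniform measure $\nu$ on $[0,1]$, the map $\gamma\mapsto\Phi_\#\gamma$ is a measure-preserving involution of $\Gamma(\mu,\nu)$. A direct computation with $s(p,x,y)=p(x+y+1)^2/4+(1-p)(x+y)$ shows
\[
s\bigl(R(p,x),\rho(y)\bigr)=s(p,1-x,1-y)=s(p,x,y)+(2-2x)-2y,
\]
so $\bar s:=s\circ\Phi$ differs from $s$ only by an additive function of the wife's type $(p,x)$ and an additive function of the husband's type $y$.

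First I would use this to show that $\gamma':=\Phi_\#\gamma$ is again an optimizer of \eqref{MK} for $s$. By the corollary stating that stable matchings are unchanged when the surplus is modified by additive functions of $x$ and of $y$, together with Theorem~\ref{equivalence}, a plan solves \eqref{MK} for $s$ if and only if it solves \eqref{MK} for $\bar s$. Moreover $s[\Phi_\#\eta]=\int s\circ\Phi\,d\eta=\bar s[\eta]$ for every $\eta\in\Gamma(\mu,\nu)$, and $\Phi_\#$ is a bijection of $\Gamma(\mu,\nu)$; hence $\Phi_\#$ sends \eqref{MK}-maximizers of $\bar s$ to \eqref{MK}-maximizers of $s$, so $\gamma'$ is optimal for $s$. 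Convexity of \eqref{MK} then shows $\gamma_{1/2}:=(\gamma+\gamma')/2$ is optimal, and therefore $\spt\gamma_{1/2}=\spt\gamma\cup\Phi(\spt\gamma)$ is $s$-monotone, i.e.\ the cross difference $\delta$ is non-negative on $\bigl(\spt\gamma_{1/2}\bigr)^2$.

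To conclude, fix $(p,x,y)\in\spt\gamma$ with $p>0$; then $\Phi(p,x,y)=(p,1-x,1-y)\in\spt\gamma'$, and applying $s$-monotonicity to this pair of points of $\spt\gamma_{1/2}$ gives
\[
0\le \delta\bigl((p,x),y,(p,1-x),1-y\bigr)=s(p,x,y)+s(p,1-x,1-y)-s(p,x,1-y)-s(p,1-x,y).
\]
Substituting the surplus and simplifying (with $u=x+y$ and $w=x-y$ the right-hand side collapses to $\tfrac p2\bigl[(u-1)^2-w^2\bigr]=\tfrac p2(2x-1)(2y-1)$) reduces the inequality to $\tfrac p2(2x-1)(2y-1)\ge 0$. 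Since $p>0$ this forces $(2x-1)(2y-1)\ge 0$, which is precisely the assertion that either $x\ge\tfrac12$ and $y\ge\tfrac12$, or $x\le\tfrac12$ and $y\le\tfrac12$.

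The crux is spotting the reflection symmetry; after that the proof is essentially bookkeeping, the only things to verify being the additive splitting of $s\circ\Phi-s$ (so that the identification corollary applies) and the algebraic collapse of the cross difference to $\tfrac p2(2x-1)(2y-1)$. I expect no real obstacle beyond those two elementary computations, and I note that the argument invokes only pairwise $s$-monotonicity of the support of an optimal plan --- exactly the non-negativity of $\delta$ recorded earlier in the paper --- so it does not require knowing the explicit form of $\gamma$ or of $F$.
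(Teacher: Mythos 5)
Your proof is correct and follows essentially the same route as the paper's: exploit the reflection $(p,x,y)\mapsto(p,1-x,1-y)$, which preserves $\mu$ and $\nu$ and changes $s$ only by additively separable terms, then apply $s$-monotonicity to a point of the support and its reflection, the cross difference collapsing to $\tfrac{p}{2}(2x-1)(2y-1)\ge 0$. The only (minor, and in fact slightly more careful) difference is that the paper invokes uniqueness of the stable match to conclude its support is reflection-symmetric, whereas you average $\gamma$ with $\Phi_{\#}\gamma$ to obtain an optimal plan whose support contains both points, thereby avoiding any appeal to uniqueness.
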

\begin{proof}
Note that $s(p,1-x,1-y) -s(p,x,y)=2-x-y$ exhibits no interaction terms.  As the measures are symmetric under the transfomation $(p,x) \rightarrow(p, 1-x)$ and $y \rightarrow 1-y$, the (unique) stable matching will be symmetric under the transformation $(p,x,y) \rightarrow (p,1-x,1-y)$.  

Now, let $(p,x,y)$ belong to the support of $\gamma$; then, by invariance, we must also have $(p,1-x,1-y)$ in the support of $\gamma$.  Applying $2$ $s$ monotonicity to these points yields

\begin{equation*}
s(p,x,y) + s(p,1-x,1-y) \geq s(p,1-x,y) + s(p,x,1-y).
\end{equation*}
Now note that for fixed $p$, the function $(x,y) \mapsto s(x,y,p)$ satisfies the Spence-Mirrlees condition, $\frac{\partial ^2s}{\partial x \partial y} =\frac{p}{2} >0$.  This implies 

\begin{equation*}
\big(x-(1-x)\big)\big(y-(1-y)\big) =(2x-1)(2y-1)\geq 0
\end{equation*}
which is equivalent to the stated result.
\end{proof}

We now prove formula \eqref{symmetricsolution} in Section \ref{uniform}.

\begin{proof}
By the preceding Lemma, the optimal map maps the region $[0,1] \times [\frac{1}{2},1]$ to $[\frac{1}{2},1]$ and  $[0,1] \times [0,\frac{1}{2}]$ to $[0,\frac{1}{2}]$.  The map in the first region then must be given by $G(p,x) =f(p,x)$, and the mapping in the second region is $G(p,x) =1-f(p,1-x)$ by symmetry.
\end{proof}

\section{Proof of no bunching}
\label{A:Proof of no bunching}

Here we prove the no bunching result, Theorem \ref{nobunching}, for our competitive variant of
Rochet and Chon\'{e}'s screening model.

\begin{proof}
We prove that $x \neq \bar x$ always buy different goods.  Let $y$ and $\bar y$ be the buyers they match with, respectively, and assume that $x$ and $\bar x$ both buy the good $z$ (from $y$ and $\bar y$ respectively).   We will show this implies $x =\bar x$.

We have that $z =y x =\bar y \bar x$.

As $(x,y)$ and $(\bar x,\bar y)$ belong to the support of the optimal measure $\gamma$, they satisfy the $2$-monotonicity condition:

$$
s(x,y) +s(\bar x, \bar y) \geq s(x,\bar y) +s(\bar x,  y) 
$$

Now, we have that  $s(x,y)=x \cdot z -\frac{1}{2y}|z|^2$, $s(\bar x,\bar y)=\bar x \cdot  z -\frac{1}{2\bar y}| z|^2$ and $s(x,\bar y) \geq x \cdot z -\frac{1}{2\bar y}|z|^2$, $s(\bar x, y) \ge \bar x \cdot  z -\frac{1}{2 y}| z|^2$.  Substituting into the above equation yields:

$$
x \cdot z -\frac{1}{2y}|z|^2+ \bar x \cdot  z -\frac{1}{2\bar y}| z|^2\geq s(x,\bar y) +s(\bar x,  y) \geq  x \cdot z -\frac{1}{2\bar y}|z|^2+\bar x \cdot  z -\frac{1}{2 y}| z|^2
$$

The right and left hand side are identical, so we must have equality throughout.  In particular, we have that
$$ 
s(\bar x, y) = \bar x \cdot z -\frac{1}{2 y}|z|^2.
$$
This implies $z=y \bar x$ (as $z$ maximizes the joint surplus for $y$ and $\bar x$).  But we also have $z =y x$ from above.  Therefore

$$
y \bar x = y x.
$$

Canceling the $y$ gives the desired result.
\end{proof}
\section{Figures}

    \includepdf[pages={1,2,3,4,5,6}]{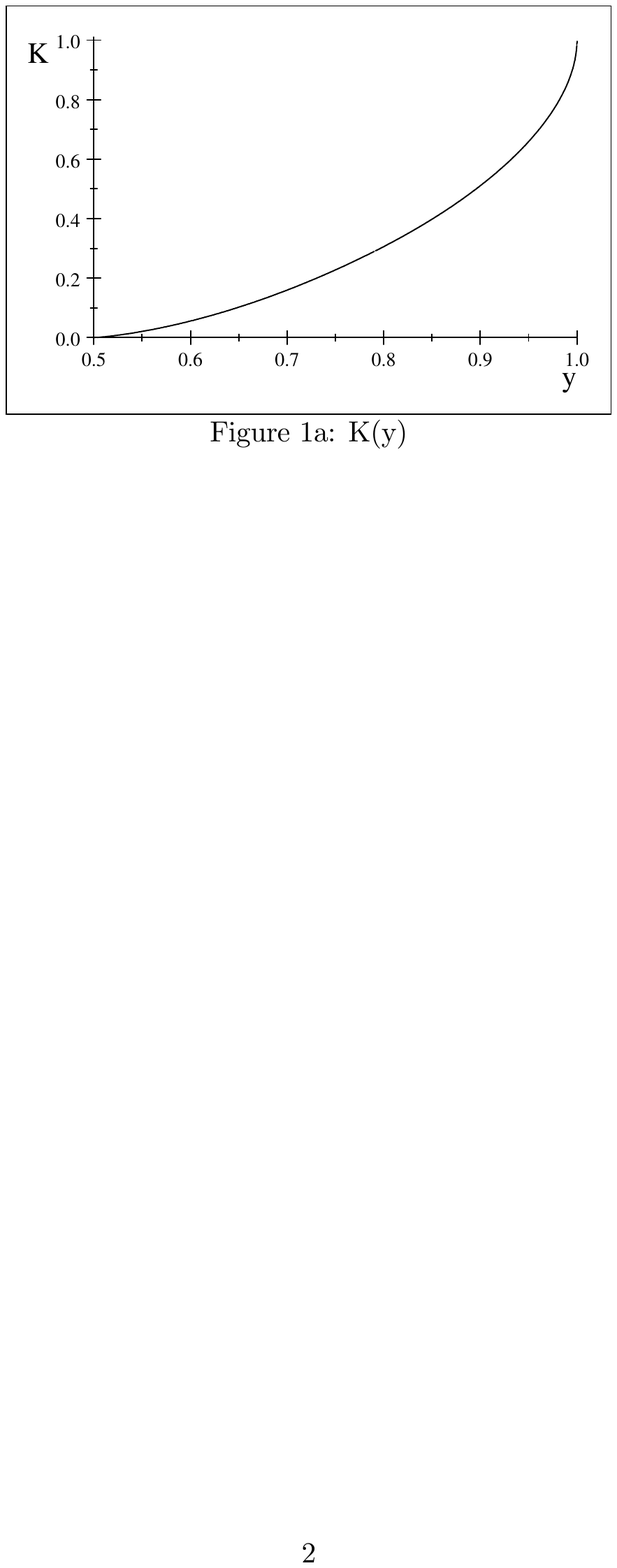}

\end{appendices}

\end{document}